\newtheorem{theorem}{Theorem}[section]
\newtheorem{lemma}[theorem]{Lemma}
\newtheorem{claim}[theorem]{Claim}
\newtheorem{corollary}[theorem]{Corollary}
\newtheorem{observation}[theorem]{Observation}
\theoremstyle{definition}
\newtheorem{definition}[theorem]{Definition}
\newtheorem{example}[theorem]{Example}
\theoremstyle{remark}
\newtheorem*{remark}{Remark}
\newtheoremstyle{repeatdefinition}{\topsep}{\topsep}{}{}{\bfseries}{.}{ }{\thmname{#1}\thmnote{ \bfseries #3}}
\theoremstyle{repeatdefinition}
\newtheorem{examplerep}{Example}
\newcommand{\set}[1]{\left\{ #1 \right\}}
\DeclareMathOperator{\black}{black}
\DeclareMathOperator{\white}{white}
\DeclareMathOperator{\indegree}{indegree}
\DeclareMathOperator{\outdegree}{outdegree}
\DeclareMathOperator{\poly}{poly}
\DeclareMathOperator{\lowdegree}{low-degree}
\DeclareMathOperator{\longpaths}{long-paths}
\DeclareMathOperator{\RC}{RC}
\DeclareMathOperator{\RCP}{RCP}
\newcommand{\type}[1]{\ensuremath{\color{black!70}\bm{\mathsf{#1}}}}
\newcommand{\family}[1]{\ensuremath{\mathsf{#1}}}
\newcommand{\tI}{\type{I}}
\newcommand{\tII}{\type{II}}
\newcommand{\tIII}{\type{III}}
\newcommand{\tIV}{\type{IV}}
\newcommand{\tV}{\type{V}}
\newcommand{\tVI}{\type{VI}}
\newcommand{\tVII}{\type{VII}}
\newcommand{\fIa}{\family{I.a}}
\newcommand{\fIb}{\family{I.b}}
\newcommand{\fIc}{\family{I.c}}
\newcommand{\fId}{\family{I.d}}
\newcommand{\fIIa}{\family{II.a}}
\newcommand{\fIIb}{\family{II.b}}
\newcommand{\fIIIa}{\family{III.a}}
\newcommand{\fIIIb}{\family{III.b}}
\newcommand{\fIVa}{\family{IV.a}}
\newcommand{\fIVb}{\family{IV.b}}
\newcommand{\fVa}{\family{V.a}}
\newcommand{\fVb}{\family{V.b}}
\newcommand{\fVIa}{\family{VI.a}}
\newcommand{\fVIb}{\family{VI.b}}
\newcommand{\fVIIa}{\family{VII.a}}
\newcommand{\cA}{\mathcal{A}}
\newcommand{\cF}{\mathcal{F}}
\newcommand{\fdso}{\ensuremath{\mathsf{FDSO}}}
\newcommand{\sfdso}{\ensuremath{\mathsf{FDSO}(s)}}
\newcommand{\bA}{\mathbf{A}}
\newcommand{\bH}{\mathbf{H}}
\newcommand{\bT}{\mathbf{T}}
\newcommand{\bX}{\mathbf{X}}
\newcommand{\bY}{\mathbf{Y}}
\newcommand{\bZ}{\mathbf{Z}}
\newcommand{\rawp}{1}
\newcommand{\rawn}{0}
\newcommand{\rawx}{\makebox[\widthof{0}][c]{$\ast$}}
\newcommand{\pcol}{\color{green!50!black}}
\newcommand{\ncol}{\color{red!50!black}}
\newcommand{\oneormore}{\makebox[\widthof{0}]{\hspace{-1.5pt}$\scriptscriptstyle\boldsymbol+$}}
\newcommand{\n}{{\ncol\rawn}}
\newcommand{\p}{{\pcol\rawp}}
\newcommand{\x}{\rawx}
\newcommand{\nnn}{{\ncol\rawn\raisebox{5.5pt}[0pt]{\oneormore}}}
\newcommand{\ppp}{{\pcol\rawp\raisebox{4.5pt}[0pt]{\oneormore}}}
\newcommand{\xxx}{{\rawx\raisebox{4pt}[0pt]{\oneormore}}}
\newcommand{\myemail}[1]{\,$\cdot$\, {\small #1}}
\newcommand{\myaff}[1]{\,$\cdot$\, {\small #1}\par\smallskip}
\newenvironment{myabstract}
{\list{}{\listparindent 1.5em%
        \itemindent    \listparindent
        \leftmargin    1cm
        \rightmargin   1cm
        \parsep        0pt}%
    \item\relax}
{\endlist}
\newenvironment{mycover}
{\list{}{\listparindent 0pt
        \itemindent    \listparindent
        \leftmargin    1cm
        \rightmargin   1cm
        \parsep        0pt}%
    \raggedright
    \item\relax}
{\endlist}
\begin{document}

\begin{mycover}
{\huge\bfseries\boldmath Classification of distributed \\ binary labeling problems \par}
\bigskip
\bigskip
\bigskip

\textbf{Alkida Balliu}
\myemail{alkida.balliu@aalto.fi}
\myaff{Aalto University}

\textbf{Sebastian Brandt}
\myemail{brandts@ethz.ch}
\myaff{ETH Zurich}

\textbf{Yuval Efron}
\myemail{szxrtde@cs.technion.ac.il}
\myaff{Technion}

\textbf{Juho Hirvonen}
\myemail{juho.hirvonen@aalto.fi}
\myaff{Aalto University}

\textbf{Yannic Maus}
\myemail{yannic.maus@campus.technion.ac.il}
\myaff{Technion}

\textbf{Dennis Olivetti}
\myemail{dennis.olivetti@aalto.fi}
\myaff{Aalto University}

\textbf{Jukka Suomela}
\myemail{jukka.suomela@aalto.fi}
\myaff{Aalto University}
\end{mycover}
\bigskip

\begin{myabstract}
\noindent\textbf{Abstract.}
We present a complete classification of the deterministic distributed time complexity for a family of graph problems: \emph{binary labeling problems} in trees. These are locally checkable problems that can be encoded with an alphabet of size two in the edge labeling formalism. Examples of binary labeling problems include sinkless orientation, sinkless and sourceless orientation, 2-vertex coloring, perfect matching, and the task of coloring edges red and blue such that all nodes are incident to at least one red and at least one blue edge. More generally, we can encode e.g.\ any cardinality constraints on indegrees and outdegrees.

We study the deterministic time complexity of solving a given binary labeling problem in trees, in the usual LOCAL model of distributed computing. We show that the complexity of any such problem is in one of the following classes: $O(1)$, $\Theta(\log n)$, $\Theta(n)$, or unsolvable. In particular, a problem that can be represented in the binary labeling formalism cannot have time complexity $\Theta(\log^* n)$, and hence we know that e.g.\ any encoding of maximal matchings has to use at least three labels (which is tight).

Furthermore, given the description of any binary labeling problem, we can easily determine in which of the four classes it is and what is an asymptotically optimal algorithm for solving it. Hence the distributed time complexity of binary labeling problems is \emph{decidable}, not only in principle, but also \emph{in practice}: there is a simple and efficient algorithm that takes the description of a binary labeling problem and outputs its distributed time complexity.

\end{myabstract}

\thispagestyle{empty}
\setcounter{page}{0}
\newpage

\section{Introduction}

This work presents a complete classification of the deterministic distributed time complexity for a family of distributed graph problems: \emph{binary labeling problems} in trees. These are a special case of widely-studied locally checkable labeling problems \cite{Naor1995}. The defining property of a binary labeling problem is that it can be encoded with an \emph{alphabet of size two} in the \emph{edge labeling formalism}, which is a modern representation for locally checkable graph problems \cite{Balliu2019,Brandt2019automatic,Olivetti2019}; we will give the precise definition in Section~\ref{ssec:binary-labeling-problems}.

\paragraph{Contributions.}
In this work, we focus on \emph{deterministic} distributed algorithms in the LOCAL model of distributed computing, and we study the computational complexity of solving a binary labeling problem in \emph{trees}. It is easy to see that there are binary labeling problems that fall in each of the following classes:
\begin{itemize}[noitemsep]
    \item Trivial problems, solvable in $O(1)$ rounds.
    \item Problems similar to sinkless orientation, solvable in $\Theta(\log n)$ rounds \citep{Brandt2016,chang16exponential,ghaffari17distributed}.
    \item Global problems, requiring $\Theta(n)$ rounds.
    \item Unsolvable problems.
\end{itemize}

We show that this is a \emph{complete} list of all possible complexities. In particular, there are no binary labeling problems of complexities such as $\Theta(\log^* n)$ or $\Theta(\sqrt{n})$. For example, maximal matching is a problem very similar in spirit to binary labeling problems, it has a complexity $\Theta(\log^* n)$ in bounded-degree graphs \cite{Linial1992,cole86deterministic}, and it can be encoded in the edge labeling formalism using an alphabet of size three \cite{Balliu2019}---our work shows that three labels are also necessary for all problems in this complexity class.

Moreover, using our results one can easily determine the complexity class of any given binary labeling problem. We give a simple, concise characterization of all binary labeling problems for classes $O(1)$, $\Theta(n)$, and unsolvable, and we show that all other problems belong to class $\Theta(\log n)$. Hence the deterministic distributed time complexity of a binary labeling problem is \emph{decidable}, not only in theory but also \emph{in practice}: given the description of any binary labeling problem, a human being or a computer can easily find out the distributed computational complexity of the problem, as well as an asymptotically optimal algorithm for solving the problem. Our classification of all binary labeling problems is presented in Table~\ref{tab:deterministic}, and given any binary labeling problem $\Pi$, one can simply do mechanical pattern matching to find its complexity class in this table.

Our work also sheds new light on the \emph{automatic round elimination technique} \cite{Brandt2019automatic,Olivetti2019}. Previously, it was known that sinkless orientation is a nontrivial \emph{fixed point} for round elimination \cite{Brandt2016}---such fixed points are very helpful for lower bound proofs, but little was known about the existence of other nontrivial fixed points. Our classification of binary labeling problems in this work led to the discovery of new nontrivial fixed points---this will hopefully pave the way for the development of a theoretical framework that enables us to understand when round elimination leads to fixed points and why.

This work will also make it \emph{easier to prove lower bounds in the future}. Before this work, in essence the only known way to prove a nontrivial $\Omega(\log n)$ lower bound for some locally checkable problem $\Pi$ was to come up with a reduction showing that $\Pi$ is at least as hard as sinkless orientation. Our systematic exploration of binary labeling problems led to the discovery of an entire family of simple graph problems that are not directly comparable with sinkless orientation, but for which we can prove tight $\Omega(\log n)$ lower bounds directly with round elimination. In the future, whenever we encounter a graph problem $\Pi$ of an unknown complexity, we can try to relate it not only to sinkless orientation but also to one of the new problems for which we now have new lower bounds.

\paragraph{Open questions.}
The main open question that we leave for future work is extending the characterization to randomized distributed algorithms: some binary labeling problems can be solved in $\Theta(\log \log n)$ rounds with randomized algorithms, but it is not yet known exactly which binary labeling problems belong to this class. Our work takes the first steps towards developing such a classification.

\paragraph{Structure.}
We start with a brief discussion of the general landscape of distributed computational complexity in Section~\ref{sec:background}, and then give formal definitions of binary labeling problems in Section~\ref{ssec:binary-labeling-problems}. We present a summary of our results in Section~\ref{sec:contributions}---our main contribution is the characterization of all binary labeling problems in Table~\ref{tab:deterministic}. We explain the details of the model of computing in Section~\ref{sec:model}. All algorithms and lower bound proofs related to deterministic complexity are presented in Sections \ref{sec:unsol}--\ref{sec:log-lower}:
\begin{itemize}[noitemsep]
    \item Section~\ref{sec:unsol}: unsolvable problems,
    \item Section~\ref{sec:constant}: complexity class $O(1)$,
    \item Section~\ref{sec:global}: complexity class $\Theta(n)$,
    \item Section~\ref{sec:log-upper}: $O(\log n)$ upper bounds,
    \item Section~\ref{sec:log-lower}: $\Omega(\log n)$ lower bounds.
\end{itemize}
We conclude with a discussion of randomized complexity in Section~\ref{sec:rand}.

\section{Background and related work}\label{sec:background}

\paragraph{Locality and the LOCAL model.}

A fundamental concept in distributed computing is \emph{locality}. In brief, the locality of a graph problem is $T$ if all nodes of a graph can produce their own part of the solution by only looking at their radius-$T$ neighborhoods.

The idea of locality is formalized as the time complexity in the \emph{LOCAL model} of distributed computing \cite{Peleg2000,Linial1992}. In brief, a graph problem $\Pi$ can be solved in graph family $\cF$ in time $T(n)$ in the LOCAL model if there is a mapping $\cA$ from radius-$T(n)$ neighborhoods to local outputs such that for any graph $G \in \cF$ with $n$ nodes and for any assignment of polynomially-sized unique identifiers in $G$ we can apply mapping $\cA$ to each node of $G$, and the local outputs form a feasible solution for $\Pi$. Here $\cA$ is a deterministic distributed algorithm in the LOCAL model.

Note that if we interpret graph $G$ as a computer network in which nodes can send messages to their neighbors, then time becomes equal to distance: in $T$ synchronous communication rounds all nodes can gather their radius-$T$ neighborhoods (and nothing more). Hence we can interchangeably refer to locality, \emph{distributed time complexity}, and the \emph{number of communication rounds}.

Our main focus is on deterministic distributed algorithms, but we will also discuss randomized distributed algorithms; the only difference there is that in a randomized algorithm the nodes are labeled not only with unique identifiers but also with an unbounded stream of random bits. We will define the model of computing in more detail in Section~\ref{sec:model}.

\paragraph{Distributed complexity theory and LCL problems.}

The study of distributed graph algorithms has traditionally focused on specific graph problems---for example, investigating exactly what is the locality of finding a maximal independent set. However, in the recent years we have seen more focus on the development of a distributed complexity theory with which we can reason about entire \emph{families of graph problems} \cite{Balliu2018disc,Balliu2018stoc,Balliu2019decidable,Balliu2019padding,Brandt2016,Brandt2017,chang16exponential,Chang2019,fischer17sublogarithmic,Ghaffari2017,Ghaffari2018,Korhonen2018,Rosenbaum2019,Rozhon2019}.

The key example is the family of \emph{locally checkable labeling} problems (LCLs), introduced by \citet{Naor1995}. Informally, a problem is locally checkable if the feasibility of a solution can be verified by looking at all constant-radius neighborhoods. For example, maximal independent sets are locally checkable, as we can verify both independence and maximality by looking at radius-$1$ neighborhoods.

In this line of research, among the most intriguing results are various \emph{gap theorems}: For example, there are LCL problems solvable in $\Theta(\log^* n)$ rounds, while some LCL problems require $\Theta(\log n)$ rounds. However, between these two classes there is a gap: there are no LCL problems whose deterministic complexity in bounded-degree graphs is between $\omega(\log^* n)$ and $o(\log n)$ \cite{chang16exponential}.

\paragraph{Decidability of distributed computational complexity.}

The existence of such a gap immediately suggests a follow-up question: given the description of an LCL problem, can we \emph{decide} on which side of the gap it lies? And if so, can we automatically construct an asymptotically optimal algorithm for solving the problem?

As soon as we look at a family of graphs that contains e.g.\ 2-dimensional grids, questions related to the distributed complexity of a given LCL problem become undecidable \cite{Naor1995,Brandt2017}. However, if we look at the case of paths and cycles, we can at least in principle write a computer program that determines the computational complexity of a given LCL problem \cite{Naor1995,Brandt2017,Balliu2019decidable}, and some questions related to the complexity of LCLs in trees are also decidable \cite{Chang2019}---unfortunately, we run into PSPACE-hardness already in the case of paths and cycles \cite{Balliu2019decidable}.

We conjecture that \emph{all} questions about the distributed complexity of LCL problems in trees are decidable. Proving (or disproving) the conjecture is a major research program, but in this work we take one step towards proving the conjecture and we bring plenty of good news: we introduce a family of LCL problems, so-called \emph{binary labeling problems}, and we show that we can completely characterize the deterministic distributed complexity of every binary labeling problem in trees. In particular, all questions about the deterministic distributed complexity of these problems are decidable not only in principle but also in practice---using our results, a human being or a computer can easily find an optimal algorithm for solving any given binary labeling problem.

\section{Binary labeling problems}\label{ssec:binary-labeling-problems}

We will now give the formal definition of the family of binary labeling problems, together with some natural examples of graph problems that are contained in this family. LCL problems have been traditionally specified by listing a \emph{collection of permitted local neighborhoods} \cite{Naor1995}. However, we will use a more recent \emph{edge labeling} formalism \cite{Balliu2019,Brandt2019automatic,Olivetti2019}, which is equally expressive in the case of trees, and it has the additional benefit that it makes it very convenient to apply the automatic round elimination technique \cite{Brandt2019automatic,Olivetti2019}:
\begin{itemize}[noitemsep]
    \item we have a bipartite graph and the nodes are colored with two colors, white and black,
    \item the task is to \emph{label edges} with symbols from some alphabet $\Sigma$,
    \item there are \emph{both white and black constraints} that define the graph problem.
\end{itemize}
Now if we set $|\Sigma| = 2$, we arrive at the definition of binary labeling problems.

\subsection{General form}

A \emph{binary labeling problem} is a tuple $\Pi = (d,\delta,W,B)$, where
\begin{itemize}[noitemsep]
\item $d \in \{2,3,\dotsc\}$ is the \emph{white degree},
\item $\delta \in \{2,3,\dotsc\}$ is the \emph{black degree},
\item $W \subseteq \{0,1,\dotsc,d\}$ is the \emph{white constraint}, and
\item $B \subseteq \{0,1,\dotsc,\delta\}$ is the \emph{black constraint}.
\end{itemize}

An \emph{instance} of problem $\Pi$ is a pair $(G,f)$, where
\begin{itemize}[noitemsep]
\item $G = (V,E)$ is a simple graph,
\item $f\colon V \to \{ \black, \white \}$ is a proper $2$-coloring of $G$, i.e., $f(u) \ne f(v)$ for all $\{u,v\} \in E$.
\end{itemize}

Let $X \subseteq E$ be a subset of edges. For each node $v \in V$, its $X$-degree $\deg_X(v)$ is the number of edges in $X$ that are incident to $v$. We say that $X \subseteq E$ is a \emph{solution} to binary labeling problem $\Pi$ if it satisfies the following constraints for all $v \in V$:
\begin{itemize}[noitemsep]
\item If $f(v) = \white$ and $\deg(v) = d$, then $\deg_X(v) \in W$.
\item If $f(v) = \black$ and $\deg(v) = \delta$, then $\deg_X(v) \in B$.
\end{itemize}
The interpretation is that problem $\Pi$ is interesting in regular neighborhoods in which all white nodes have degree $d$ and all black nodes have degree $\delta$; any irregularities make the problem easier to solve, as all other nodes are unconstrained. We use the term \emph{relevant nodes} to refer to white nodes of degree $d$ and black nodes of degree $\delta$.

\begin{example}[bipartite splitting]\label{ex:bipartite-splitting}
Let $d = \delta = 4$ and $W = B = \{1,2,3\}$. We can interpret a solution $X \subseteq E$ as a coloring: edges in $X$ are colored red and all other edges are colored blue. Now $\Pi$ is equivalent to the following graph problem on bipartite graphs: color all edges red or blue such that all degree-$4$ nodes are incident to at least one blue edge and at least one red edge.
\end{example}

\subsection{Equivalence}
\label{sssec:equivalence}

If $a$ is a natural number and $A$ is a set of natural numbers, we will define
\[
    a - A = \{ a - x : x \in A \}.
\]

\begin{observation}\label{obs:equiv}
The following binary labeling problems have the same distributed complexity up to $\pm 1$ rounds:
\begin{align*}
\Pi_{00} &= (d,\delta,W,B), \\
\Pi_{01} &= (\delta,d,B,W), \\
\Pi_{10} &= (d,\delta,d-W,\delta-B), \\
\Pi_{11} &= (\delta,d,\delta-B,d-W).
\end{align*}
\end{observation}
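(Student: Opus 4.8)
The statement asserts that four syntactically different binary labeling problems are the "same" problem up to a trivial transformation, so the natural strategy is to exhibit, for each pair, an explicit bijection between instances and between solutions that changes the required locality by at most one round.  There are really only two elementary symmetries to verify, and the other two equivalences follow by composing them: (i) \emph{swapping the roles of the two colors}, which takes $\Pi_{00}=(d,\delta,W,B)$ to $\Pi_{01}=(\delta,d,B,W)$; and (ii) \emph{complementing the chosen edge set}, which takes $X\subseteq E$ to $E\setminus X$ and hence takes $\Pi_{00}=(d,\delta,W,B)$ to $\Pi_{10}=(d,\delta,d-W,\delta-B)$.  Then $\Pi_{11}$ is obtained by applying both, in either order, and one checks the two compositions agree: color-swap then complement gives $(\delta,d,\delta-B,d-W)$, matching the stated $\Pi_{11}$.

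First I would handle the complementation equivalence.  Fix an instance $(G,f)$.  For a node $v$ with $\deg(v)=k$ and any $X\subseteq E$ we have $\deg_{E\setminus X}(v)=k-\deg_X(v)$.  Therefore, if $f(v)=\white$ and $\deg(v)=d$, then $\deg_X(v)\in W$ if and only if $\deg_{E\setminus X}(v)=d-\deg_X(v)\in d-W$, and similarly for black nodes with the set $\delta-B$.  Hence $X$ is a solution to $\Pi_{00}$ on $(G,f)$ exactly when $E\setminus X$ is a solution to $\Pi_{10}$ on the same instance.  Since complementation is a purely local operation on the output labels (each node can flip the labels of its incident edges with zero communication once it knows $X$), a $T$-round algorithm for one problem yields a $T$-round algorithm for the other; the complexities are in fact equal, not merely equal up to $\pm 1$.

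Next the color-swap equivalence.  Given an instance $(G,f)$ of $\Pi_{00}$, define $f'$ by exchanging $\black$ and $\white$; then $(G,f')$ is an instance of $\Pi_{01}=(\delta,d,B,W)$, because the white constraint of $\Pi_{01}$ is $B$ and its white degree is $\delta$, which are precisely the black data of $\Pi_{00}$, and vice versa.  A set $X\subseteq E$ is then a solution to $\Pi_{00}$ on $(G,f)$ if and only if it is a solution to $\Pi_{01}$ on $(G,f')$, since the two constraint lists are literally the same list with the color labels permuted.  Here the $\pm1$ slack enters: relabeling the colors is not free, because in the LOCAL model a node's color is part of its input, so an algorithm expecting $\Pi_{01}$-colored inputs must be simulated on $\Pi_{00}$-colored inputs; a node can recompute the swapped color of every node in its radius-$(T{+}1)$ ball, so a $T$-round algorithm becomes a $(T{+}1)$-round algorithm — actually one checks a node already sees the colors of all nodes within distance $T$, so no extra round is needed for the color flip itself, but the standard convention in this formalism (and the reason the observation is phrased "up to $\pm1$") is that passing between the white-constraint and black-constraint viewpoints shifts which side "moves first," costing at most one round.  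Finally, composing (i) and (ii) and checking that $d-(\delta-B)$ is read correctly as the set $\{d-(\delta-x):x\in B\}$ — and that this equals the intended constraint of $\Pi_{11}$ under the definition $a-A=\{a-x:x\in A\}$ — gives the last equivalence.

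The only place any care is required is the bookkeeping of the $\pm1$: one should state once and for all the convention that the distributed complexity of a binary labeling problem is measured with respect to, say, the white side, and then note that each of the three transformations above changes that reference point by at most one round, so all four problems lie within a window of one round of each other.  Everything else is the routine arithmetic of the maps $X\mapsto E\setminus X$ and $k\mapsto k-A$, which I would not spell out in full.
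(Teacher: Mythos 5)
Your proposal is correct and follows essentially the same route as the paper: decompose the four equivalences into the two generating symmetries (complementing $X$, which is free, and swapping the colors), and charge the $\pm 1$ to the color swap. The source of that $\pm 1$ you eventually settle on --- that swapping colors turns an algorithm in which one side announces the edge labels into one in which the other side must announce them, so one extra round may be needed for neighbors to communicate the final output --- is precisely the paper's reason, formalized there via its white-algorithm/black-algorithm convention and the observation that white and black complexity differ by at most one round.
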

\begin{proof}
Given an algorithm for $\Pi_{x0}$, we get an algorithm for $\Pi_{x1}$ by exchanging the roles of black and white nodes. Given an algorithm for $\Pi_{0x}$, we get an algorithm for $\Pi_{1x}$ by replacing solution $X$ with its complement $E \setminus X$.

When we exchange the roles of black and white nodes, we may need to spend one additional round so that black nodes can inform white nodes of the final output or vice versa; see Section~\ref{sec:model} and Observation~\ref{obs:white-black-algorithm} for more details.
\end{proof}
Observation~\ref{obs:equiv} partitions binary labeling problems in equivalence classes, each of them with at most four distinct problems. We use notation $\Pi \sim \Pi'$ for this equivalence relation, and say that $\Pi$ and $\Pi'$ are \emph{equivalent}.

\subsection{Restrictions and relaxations}

\begin{definition}\label{def:restriction}
Given two problem $\Pi = (d,\delta,W,B)$ and $\Pi' = (d,\delta,W',B')$, we say that $\Pi'$ is a \emph{restriction} of $\Pi$ and $\Pi$ is a \emph{relaxation} of $\Pi'$ if
\[
W' \subseteq W, \quad B' \subseteq B.
\]
\end{definition}

We use notation $\Pi' \subseteq \Pi$ to denote that $\Pi'$ is a restriction of $\Pi$.

\begin{observation}\label{obs:restriction}
If $\Pi' \subseteq \Pi$, then any feasible solution for $\Pi'$ is also a feasible solution for $\Pi$. In particular, if $\Pi'$ can be solved in $T$ rounds, then $\Pi$ can also be solved in $T$ rounds.
\end{observation}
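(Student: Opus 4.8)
The plan is to observe that this is a direct consequence of unwinding the definition of a solution (given just before Example~\ref{ex:bipartite-splitting}) together with Definition~\ref{def:restriction}. The key preliminary point is that $\Pi$ and $\Pi'$ share the same white degree $d$ and black degree $\delta$, so they have exactly the same set of instances: pairs $(G,f)$ with $G$ simple and $f$ a proper $2$-coloring, with polynomially-sized unique identifiers. Thus it is meaningful to ask whether a single edge set $X$ is simultaneously a solution to both, and whether a single algorithm solves both.

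For the first claim, I would fix an instance $(G,f)$ and a solution $X \subseteq E$ to $\Pi'$, and check the two defining constraints of $\Pi$ node by node. If $v$ is white with $\deg(v) = d$, then feasibility of $X$ for $\Pi'$ gives $\deg_X(v) \in W'$; since $W' \subseteq W$ by Definition~\ref{def:restriction}, we get $\deg_X(v) \in W$, which is exactly what $\Pi$ demands at $v$. The black case is symmetric, using $B' \subseteq B$. Every node that is not white of degree $d$ nor black of degree $\delta$ is unconstrained in both problems, so there is nothing to check there. Hence $X$ is a solution to $\Pi$.

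For the complexity statement, suppose $\cA$ is a deterministic LOCAL algorithm solving $\Pi'$ in $T$ rounds. By definition this means that on every instance $(G,f)$ with $n$ nodes and every assignment of polynomially-sized unique identifiers, running $\cA$ produces an edge labeling that is feasible for $\Pi'$. By the first part of the observation, that same labeling is feasible for $\Pi$. Since $\Pi$ and $\Pi'$ have the same instances and the same admissible identifier assignments, $\cA$ is therefore also a $T$-round algorithm for $\Pi$.

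There is no genuine obstacle here; the only thing worth stating explicitly is the remark above that the instance sets of $\Pi$ and $\Pi'$ coincide, which is what makes ``the same solution'' and ``the same algorithm'' well-defined phrases. Everything else follows immediately from the set inclusions $W' \subseteq W$ and $B' \subseteq B$.
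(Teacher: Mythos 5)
Your proof is correct and matches the paper's (implicit) reasoning: the paper states this observation without proof precisely because it follows immediately from the inclusions $W' \subseteq W$ and $B' \subseteq B$ together with the definition of a solution, which is exactly the argument you spell out. Your explicit remark that $\Pi$ and $\Pi'$ share the same instances (same $d$ and $\delta$) is a reasonable point of care, but nothing in your argument deviates from what the paper intends.
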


\subsection{Vector notation}

It is convenient to interpret set $W$ as a bit vector $w_0w_1\dots w_d$ with $d+1$ bits, so that bit $w_i = \p$ if $i \in W$ and $w_i = \n$ if $i \notin W$. Similarly, $B$ can be interpreted as a bit vector with $\delta+1$ bits.

\begin{examplerep}[\ref{ex:bipartite-splitting}b]
Using this notation, the problem of Example~\ref{ex:bipartite-splitting} can be represented as follows:
\[
    W = \n\p\p\p\n, \quad
    B = \n\p\p\p\n,
\]
or, in brief, $\Pi = (\n\p\p\p\n, \n\p\p\p\n)$.
\end{examplerep}

Note that when we use vector notation, vectors $W$ and $B$ fully determine problem $\Pi$; therefore we do not need to specify $d$ and $\delta$ separately and we can simply write $\Pi = (W,B)$.

We will use shorthand notation such as $\p^{x}$ for a vector of $x$ $\p$s and $\ppp$ for a vector of one or more $\p$s, and we will use $\x$ to refer to a bit of any value. For example, $W = \n\x\x\n$ is a shorthand for $W \in \{\n\n\n\n,\allowbreak\n\n\p\n,\allowbreak\n\p\n\n,\allowbreak\n\p\p\n\}$ and $W = \n\ppp\n$ is a shorthand for $W \in \{\n\p\n,\allowbreak\n\p\p\n,\allowbreak\n\p\p\p\n,\dotsc\}$.

\subsection{Important special case: \texorpdfstring{\boldmath$\delta=2$}{delta = 2}}\label{ssec:case-delta2}

At first the bipartite setting may seem restrictive---indeed, why would we care about graphs that are properly $2$-colored. However, we can take any graph and interpret edges as ``black nodes'' and this way many graph problems of interest can be represented in the binary labeling formalism.

More precisely, let $G_0 = (V_0, E_0)$ be a graph. Subdivide all edges of $G_0$ to construct a new graph $G = (V,E)$, and assign the color white to all original nodes in $V_0$ and color black to the new nodes in $V \setminus V_0$.

\begin{figure}
\centering
\includegraphics[page=1,scale=0.9]{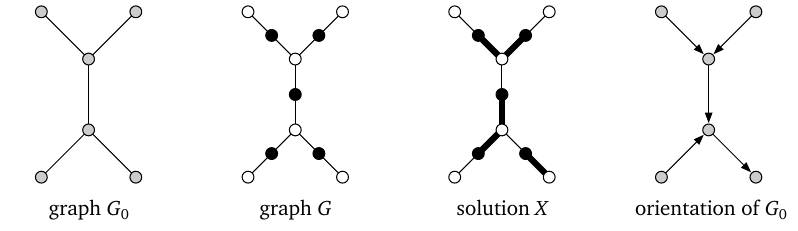}
\caption{Binary labeling problems with $B = \n\p\n$ are orientation problems.}\label{fig:delta2orientation}
\end{figure}

Now consider a binary labeling problem $\Pi$ with $B = \n\p\n$. Assume that we have a solution $X$ to $\Pi$ in $G$. We can now interpret $X$ as an \emph{orientation} of the original graph $G_0$: If an edge $e = \{u,v\} \in E_0$ was subdivided in two edges, $e_1 = \{u,x\}$, and $e_2 = \{x,v\}$, we will have exactly one of these edges in set $X$. If we have $e_1 \in X$, we can interpret it so that edge $\{u,v\}$ is oriented from $v$ to $u$, and otherwise it is oriented from $u$ to $v$; see Figure~\ref{fig:delta2orientation}. In essence, $\Pi$ is now equivalent to the following problem: Find an orientation of $G_0$ such that all nodes with $\deg(v) = d$ have $\indegree(v) \in W$.

\begin{example}[sinkless orientation]\label{ex:sinkless-orientation}
Let $d > 2$, $\delta = 2$, $W = \p\p\ppp\n$, and $B = \n\p\n$. Now all edges must be properly oriented: there is exactly one head. Furthermore, for all nodes of degree $d$, they must have indegree in $\{0,1,\dotsc,d-1\}$. Put otherwise, degree-$d$ nodes must have outdegree at least one. Hence a feasible solution represents an orientation in which none of degree-$d$ nodes are sinks.
\end{example}

\begin{examplerep}[\ref{ex:sinkless-orientation}b]
Let $W = \n\p\p\ppp$ and $B = \n\p\n$. By Observation~\ref{obs:equiv}, this is equivalent to Example~\ref{ex:sinkless-orientation}. In essence, we have merely reversed the roles of heads and tails---now the task is to find a sourceless orientation.
\end{examplerep}

\begin{example}[sinkless and sourceless orientation]\label{ex:sinkless-sourceless}
Let $W = \n\p\ppp\n$ and $B = \n\p\n$. Now in $d$-regular graphs the task is to find an orientation such that all nodes have at least one incoming and at least one outgoing edge.

Note that this problem is a restriction of Example~\ref{ex:sinkless-orientation}; recall Definition~\ref{def:restriction}.
\end{example}

\begin{example}[even orientation]\label{ex:even-orientation}
Let $W = \p\n\p\n$ and $B = \n\p\n$. In $3$-regular graphs the task is to find an orientation such that all nodes have an even number of outgoing edges.
\end{example}

\begin{example}[two-coloring]\label{ex:two-coloring}
Let $W = \p\nnn\p$ and $B = \n\p\n$. In $d$-regular graphs this is a $2$-coloring problem: each node $v$ is either ``red'' (indegree $0$) or ``blue'' (indegree $d$), and all edges between such nodes are properly colored (they are always oriented from red to blue).
\end{example}

\begin{figure}
\centering
\includegraphics[page=2,scale=0.9]{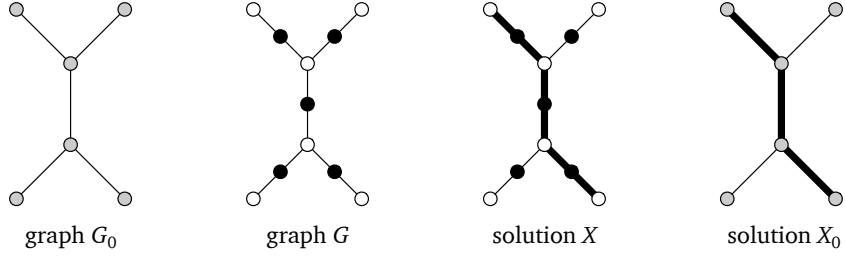}
\caption{Binary labeling problems with $B = \p\n\p$ are splitting problems.}\label{fig:delta2consistent}
\end{figure}

Another interesting special case is $B = \p\n\p$. Now for each original edge of $G_0$ we will select either both of the half-edges or none of the half-edges, and hence a solution $X \subseteq E$ can be interpreted in a natural way as a \emph{subset of edges} $X_0 \subseteq E_0$ in the original graph; see Figure~\ref{fig:delta2consistent}. This is a splitting problem: partition $E_0$ in two classes, $X_0$ and $E_0 \setminus X_0$, and $W$ determines how many incident edges in each class we can have.

\begin{example}[regular matching]\label{ex:regular-matching}
Let $W = \n\p\n\nnn$ and $B = \p\n\p$. 
The task is to find a set $X_0 \subseteq E_0$ such that all nodes of degree $d$ are incident to exactly one edge in $X_0$. In particular, if we have a $d$-regular graph, $X_0$ is a perfect matching.
\end{example}

\begin{example}[splitting]\label{ex:splitting}
Let $W = \n\p\ppp\n$ and $B = \p\n\p$. In this problem we will need to color edges red and blue such that all degree-$d$ nodes are incident to at least one red edge and at least one blue edge.
\end{example}

\begin{figure}
\centering
\includegraphics[page=3,scale=0.9]{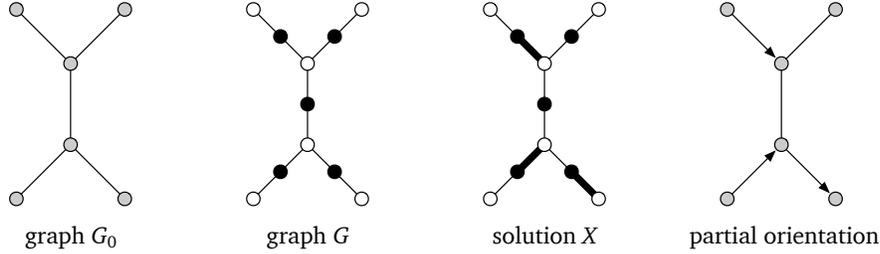}
\caption{Binary labeling problems with $B = \p\p\n$ are partial orientation problems.}\label{fig:delta2partial}
\end{figure}

We can also consider e.g.\ $B = \p\p\n$ and interpret $X$ as a \emph{partial orientation}: some edges of $G_0$ are oriented and some may be left unoriented and again $W$ indicates which indegrees are permitted; see Figure~\ref{fig:delta2partial}. The case of $B = \n\p\p$ is equivalent to $B = \p\p\n$---recall Observation~\ref{obs:equiv}.

As we will see, all other cases $B = \n\n\n$, $B = \p\n\n$, $B = \n\n\p$, and $B = \p\p\p$ are either trivial or unsolvable; we will give two examples:

\begin{example}[contradiction]\label{ex:contradiction}
Let $W = \n\p\ppp$ and $B = \p\n\n$. Here black nodes must have $X$-degree $0$, so we must have $X = \emptyset$. However, all white nodes must be adjacent to at least one edge in $X$. Hence there is no solution to the problem.
\end{example}

\begin{example}[trivial]\label{ex:trivial}
Let $W = \n\n\p\nnn$ and $B = \p\p\p$. Here all white nodes can arbitrarily choose two incident edges and add them to $X$.
\end{example}

\begin{remark}
Similar ideas can be generalized to hypergraphs. In essence, we can interpret the bipartite graph $G$ as a hypergraph $H$, where white nodes of $G$ correspond to nodes of $H$ and black nodes of $G$ correspond to hyperedges of $H$. Now $\Pi$ is in essence a hypergraph problem. If we set $B = \n\p\nnn$, a solution $X$ can be interpreted as an orientation of hyperedges, and if we set $B = \p\nnn\p$, a solution $X$ can be interpreted as a subset of hyperedges.
\end{remark}

\subsection{Our focus: trees}

Throughout this work, we will study binary labeling problems in trees. Binary labeling problems as such make sense also in general graphs, but as our aim is at understanding LCL problems on trees, we will assume from now on that our input graph is a tree (a simple, connected, acyclic graph).

All of the above examples are well-defined also in trees; however, some care is needed when we interpret them. For example, let us revisit the ``regular matching'' problem from Example~\ref{ex:regular-matching}. Let $W = \n\p\n\nnn$ and $B = \p\n\p$. Now in the case of a tree, the task is to find a subset of edges $X$ such that internal nodes of degree $d$ are incident to exactly one such edge. However, leaf nodes are unconstrained. Informally, if we are in the middle of a $d$-regular tree, we will need to find a solution that locally looks like a perfect matching, but near leaf nodes and other irregularities the output is more relaxed. One consequence is that for this problem a solution always exists (while there are of course trees in which a perfect matching does not exist).

\subsection{Why exactly this family?}\label{ssec:why}

As we have discussed above, our goal here is to initiate a systematic investigation of the distributed complexity of LCL problems on trees, and in this work we will focus on a specific family of LCL problems, binary labeling problems. We will now briefly discuss reasons that led to the choice of this specific family of LCL problems for our work.

\paragraph{Sinkless orientation is a binary labeling problem.}

Sinkless orientation (recall Example~\ref{ex:sinkless-orientation}) is one of the cornerstones of the modern theory of distributed computational complexity. This problem was introduced in our context in 2016 \cite{Brandt2016}, and in Google Scholar there are already more than 30 papers written since 2016 that mention the term ``sinkless orientation'', all of them related to the theory of distributed computing. There are many variants of the definition, but for our purposes all of them are in essence equivalent to each other; we will use the following version: there is a fixed parameter $d > 2$, and the task is to orient edges so that nodes of degree $d$ are not sinks (note that low-degree nodes can be sinks and the problem is therefore trivial in cycles).

The distributed complexity of solving sinkless orientation is now completely understood: $\Theta(\log n)$ rounds with deterministic algorithms and $\Theta(\log\log n)$ rounds with randomized algorithms \citep{Brandt2016,chang16exponential,ghaffari17distributed}. Yet there are many fundamental questions related to the role of sinkless orientation that we do not understand at all.

One of the mysteries is the following observation: whenever we encounter a problem $\Pi$ that turns out to be as hard as sinkless orientation, usually the reason for $\Pi$ being hard is that $\Pi$ is directly related to sinkless orientation through reductions. For example, sinkless and sourceless orientation has the same complexity as sinkless orientation; the upper bound is nontrivial \cite{ghaffari17distributed}, but the lower bound trivially comes from the observation that a sinkless and sourceless orientation gives a sinkless orientation. We are not aware of any LCL problem $\Pi$ that has the same complexity as sinkless orientation---$\Theta(\log n)$ rounds deterministic and $\Theta(\log\log n)$ rounds randomized---with a nontrivial lower bound that is not merely an observation that an algorithm for solving $\Pi$ directly gives an algorithm for solving sinkless orientation.

Indeed, we do not know if all problems in this complexity class are merely extensions and variants of the same problem!

One of the goals of the present work is to systematically explore a large family of LCL problems that contains the sinkless orientation problem as well as many closely related problems. Our definition of binary labeling problems is guided by this insight: it is large enough to contain many such problems, yet restrictive enough so that there is hope of understanding the distributed complexity of all problems in this family.

As we will see in this work, the study of binary labeling problems sheds new light on this issue. We will encounter problems that are as hard as sinkless orientation, yet require a \emph{new lower bound proof} that is not based on the previous result of the hardness of sinkless orientation.

\paragraph{Binary labeling problems vs.\ homogeneity.}

LCLs come in many flavors. Some problems are trivial in a regular unlabeled graph---fractional problems are a good example here. However, for many distributed problems the interesting part is specifically what to do in the middle of a regular unlabeled tree---symmetry-breaking problems and orientation and splitting problems fall in this category. Our focus is on the latter case.

In the definition of binary labeling problems, the idea of focusing on regular parts is simply captured in the constraint that only white nodes of degree $d$ and only black nodes of degree $\delta$ are relevant. In essence, as soon as we see some irregularities (e.g., leaf nodes of the tree), the problem becomes potentially easier to solve.

The idea of investigating what happens in the middle of a regular tree was previously formalized using a somewhat different idea of \emph{homogeneous LCLs} \cite{Balliu2019weak2col}. The family of problems that we study in the present work is similar in spirit to homogeneous LCLs, but there is a key difference: all homogeneous LCLs can be solved in $O(\log n)$ rounds in trees, but there are binary labeling problems with distributed time complexity $\Theta(n)$; see Table~\ref{tab:overview}.

\begin{table}
\newcommand{\yes}{{\footnotesize\textcolor{green!50!black}{YES}}}
\newcommand{\no}{{\footnotesize\textcolor{red!50!black}{NO}}}
\newcommand{\hlno}{\textbf{\no}}
\newcommand{\hlyes}{\textbf{\yes}}
\newcommand{\open}{?}
\small
\centering
\begin{tabular}{llllll}
\toprule
Deterministic & Randomized & General & Homogeneous & \textbf{Binary} & Examples \\
complexity & complexity & LCLs & LCLs & \textbf{labeling} & \\
\midrule
$O(1)$ & $O(1)$ & \yes & \yes & \hlyes & trivial problems \\
$\omega(1)$, $o(\log^* n)$ & $\omega(1)$, $o(\log^* n)$ & \open & \no{} \cite{Balliu2019weak2col} & \hlno & \open \\
$\Theta(\log^* n)$ & $\Theta(\log^* n)$ & \yes & \yes & \hlno & $(\Delta+1)$-coloring \cite{Linial1992,cole86deterministic} \\
$\Theta(\log n)$ & $\Theta(\log\log n)$ & \yes & \yes & \hlyes & sinkless orientation \citep{Brandt2016,chang16exponential,ghaffari17distributed} \\
$\Theta(\log n)$ & $\Theta(\log n)$ & \yes & \yes & \hlyes & even orientation (Example~\ref{ex:even-orientation}) \\
$\Theta(n^{1/k})$ & $\Theta(n^{1/k})$ & \yes & \no{} \cite{Balliu2019weak2col} & \hlno & $2\frac12$-coloring \cite{Chang2019} \\
$\Theta(n)$ & $\Theta(n)$ & \yes & \no{} \cite{Balliu2019weak2col} & \hlyes & 2-coloring (Example~\ref{ex:two-coloring}) \\
\bottomrule
\end{tabular}
\caption{An overview of possible distributed time complexities in trees for different classes of LCLs.}\label{tab:overview}
\end{table}

\paragraph{Automatic round elimination and fixed points.}

As we mentioned in Section~\ref{ssec:binary-labeling-problems}, the edge labeling formalism makes it easy to apply the automatic round elimination technique \cite{Brandt2019automatic,Olivetti2019}, which is an effective technique for proving lower bounds \cite{Balliu2019,Brandt2019automatic,Brandt2016,Balliu2019weak2col,Linial1992,Naor1991}. However, there are still many open research questions related to round elimination itself. One of the key questions is the existence of fixed points; here the sinkless orientation problem is a good example.

Let $\Pi$ be the sinkless orientation problem from Example~\ref{ex:sinkless-orientation}. If we start with the assumption that the complexity of $\Pi$ is $T = o(\log n)$ rounds in regular trees, and apply the round elimination technique \cite{Brandt2019automatic,Olivetti2019} in a mechanical manner, we will immediately get the result that the complexity of the same problem $\Pi$ is $T-1$ rounds, which is absurd, and hence we immediately get a lower bound (at least for some weak models of distributed computing). A bit more formally, $\Pi$ is a nontrivial fixed point for round elimination, and such fixed points immediately imply nontrivial lower bounds. This is particularly interesting as fixed points can be detected automatically with a computer.

Now let $\Pi'$ be the problem of finding a sinkless and sourceless orientation from Example~\ref{ex:sinkless-sourceless}. For a human being, it is now trivial that $\Pi'$ is at least as hard as $\Pi$. However, $\Pi'$ is not a fixed point for round elimination, and we do not seem to have any automatic way for proving lower bounds for problems similar to $\Pi'$.

What is not understood at all is what is the fundamental difference between $\Pi$ and $\Pi'$ and why one of them is a fixed point and the other one is not. Indeed, so far there have not been many examples of nontrivial fixed points that are not merely trivial variants of sinkless orientations.

The family of binary labeling problems is chosen so that it contains both fixed points as well as closely related problems that are not fixed points. As we will see in this work, through a systematic study of all binary labeling problems we are able to discover new nontrivial fixed points.

\paragraph{\boldmath Restriction to two labels and $\Theta(\log^* n)$ complexity.}

In the binary labeling problem we label edges (or half-edges) with two labels: whether it is part of solution $X$ or not. If we looked at the more general case of $|\Sigma| = O(1)$ possible edge labels, we would have a family of problems that is, in essence, more expressive than the family of all homogeneous LCLs.

The case of $|\Sigma| = 2$ that we study here is the smallest nontrivial case, but there is another reason that makes the case of $|\Sigma| = 2$ interesting: we will show that the \emph{complexity class $\Theta(\log^* n)$ disappears} when we go from $|\Sigma| = 3$ down to $|\Sigma| = 2$.

There are numerous symmetry-breaking problems that can be solved in $\Theta(\log^* n)$ rounds in graphs of maximum degree $\Delta=O(1)$. Examples include maximal matching, maximal independent set, minimal dominating set, $(\Delta+1)$-vertex coloring, $(2\Delta-1)$-edge coloring, weak $2$-coloring, and many variants of these problems.

It is known that with $3$ edge labels we can encode e.g.\ the problem of finding a maximal matching~\cite{Balliu2019}; hence as soon as $|\Sigma| \ge 3$, there are edge labeling problems solvable in $\Theta(\log^* n)$ rounds. However, previously it was not known if the use of $|\Sigma| \ge 3$ labels was necessary in order to encode any such problem. After all, the use of $|\Sigma| \ge 3$ labels seems unnatural when we consider problems such as maximal matching, maximal independent set, and weak $2$-coloring, all of which in essence ask one to find a subset of edges or a subset of nodes subject to local constraints.

In this work we will show that none of these problems can be encoded with two labels. We show that for binary labeling problems, there is a gap in the deterministic complexity between $\omega(1)$ and $o(\log n)$.

Hence binary labeling problems give rise to a different landscape of computational complexity in comparison with any other number of labels. We conjecture that there is no such qualitative difference between e.g.\ alphabets of size $3$ or~$4$.

\section{Our contributions}\label{sec:contributions}

\subsection{Main results: deterministic complexity}

\begin{table}
\newcommand{\unsol}{unsolvable}
\small
\centering
\begin{tabular}{llll@{}l@{\qquad}lll}
\toprule
Type & Problem & White & Black && \textbf{Deterministic} & Upper & Lower \\
& family & constraint & constraint && \textbf{complexity} & bound & bound \\
\midrule
$\tI$&$\fIa$& $\p\n\nnn$ & $\n\x\xxx$ && \textbf{\unsol} && Section~\ref{sec:unsol} \\
&$\fIb$& $\n\nnn\p$ & $\x\xxx\n$ \\
&$\fIc$& $\n\x\xxx$ & $\p\n\nnn$ \\
&$\fId$& $\x\xxx\n$ & $\n\nnn\p$ \\\addlinespace
$\tII$&$\fIIa$& $\n\n\nnn$ & $\x\x\xxx$ \\
&$\fIIb$& $\x\x\xxx$ & $\n\n\nnn$ \\
\midrule
$\tIII$&$\fIIIa$& non-empty & $\p\p\ppp$ && \boldmath $O(1)$ & Section~\ref{sec:constant} \\
&$\fIIIb$& $\p\p\ppp$ & non-empty \\\addlinespace
$\tIV$&$\fIVa$& $\p\x\xxx$ & $\p\x\xxx$ \\
&$\fIVb$& $\x\xxx\p$ & $\x\xxx\p$ \\
\midrule
$\tV$&$\fVa$& $\p\nnn\p$ & $\n\p\n$ && \boldmath $\Theta(n)$ & Section~\ref{sec:global} & Section~\ref{sec:global} \\
&$\fVb$& $\n\p\n$ & $\p\nnn\p$ \\\addlinespace
$\tVI$&$\fVIa$& $\nnn\p\x$ & $\x\p\nnn$ \\
&$\fVIb$& $\x\p\nnn$ & $\nnn\p\x$ \\
\midrule
$\tVII$&$\fVIIa$& \multicolumn{2}{l}{all other cases} && \boldmath $\Theta(\log n)$ & Section~\ref{sec:log-upper} & Section~\ref{sec:log-lower} \\
\bottomrule
\end{tabular}
\caption{The deterministic distributed complexity of binary labeling problems in trees; $\ppp$ signifies one or more $\p$s, and $\x$ signifies either $\n$ or $\p$. Note that e.g.\ all problem families of type $\tI$ are equivalent to each other in the following sense: for any problem $\Pi$ of type $\tI$, there is exactly one problem equivalent to $\Pi$ in each of the four families $\fIa$, $\fIb$, $\fIc$, and $\fId$.}\label{tab:deterministic}
\end{table}

Our main contribution is a complete classification of the deterministic distributed complexity of binary labeling problems in trees---this is presented in Table~\ref{tab:deterministic}. Given a binary labeling problem $\Pi$, one can simply do pattern matching in this table to first find its problem family and this way also find its deterministic complexity.

For easier reference, we have listed all problem families explicitly, but if we take into account the fact that e.g.\ all problems of family $\fIa$ are equivalent to a problem in family $\fIb$ and vice versa (recall Observation~\ref{obs:equiv}), we can partition problems in seven \emph{types}, labeled with $\tI, \tII, \dotsc, \tVII$ in the table.

All proofs and algorithms are presented in Sections \ref{sec:unsol}--\ref{sec:log-lower}: Section~\ref{sec:unsol} covers the complexity of problems of types $\tI$ and $\tII$, Section~\ref{sec:constant} covers types $\tIII$ and $\tIV$, and Section~\ref{sec:global} covers types $\tV$ an $\tVI$. Now by definition all problems that are not of type $\tI$--$\tVI$ are of type $\tVII$; in Section~\ref{sec:log-upper} we prove that all of them are solvable in $O(\log n)$ rounds and in Section~\ref{sec:log-lower} we prove a matching lower bound of $\Omega(\log n)$ for all of these problems. This will then complete the proof that the classification of Table~\ref{tab:deterministic} is correct.

The first two parts of Table~\ref{tab:deterministic} are trivial:
\begin{itemize}
    \item There are two types of problems that are unsolvable for trivial reasons. For example, if all white nodes must have $X$-degree $0$ and all black nodes must have a non-zero $X$-degree, no solution exists as long as the tree is large enough.
    \item There are two types of problems that are solvable in $O(1)$ time for trivial reasons. For example, if black nodes are happy with any $X$-degree, then white nodes can simply pick some number $x \in W$ and choose arbitrarily $x$ adjacent edges.
\end{itemize}
However, what is not obvious is that this list is exhaustive: no other problems are unsolvable, and no other problem is solvable in $O(1)$ time.

Furthermore, as we can see in Table~\ref{tab:deterministic}, there are only very few binary labeling problems that are solvable but inherently global, i.e., they require $\Theta(n)$ rounds to solve. What is perhaps the biggest surprise is that all other problems can be solved in $O(\log n)$ rounds, and they also require $\Omega(\log n)$ rounds. In particular:
\begin{theorem}
There are no binary labeling problems with deterministic distributed complexity in the following ranges:
\begin{itemize}[noitemsep]
    \item between $\omega(1)$ and $o(\log n)$,
    \item between $\omega(\log n)$ and $o(n)$.
\end{itemize}
\end{theorem}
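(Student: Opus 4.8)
The theorem follows directly once the classification in Table~\ref{tab:deterministic} has been established, so the proof is really a bookkeeping exercise: I would prove it by invoking the case analysis that the rest of the paper carries out. The plan is to first observe that every binary labeling problem $\Pi$ falls into exactly one of the seven types, since types~$\tI$ through~$\tVI$ are defined by explicit pattern conditions on the vectors $W$ and $B$ (closed under the equivalence of Observation~\ref{obs:equiv}), and type~$\tVII$ is the complement. Then, granting the four statements proved in Sections~\ref{sec:unsol}--\ref{sec:log-lower}---namely that types~$\tI,\tII$ are unsolvable, types~$\tIII,\tIV$ have complexity $O(1)$, types~$\tV,\tVI$ have complexity $\Theta(n)$, and type~$\tVII$ has complexity $\Theta(\log n)$---we conclude that the only attainable complexities are $O(1)$, $\Theta(\log n)$, $\Theta(n)$, and unsolvable. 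In particular no problem has complexity strictly between $\omega(1)$ and $o(\log n)$ (such a problem would have to be in some type, but none of the types realizes that range), and likewise none lies strictly between $\omega(\log n)$ and $o(n)$. This gives both gaps claimed in the theorem.

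Since the present excerpt ends right at the theorem statement and the constituent results have not yet been proved, the honest way to write this proof at this point in the paper is as a forward reference: state that the theorem is an immediate corollary of the classification to be established in the remaining sections, and that it will be re-derived (or simply noted as following) once Table~\ref{tab:deterministic} is justified. If instead one wants a self-contained sketch here, I would outline the four pillars: (i) unsolvability for types~$\tI$/$\tII$ comes from a short contradiction argument—e.g.\ $W=\p\nnn\n$ forces each degree-$d$ white node to have $X$-degree $0$, while the black side demands a nonzero $X$-degree somewhere, impossible in a large enough regular neighborhood; (ii) the $O(1)$ upper bounds for types~$\tIII$/$\tIV$ follow because one side accepts every $X$-degree, so the other side may locally pick any admissible number of incident edges with no coordination; (iii) the $\Theta(n)$ bounds for types~$\tV$/$\tVI$ reduce to $2$-coloring-type obstructions that propagate a fixed parity or choice across the whole tree, forcing a global dependency; and (iv) the $\Theta(\log n)$ characterization of type~$\tVII$ combines an $O(\log n)$ algorithm (Section~\ref{sec:log-upper}) with a round-elimination-style lower bound (Section~\ref{sec:log-lower}).

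The only genuinely nontrivial ingredient—hence the step I expect to be the main obstacle—is showing that the seven types are exhaustive and mutually consistent in a way that makes the gaps tight: one must verify that every combination of $W$ and $B$ not matching the patterns of types~$\tI$--$\tVI$ really does admit both an $O(\log n)$ algorithm and an $\Omega(\log n)$ lower bound. The upper bound side requires a uniform algorithm that works for the whole (quite heterogeneous) family~$\fVIIa$, and the lower bound side requires either exhibiting nontrivial round-elimination fixed points inside this family or reducing to sinkless orientation; the subtlety flagged in the introduction is precisely that some type-$\tVII$ problems need a new lower bound argument rather than a reduction to sinkless orientation. All of that work is deferred to Sections~\ref{sec:log-upper} and~\ref{sec:log-lower}; the theorem itself, once those sections are in hand, is then immediate by the exhaustiveness of the type partition.
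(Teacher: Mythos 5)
Your proposal is correct and matches the paper's treatment: the theorem is stated without a separate proof precisely because it is an immediate corollary of the exhaustive type partition (type~$\tVII$ being defined as the complement of types~$\tI$--$\tVI$) together with the complexity bounds established in Sections~\ref{sec:unsol}--\ref{sec:log-lower}, which is exactly the bookkeeping argument you give. Your identification of the type-$\tVII$ upper and lower bounds as the genuinely nontrivial ingredients also agrees with where the paper concentrates its effort.
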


Table~\ref{tab:examples} shows examples of problems in each complexity class.

\begin{table}
\newcommand{\unsol}{unsolvable}
\small
\centering
\begin{tabular}{llllll}
\toprule
Deterministic & Type & \multicolumn{3}{l}{Examples of problems} \\
complexity && \\
\cmidrule{3-6}
&& $W$ & $B$ & reference & description\\
\midrule
\unsol
& $\tI$ & $\n\p\ppp$ & $\p\n\n$ & Example~\ref{ex:contradiction} & contradiction \\
\midrule
$O(1)$
& $\tIII$ & $\n\n\p\nnn$ & $\p\p\p$ & Example~\ref{ex:trivial} & trivial \\
\midrule
$\Theta(n)$
& $\tV$ & $\p\nnn\p$ & $\n\p\n$ & Example~\ref{ex:two-coloring} & two-coloring \\
\midrule
$\Theta(\log n)$
& $\tVII$ & $\n\p\p\p\n$ & $\n\p\p\p\n$ & Example~\ref{ex:bipartite-splitting} & bipartite splitting \\
&& $\p\n\p\n$ & $\n\p\n$ & Example~\ref{ex:even-orientation} & even orientation \\\addlinespace
&& $\p\p\ppp\n$ & $\n\p\n$ & Example~\ref{ex:sinkless-orientation} & sinkless orientation \\
&& $\n\p\ppp\n$ & $\n\p\n$ & Example~\ref{ex:sinkless-sourceless} & sinkless and sourceless orientation \\
&& $\n\p\n\nnn$ & $\p\n\p$ & Example~\ref{ex:regular-matching} & regular matching \\
&& $\n\p\ppp\n$ & $\p\n\p$ & Example~\ref{ex:splitting} & splitting \\\addlinespace
&& $\n\p\n\nnn$ & $\n\p\n\nnn$ & Definition~\ref{def:bipartite-matching} & bipartite matching \\
&& $\n\p\n\nnn$ & $\p\nnn\p$ & Definition~\ref{def:hypergraph-matching} & hypergraph matching \\
&& $\n\p\n\nnn$ & $\p\p\n$ & Definition~\ref{def:edge-grabbing} & edge grabbing \\
\midrule
$\Omega(\log n)$, $O(n)$
& $\tV$ or $\tVII$ & $\ppp\n\ppp$ & $\n\ppp\n$ & Definition~\ref{def:forbidden-degree} & forbidden degree \\
& $\tVI$ or $\tVII$ & $\p\ppp\n$ & $\n\p\ppp$ & Definition~\ref{def:bipartite-so} & bipartite sinkless orientation \\
\bottomrule
\end{tabular}
\caption{Examples of binary labeling problems and problem families and their deterministic complexities.}\label{tab:examples}
\end{table}

\subsection{Additional results: randomized complexity}

While our focus in this work is on deterministic complexity, we will also explore the randomized complexity of binary labeling problems. Many of our theorems from Sections \ref{sec:unsol}--\ref{sec:log-lower} have also direct implications on randomized complexity. We will summarize all results related to randomized complexity in Section~\ref{sec:rand}, and have a deeper look at questions that are specific to randomized complexity there.

By prior work, it is known that classes $O(1)$ and $\Theta(n)$ remain the same also for randomized complexity---this follows from \cite{Ghaffari2018} and (unpublished) extensions of the result in \cite{Balliu2018disc}. However, class $\Theta(\log n)$ is more interesting, as there are binary labeling problems of the following types:
\begin{itemize}
    \item Deterministic complexity $\Theta(\log n)$ and randomized complexity $\Theta(\log \log n)$ \\
    (e.g.\ sinkless orientation, Example~\ref{ex:sinkless-orientation}).
    \item Deterministic complexity $\Theta(\log n)$ and randomized complexity $\Theta(\log n)$ \\
    (e.g.\ even orientation, Example~\ref{ex:even-orientation}).
\end{itemize}
That is, randomness helps with some binary labeling problems but not all. While we present a partial classification of the randomized complexity of binary labeling problems, the main open question for the future work is coming up with a complete characterization of exactly which binary labeling problems can be solved in $O(\log \log n)$ rounds with randomized algorithms.

\section{Model of computing}\label{sec:model}

There are two equivalent perspectives that we can use to reason about locality in distributed algorithms:
\begin{enumerate}
    \item Each node sees its radius-$T$ neighborhood in the input graph and uses this information to choose its own part of solution.
    \item The input graph is a computer network and initially each node is only aware of its own input. Nodes can exchange messages with each other for $T$ communication rounds and then they will stop and announce their own part of the solution.
\end{enumerate}
In this work we will use the second approach to formalize the model of computing. Our formalism is equivalent to the usual LOCAL model of distributed computing \cite{Peleg2000,Linial1992}, up to some additive constants in the number of communication rounds.

We will first define deterministic distributed algorithms in the port-numbering model, which is a rather restricted model of computing. Then we will add unique identifiers to arrive at the deterministic LOCAL model, and finally randomness to arrive at the randomized LOCAL model.

\paragraph{Port-numbering model.}

Given a graph $G = (V,E)$, define the set of \emph{ports} by
\[
	P = \bigl\{ (v,i) : v \in V, i \in \{1,2,\dotsc,\deg(v)\} \bigr\}.
\]
We say that $p \colon P \to P$ is a \emph{port numbering} of $G$ if for each edge $\{u,v\} \in E$ there are natural numbers $1 \le i \le \deg(u)$ and $1 \le j \le \deg(v)$ such that $p(u,i) = (v,j)$ and $p(v,j) = (u,i)$. We use e.g.\ the following terminology:
\begin{itemize}[noitemsep]
	\item Port $i$ of node $u$ is connected to port $j$ of node $v$.
	\item Port $i$ of node $u$ is connected to node $v$.
	\item Node $v$ is the $i$th neighbor of node $u$.
\end{itemize}
A \emph{port-numbered network} is a graph $G$ together with a port numbering $p$. We can interpret a port-numbered network as a computer network in which a node $v$ is a computer with $\deg(v)$ communication ports and e.g.\ $p(u,i) = (v,j)$ indicates that if node $u$ sends a message to communication port $i$, it is received by node $v$ from its communication port $j$, and vice versa.

Computation in a port-numbered network $(G,p)$ proceeds as follows:
\begin{itemize}
	\item All nodes \emph{initialize} their local states. The initial state can only depend on the \emph{local input} of the node and on its degree.
	\item Then we repeat \emph{synchronous communication rounds} until all nodes have stopped. In each round:
	\begin{enumerate}
		\item All nodes that have not stopped yet choose what messages to send to each of their communication ports.
		\item Messages are forwarded along the edges to the recipients.
		\item All nodes that have not stopped yet update their local states based on the messages that they received. Optionally, a node may choose to stop and announce its \emph{local output}.
	\end{enumerate}
\end{itemize}
A \emph{distributed algorithm} $\cA$ in this model is simply a collection of three procedures: one for initialization, one for constructing outgoing messages, and one for doing state updates after receiving messages.

\paragraph{White and black algorithms.}

We say that $\cA$ is a \emph{white algorithm} that solves an edge labeling problem $\Pi$ in time $T(n)$ if the following holds:
\begin{itemize}
	\item We can take any graph $G = (V,E)$ and any two-coloring $f\colon V \to \{ \black, \white \}$ of $G$.
	\item We can take any port numbering $p$ of $G$.
	\item If we run $\cA$ in port-numbered network $(G,p)$ with local inputs $f$, all nodes will stop after at most $T(|V|)$ communication rounds and announce their local outputs.
	\item For each white node $u$ the local output specifies the output label for each incident edge (as a vector of $\deg(u)$ output labels from some alphabet $\Sigma$, ordered by port numbers).
	\item For each black node $v$ the local output is empty.
	\item The local outputs of the white nodes constitute a feasible solution for $\Pi$, i.e., they satisfy all white constraints and all black constraints.
\end{itemize}
In particular, if $\Pi$ is a binary labeling problem $\Pi = (d,\delta,W,B)$, we require that the outputs of the white nodes can be interpreted as a subset of edges $X \subseteq E$ such that $X$ satisfies both white and black constraints.

We say that $\Pi$ has \emph{white complexity} $T$ in the port-numbering model if $T$ is the pointwise minimum of all functions $T'$ such that there exists a white algorithm $\cA$ that solves $\Pi$ in time $T'$.

We define a \emph{black algorithm} in analogous manner: black nodes produce output labels for their incident edges and white nodes produce empty outputs. \emph{Black complexity} is the minimum of the running times of all black algorithms.

\begin{observation}\label{obs:white-black-swap}
If we have $\Pi = (d,\delta,W,B)$ and $\Pi' = (\delta,d,B,W)$, then a white algorithm for $\Pi$ can be interpreted as a black algorithm for $\Pi'$ and vice versa.
\end{observation}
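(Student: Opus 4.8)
The plan is to exhibit an explicit, communication-free translation between the two settings, exploiting the symmetry between the defining data of $\Pi$ and $\Pi'$. The key point is that $\Pi' = (\delta,d,B,W)$ is obtained from $\Pi = (d,\delta,W,B)$ precisely by swapping the two degrees and the two constraint sets, so an instance of $\Pi'$ is ``the same as'' an instance of $\Pi$ with the colours of the nodes exchanged, and a white algorithm becomes a black algorithm under the same swap.

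First I would set up the correspondence on instances. Given an instance $(G,f)$ of $\Pi'$, define $f'\colon V \to \{\black,\white\}$ by $f'(v) = \black$ when $f(v) = \white$ and $f'(v) = \white$ when $f(v) = \black$; since $f$ is a proper $2$-colouring of $G$, so is $f'$, so $(G,f')$ is a valid instance of $\Pi$, and this operation is an involution. Under it, the white nodes of $\Pi'$ are exactly the black nodes of $\Pi$ and vice versa; a white node of degree $\delta$ for $\Pi'$ becomes a black node of degree $\delta$ for $\Pi$, and a black node of degree $d$ for $\Pi'$ becomes a white node of degree $d$ for $\Pi$. Next I would check that feasibility is preserved: for any $X \subseteq E$, the condition ``every white degree-$\delta$ node has $\deg_X(v) \in B$'' for $\Pi'$ on $(G,f)$ is literally the condition ``every black degree-$\delta$ node has $\deg_X(v)\in B$'' for $\Pi$ on $(G,f')$, and symmetrically the condition at black degree-$d$ nodes for $\Pi'$ (constraint $W$) matches the condition at white degree-$d$ nodes for $\Pi$ (constraint $W$). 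Hence $X$ is a solution to $\Pi'$ on $(G,f)$ if and only if $X$ is a solution to $\Pi$ on $(G,f')$. Note also that a port numbering of $G$ does not depend on the colouring, so the same $p$ serves both instances.

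Then I would turn a white algorithm $\cA$ for $\Pi$ into a black algorithm for $\Pi'$: run it on $(G,p)$ with local inputs $f$ by having each node internally relabel its own colour via the involution above and then simulate $\cA$ verbatim. This relabelling requires no communication, since every node knows its own colour, so the simulation runs in the same number of rounds. After $T(|V|)$ rounds the $f'$-white nodes — that is, the $f$-black nodes, which are the black nodes of $\Pi'$ — have announced output labels for their incident edges, the $f'$-black nodes — the $f$-white nodes — have announced empty output, and by the previous paragraph the resulting edge set $X$ is a feasible solution for $\Pi'$ on $(G,f)$. This is exactly a black algorithm for $\Pi'$ running in time $T$. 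The converse direction is identical, because the construction is symmetric in $(\Pi,\text{white})$ and $(\Pi',\text{black})$: applying it again returns $\Pi$ and maps ``black'' back to ``white''.

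The argument is essentially bookkeeping; the only point that needs care — and the only place one could slip — is to keep straight that the colouring $f$ is a \emph{local input} rather than part of the global problem specification, so that flipping it is free and, unlike in Observation~\ref{obs:equiv}, no extra $\pm 1$ round is incurred. I do not expect any genuine obstacle here.
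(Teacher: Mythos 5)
Your argument is correct and is exactly the intended reasoning: the paper states this observation without any proof, treating it as immediate from the fact that swapping the two colours turns an instance of $\Pi'$ into an instance of $\Pi$, preserves feasibility of any $X \subseteq E$, and requires no communication since each node knows its own colour. Your closing remark is also on point --- the reinterpretation is free of the $\pm 1$ round cost that appears in Observation~\ref{obs:equiv}, because here one compares a \emph{white} algorithm for $\Pi$ to a \emph{black} algorithm for $\Pi'$ rather than white to white.
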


\begin{observation}\label{obs:white-black-algorithm}
Black complexity of problem $\Pi$ and white complexity of the same problem $\Pi$ differ by at most $1$ round.
\end{observation}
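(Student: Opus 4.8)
The plan is to prove the two one-sided inequalities separately: the white complexity of $\Pi = (d,\delta,W,B)$ is at most its black complexity plus one, and symmetrically the black complexity is at most the white complexity plus one. Together these give $|T_{\white}(\Pi) - T_{\black}(\Pi)| \le 1$, which is the claim. Both directions use the same trivial device — run the given algorithm and then spend one extra communication round so that the other colour class learns the labels on its incident edges.

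Concretely, for the direction ``a black algorithm yields a white algorithm'', I would start from a black algorithm $\cA$ solving $\Pi$ in $T = T(n)$ rounds. We may assume every node stays active for exactly $T$ rounds (a node that $\cA$ would let halt earlier simply keeps its state and sends nothing), so that at the end of round $T$ every black node $v$ has computed the output labels on all edges incident to $v$, i.e.\ the restriction of the produced solution $X$ to the star around $v$. Append one more round in which each black node sends along each of its ports the label it assigned to the corresponding edge. Since $G$ is properly $2$-coloured, every edge incident to a white node $u$ is also incident to a black node; hence after this extra round $u$ has learned the label of each of its incident edges, and because incoming messages are tagged by the receiving port, $u$ can list them as a vector ordered by its own port numbers. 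Letting each white node announce this vector and each black node announce the empty output gives a white algorithm that produces exactly the same edge set $X$, so it still satisfies all white and all black constraints; this shows $T_{\white}(\Pi) \le T_{\black}(\Pi) + 1$. The reverse direction is word-for-word symmetric: from a white algorithm running in $T$ rounds, run it for $T$ rounds and then let each white node report to each incident edge the chosen label; every edge at a black node touches a white node, so each black node can assemble its output vector in round $T+1$, giving $T_{\black}(\Pi) \le T_{\white}(\Pi) + 1$. Combining the two bounds proves the observation.

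I do not expect a real obstacle here; the one point deserving a word of care is the synchronization used to attach the extra round, since in the raw port-numbering formulation different nodes may in principle halt at different times. One therefore first normalizes the given algorithm so that all nodes are still active for precisely $T(n)$ rounds before any output is produced, and only then performs the final exchange; equivalently, one may view the new algorithm as ``gather the radius-$(T+1)$ neighbourhood and simulate'', which sidesteps the issue entirely. This normalization is standard and costs nothing asymptotically (and nothing at all in the LOCAL model, where $n$ or an upper bound on it is available to the nodes). Everything else is bookkeeping about port numbers, and since the construction never alters the produced solution $X$, feasibility is immediate from the hypothesis.
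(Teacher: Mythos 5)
Your proof is correct and follows essentially the same route as the paper: run the given algorithm for $T$ rounds and spend one additional round to have the nodes of one colour inform their neighbours of the other colour about the final edge labels, in both directions. The extra care you take about synchronizing halting times and ordering the received labels by port number is a reasonable elaboration of details the paper leaves implicit.
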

\begin{proof}
Given a white algorithm $\cA$ that solves $\Pi$ in time $T$, we can construct a black algorithm $\cA'$ that solves $\Pi$ in $T+1$ rounds: run $\cA$ for $T$ rounds, and then use $1$ round to have all white nodes inform their black neighbors about the final output. Conversely, given a white algorithm $\cA$ that solves $\Pi$ in time $T$, we can construct a white algorithm $\cA'$ that solves $\Pi$ in $T+1$ rounds.
\end{proof}

\paragraph{Deterministic LOCAL model.}

In the LOCAL model we augment the port-numbering model with unique identifiers. We assume that if there are $n$ nodes in the graph, each node is labeled with a unique integer from $\{1,2,\dotsc,\poly(n)\}$. This is part of the local input, and a distributed algorithm can use it when it initializes the local state of a node. Other than that, computation proceeds exactly in the same way as in the port-numbering model.

We say that $\cA$ is a deterministic white algorithm in the LOCAL model for problem $\Pi$ if it solves $\Pi$ correctly for any assignment of unique identifiers. We define black algorithms, deterministic white complexity in the LOCAL model, and deterministic black complexity in the LOCAL model in the analogous way.

For brevity, when we refer to a (deterministic) algorithm or (deterministic) complexity, we mean deterministic white algorithms in the LOCAL model.

\paragraph{Randomized algorithms.}

Finally, we can let state transitions be probabilistic, and we arrive at randomized algorithms. We can define e.g.\ randomized white algorithms in the LOCAL model in the natural manner. In this work, randomized algorithms are Monte Carlo algorithms that are correct w.h.p.\ in the size of the graph.

For brevity, when we refer to a randomized algorithm or randomized complexity, we mean randomized white algorithms in the LOCAL model.

\paragraph{Simulations.}

As discussed in Section~\ref{ssec:case-delta2}, it is often convenient to e.g.\ interpret a graph $G_0$ as a bipartite graph $G$ by subdividing edges. Now if we have an algorithm $\cA$ that we can apply in $G$, we can easily simulate $\cA$ in $G_0$.

There are some technical details that we need to take into account. A port numbering of $G$ induces a port numbering of $G_0$. However, a port numbering of $G_0$ does not directly give port numbers for the black nodes of $G$. However, we are usually interested in algorithms for the LOCAL model of computing, and we can use the unique identifiers to choose the port numbers for all nodes of $G$ (e.g., port $1$ points towards the neighbor with the smallest identifier).

Each node of the (virtual) network $G$ has to be simulated by some node of the (physical) network $G_0$. In particular, each black node has to be simulated by one of its white neighbors. We can use the port numbers or unique identifiers to choose who simulates what.

Finally, for the purposes of lower bounds we can also go in the other direction and take an algorithm for $G_0$ and simulate it in graph $G$. This increases the round complexity by a factor of two, which we can ignore as we are only interested in asymptotics.

\section{Unsolvable problems}\label{sec:unsol}

In this section we prove that any binary labeling problem contained in the following families is unsolvable:
\begin{itemize}[noitemsep]
\newcommand{\fIbox}[1]{\makebox[\widthof{$\fIa$:}]{#1}\,}
\newcommand{\fIIbox}[1]{\makebox[\widthof{$\fIIa$:}]{#1}\,}
\item \fIbox{$\fIa$:} $W = \p\n\nnn$, $B = \n\x\xxx$,
\item \fIbox{$\fIb$:} $W = \n\nnn\p$, $B = \x\xxx\n$,
\item \fIbox{$\fIc$:} $W = \n\x\xxx$, $B = \p\n\nnn$,
\item \fIbox{$\fId$:} $W = \x\xxx\n$, $B = \n\nnn\p$,
\medskip
\item \fIIbox{$\fIIa$:} $W = \n\n\nnn$, $B = \x\x\xxx$,
\item \fIIbox{$\fIIb$:} $W = \x\x\xxx$, $B = \n\n\nnn$.
\end{itemize}
This completes the classification in the first section of Table~\ref{tab:deterministic}, i.e., the classification of type \tI{} and \tII. We refer to these problems as \emph{unsolvable} problems but we emphasize that this section does not contain a proof that there are no other unsolvable problems. We will only learn this later as Sections \ref{sec:constant}--\ref{sec:log-upper} will show that all problems of type \tIII--\tVII{} are solvable.

\begin{theorem}\label{UnsolvableProblemsthm}
All binary LCL problems that are contained in any of the above families cannot be solved on any  tree meeting the degree requirements of the respective problem. 
\end{theorem}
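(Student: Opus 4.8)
The plan is to take an \emph{arbitrary} feasible solution $X$ on an \emph{arbitrary} tree $T$ meeting the degree requirements and derive a contradiction, rather than constructing any special instance. First I would cut the work down using the symmetries already established. By Observation~\ref{obs:equiv} the four families $\fIa,\fIb,\fIc,\fId$ form a single equivalence class and $\fIIa,\fIIb$ form another; since swapping the roles of black and white nodes and complementing $X$ both turn a solution on a qualifying tree into a solution on a qualifying tree, (un)solvability is preserved across each class. Hence it suffices to prove the statement for one representative of each type, say $\fIa$ and $\fIIa$, and the rest follow by the equivalence.

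For type II, the argument is purely local and needs no structure beyond the degree requirement itself. A problem in $\fIIa$ has $W = \n\nnn\n$, i.e.\ $W = \emptyset$. Thus for \emph{every} white node $v$ with $\deg(v) = d$ there is \emph{no} admissible value of $\deg_X(v)$ whatsoever, so no choice of $X$ can satisfy $v$. Since a tree meeting the degree requirements contains at least one such node, no feasible $X$ exists on it, and this reasoning applies verbatim to every tree in the class, which is exactly the universal statement.

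For type I, take $\Pi \in \fIa$, so $W = \p\nnn\n$ (that is $W = \{0\}$) and $B = \n\xxx\x$ (that is $0 \notin B$), and suppose $X$ is feasible on a qualifying tree $T$. The constraints then say: every degree-$d$ white node has $\deg_X = 0$, and every degree-$\delta$ black node has $\deg_X \ge 1$. The key structural input is that a tree meeting the degree requirements contains a degree-$\delta$ black node $b$ all of whose $\delta$ neighbors are degree-$d$ white nodes; in a biregular tree this is just any black node lying deep enough in the interior that its entire closed neighborhood avoids the leaves. Given such a $b$, feasibility forces $\deg_X(b) \ge 1$, so some incident edge $\{b,w\} \in X$; but $w$ is one of the degree-$d$ white neighbors, whence $\deg_X(w) \ge 1$, contradicting $\deg_X(w) = 0$. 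Since $X$ was an arbitrary solution and $b$ is produced from $T$ itself, $T$ admits no solution at all.

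The main obstacle is precisely isolating and justifying that structural input, and it is what distinguishes the genuinely universal statement from the weaker ``some instance is infeasible.'' It is \emph{not} enough for $T$ merely to contain one degree-$d$ white node and one degree-$\delta$ black node: one can attach unconstrained low-degree (leaf) neighbors to every degree-$\delta$ black node and satisfy each of them through such a neighbor, yielding a feasible $X$ on a tree that does contain constrained nodes of both colors. The contradiction really requires a black node whose \emph{entire} neighborhood consists of degree-$d$ white nodes, so the heart of the proof is verifying that the degree requirements (biregularity of the interior together with sufficient depth) guarantee that every qualifying tree contains such a fully surrounded node; the symmetric surrounded-white-node fact then covers $\fIc$ and $\fId$ through the equivalence of Observation~\ref{obs:equiv}.
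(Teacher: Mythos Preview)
Your argument is essentially the paper's: reduce to one representative per type via the symmetries, then derive a local contradiction in which a constrained node of one color forces an edge into a constrained neighbor of the other color (and for type~\tII{} simply observe that an empty constraint set kills any node of the right degree). Two minor differences are worth noting. First, the paper reduces not only by equivalence (Observation~\ref{obs:equiv}) but also by the restriction ordering, passing to the single \emph{easiest} problem in one family (for type~\tI{} it takes $W=\n\p^{d}$, $B=\p\n^{\delta}$ in $\fIc$) before arguing; your direct use of equivalence alone is equally valid. Second, your scrupulousness about needing a fully surrounded constrained node goes beyond the paper: the paper simply writes ``consider some white node $v$ \dots\ the black node $u$'' and tacitly treats both as having the required degrees, absorbing the issue you flag into the informal hypothesis ``tree meeting the degree requirements'' rather than isolating and justifying it as you propose to do.
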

\begin{proof}
The  proof will consider two cases. 
\begin{itemize}
    \item \textbf{\boldmath Families  $\fIa,\fIb,\fIc,\fId$:}   Fix problem $\Pi=(d,\delta,W,B)\in\fIc$ with $W=\n\p^{d}$, $B=\p\n^{\delta}$. Then $\Pi$ is \emph{easier} than any problem $\Pi'$ in the family $\fIc$, i.e., $\Pi'\subseteq \Pi$; furthermore any problem in family $\fIa$, $\fIb$ or $\fId$ is equivalent (as defined in Subsection \ref{sssec:equivalence}) to some problem in $\fIc$; thus $\Pi$ is the \emph{easiest} problem in the considered families and if $\Pi$ is unsolvable any problem in all the families is unsolvable.
		
We now show that $\Pi$ is unsolvable. Given some tree $G=(V,E)$ that meets the degree requirements of $\Pi$, assume there is a valid solution to $\Pi$ in $G$ and consider some white node $v\in V$. Since white nodes are not allowed to have their incident edges labeled all $0$, there must be an incident edge labeled $1$, that we denote by $e=\{v,u\}$. However, this means that the black node $u$ has the incident edge $e$ labeled with $1$. This contradicts our assumption of having a valid solution to $\Pi$, since black nodes are only allowed to label all of their incident edges with $0$. Thus $\Pi$ is unsolvable and the proof is complete.
    
    \item \textbf{\boldmath Families $\fIIa,\fIIb$:} Fix problem $\Pi=(d,\delta,W,B)\in\fIIb$ with $W=\p^{d+1}$, $B=\n^{\delta+1}$. Similar to the reasoning in the previous case it is sufficient to show that $\Pi$ is unsolvable to deduce that any problem in the two families is unsolvable. 
		 Given some tree $G=(V,E)$, assume that $\Pi$ can be solved on $G$ and consider some black node $v\in V$. Denote by $i$ the number of edges incident to $v$ that are labeled with $1$. Clearly we have $0\leq i\leq d$. However, no matter the value of $i$, black nodes are not allowed $i$ incident edges labeled with $1$ in their solution, thus we arrive at a contradiction. Thus $\Pi$ is unsolvable and the proof is complete. 
\qedhere
\end{itemize}
\end{proof}

\section{Constant-time problems}\label{sec:constant}

In this section we will prove that any binary labeling problem in the following families can be solved in constant time:
\begin{itemize}[noitemsep]
\newcommand{\fIIIbox}[1]{\makebox[\widthof{$\fIIIa$:}]{#1}\,}
\newcommand{\fIVbox}[1]{\makebox[\widthof{$\fIVa$:}]{#1}\,}
\item \fIIIbox{$\fIIIa$:} $W \ne \n\n\nnn$, $B = \p\p\ppp$,
\item \fIIIbox{$\fIIIb$:} $W = \p\p\ppp$, $B \ne \n\n\nnn$,
\medskip
\item \fIVbox{$\fIVa$:} $W = \p\x\xxx$, $B = \p\x\xxx$,
\item \fIVbox{$\fIVb$:} $W = \x\xxx\p$, $B = \x\xxx\p$.
\end{itemize}
This completes the classification in the second section of Table~\ref{tab:deterministic}, i.e., the classification of types \tIII{} and \tIV.  We refer to these problems as \emph{trivial} problems but we emphasize that this section does not contain a proof that there are no other trivial problems, i.e., that there are not other problems solvable in time $O(1)$. We will only learn this later as Section \ref{sec:unsol} showed that problems of type \tI{} and \tII{} are unsolvable and Section \ref{sec:global} and Section \ref{sec:log-lower} will show that all problems of type \tV--\tVII{} have $\omega(1)$ lower bounds.

\begin{theorem}
All binary LCL problems that are contained in any of the above families can be solved in constant time on any  tree meeting the degree requirements of the respective problem. 
\end{theorem}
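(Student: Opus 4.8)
The plan is to exhibit, for each of the four families, an explicit $O(1)$-round (in fact $0$-round) algorithm, and then to reduce $\fIIIb$ to $\fIIIa$ via the equivalence of Observation~\ref{obs:equiv}. The content of the theorem is precisely that such trivial strategies exist; there is no deep step.

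First I would dispose of type $\tIV$, which is the cleanest. For $\fIVa$ we have $0 \in W$ and $0 \in B$, so I claim $X = \emptyset$ is always feasible: every node then has $X$-degree $0$, and $0$ lies in both constraint sets, so every constrained white node and every constrained black node is satisfied (degree-irregular nodes impose no constraint at all). Dually, for $\fIVb$ we have $d \in W$ and $\delta \in B$, so $X = E$ is feasible: a white node of degree exactly $d$ gets $X$-degree $d \in W$, a black node of degree exactly $\delta$ gets $X$-degree $\delta \in B$, and all other nodes are unconstrained. In the white-algorithm formalism each white node simply outputs the all-$\n$ vector (resp.\ the all-$\p$ vector) on its incident edges, with no communication whatsoever.

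Next I would handle $\fIIIa$. Here $B = \p\ppp\p$, so the black constraint is vacuous (every $X$-degree at a black node is permitted), and $W \ne \n\nnn\n$, so we may fix once and for all some $x \in W$; note $0 \le x \le d$ since $W \subseteq \{0,\dots,d\}$. The algorithm: every white node of degree exactly $d$ outputs $\p$ on the $x$ incident edges with the smallest port numbers and $\n$ on the rest, and every white node of degree $\ne d$ outputs $\n$ everywhere. Since the input graph is bipartite, the white nodes form an independent set, so no edge receives two conflicting labels and this well-defines a set $X$; by construction every constrained white node has $X$-degree $x \in W$, and every black node is automatically fine. This is again a $0$-round white algorithm.

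Finally, for $\fIIIb$ I would avoid repeating the argument and instead observe that $\fIIIb$ is obtained from $\fIIIa$ by exchanging the roles of white and black: if $\Pi = (d,\delta,\p\ppp\p,B) \in \fIIIb$, then $\Pi_{01} = (\delta,d,B,\p\ppp\p) \in \fIIIa$, and by Observation~\ref{obs:equiv} these have the same distributed complexity up to $\pm 1$ round; hence $\Pi$ is solvable in $O(1)$ rounds as well (equivalently, run the $\fIIIa$-style strategy as a black algorithm and spend one extra round to inform white nodes, cf.\ Observation~\ref{obs:white-black-algorithm}). The only point requiring any care — and this is the closest thing to an obstacle — is the bookkeeping around the model: checking that the outputs above are legitimate port-numbering outputs, that degree-irregular nodes never violate a constraint, and that the $\pm 1$-round loss in the $\fIIIb$ reduction is harmless for an $O(1)$ bound. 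None of these is substantive.
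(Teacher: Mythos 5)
Your proposal is correct and follows essentially the same route as the paper's proof: label all edges $0$ (resp.\ $1$) for $\fIVa$ (resp.\ $\fIVb$), have each constrained white node pick some $x \in W$ incident edges for $\fIIIa$, and reduce $\fIIIb$ to $\fIIIa$ by the black/white equivalence of Observation~\ref{obs:equiv}. The extra bookkeeping you note (unconstrained irregular nodes, the $\pm 1$-round loss) is handled implicitly in the paper and does not change the argument.
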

\begin{proof}
The proof considers two cases.
\begin{itemize}
    \item \textbf{\boldmath Families $\fIIIa, \fIIIb$:} Any problem $\Pi'\in\fIIIb$ has an equivalent problem in $\fIIIa$. Thus it is sufficient to show that any problem in $\fIIIa$ can be solved in constant time.
    Let $\Pi\in\fIIIa$ be a problem with white degree $d$. Let $G=(V,E)$ be some input tree. We know that for $\Pi$, it holds that $W\neq \n^{d}$. Thus there exists $0\leq i\leq d$ such that $w_i=\p$. Thus $\Pi$ allows white nodes to label $i$ of their incident edges with $1$. 
    Now, each white node arbitrarily picks $i$ incident edges and labels them with $1$ and labels all other of its incident edges with $0$. Thus the labeling of the incident edges around any white node is according to what $W$ allows. The labeling is also feasible for every black node as $B = \p\p\ppp$ says that any labeling is allowed for black nodes. Clearly this takes constant time. 		
		
    \item \textbf{\boldmath Families $\fIVa, \fIVb$:} Any problem in $\fIVa$ can be solved by labeling all edges with $0$ and any problem in $\fIVb$ can be solved by labeling all edges with $1$. Clearly these are constant time algorithms. 
\qedhere
\end{itemize}
\end{proof}

\section{Global problems}\label{sec:global}

In this section we will prove that any binary labeling problem in the following families can be solved in $O(n)$ rounds and they also require $\Omega(n)$ rounds:
\begin{itemize}[noitemsep]
\newcommand{\fVbox}[1]{\makebox[\widthof{$\fVa$:}]{#1}\,}
\newcommand{\fVIbox}[1]{\makebox[\widthof{$\fVIa$:}]{#1}\,}
\item \fVbox{$\fVa$:} $W = \p\nnn\p$, $B = \n\p\n$,
\item \fVbox{$\fVb$:} $W = \n\p\n$, $B = \p\nnn\p$,
\medskip
\item \fVIbox{$\fVIa$:} $W = \nnn\p\x$, $B = \x\p\nnn$,
\item \fVIbox{$\fVIb$:} $W = \x\p\nnn$, $B = \nnn\p\x$.
\end{itemize}
This completes the classification in the third section of Table~\ref{tab:deterministic}, i.e., the classification of types \tV{} and \tVI. We refer to these problems as \emph{global} problems but we emphasize that this section does not contain a proof that there are no other global problems. We only learn this because previous sections showed that problems of type \tI--\tIV{} are either unsolvable or trivial and Section \ref{sec:log-upper} will show that problems of \tVII{} can be solved in $O(\log n)$ rounds.

We begin with showing that $\fVa$ and $\fVb$ have complexity $\Theta(n)$
\begin{theorem}
   $\fVa$ and $\fVb$ have complexity $\Theta(n)$.
   \end{theorem}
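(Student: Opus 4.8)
The plan is to reduce the whole statement to a single representative problem and then prove matching $O(n)$ and $\Omega(n)$ bounds for it. First, by Observation~\ref{obs:equiv} every problem in family $\fVb$ is equivalent to a problem in $\fVa$ (apply the transformation $\Pi\mapsto\Pi_{01}$ that swaps the two colours), and equivalence preserves complexity up to $\pm 1$ round; so it suffices to handle an arbitrary $\Pi=(d,2,W,B)\in\fVa$, i.e.\ one with $W=\{0,d\}$ and $B=\{1\}$. As explained in Section~\ref{ssec:case-delta2}, such a $\Pi$ is an orientation problem: $B=\n\p\n$ forces every original edge to be oriented, and $W=\{0,d\}$ says that every original node of degree $d$ must be a pure source (indegree $0$) or a pure sink (indegree $d$). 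This is exactly the two-colouring problem of Example~\ref{ex:two-coloring}, restricted to the degree-$d$ part of the tree.

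For the $O(n)$ upper bound I would first observe that a solution always exists: the degree-$d$ nodes induce a forest, which is bipartite, so we may two-colour them into ``sources'' and ``sinks''; orienting each edge joining two degree-$d$ nodes from the source to the sink, and orienting every remaining edge arbitrarily in a way consistent with its pure-source/pure-sink endpoints, yields a valid solution (every other node is unconstrained). Given that a solution exists, the brute-force algorithm is correct: in $O(n)$ rounds every node learns the entire tree (its diameter is below $n$) together with all identifiers, and outputs its part of a fixed canonical valid solution, e.g.\ the lexicographically smallest one.

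The substance of the theorem is the $\Omega(n)$ lower bound, which I would establish by an indistinguishability argument on caterpillar trees. Consider the tree $C_\ell$ whose spine is a path $b_0\,w_1\,b_1\,w_2\,b_2\cdots w_\ell\,b_\ell$ where $b_0,b_\ell$ are leaves, each other black spine node $b_i$ has degree exactly $2$, and each white spine node $w_i$ carries $d-2$ pendant black leaves so that it has degree exactly $d$; then $C_\ell$ meets the degree requirements and has $n=\Theta(\ell)$ nodes for fixed $d$. The first step is a \emph{rigidity} claim: since $\deg_X(b_i)=1$ and $\deg_X(w_i)\in\{0,d\}$, a one-hop argument shows that along the spine the white nodes must alternate between ``all incident half-edges in $X$'' and ``no incident half-edge in $X$''; hence a solution on $C_\ell$ is one of only two patterns, determined by the status of $w_1$, and in particular two spine white nodes $w_a$ and $w_b$ receive the same status exactly when $b-a$ is even. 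Now suppose an algorithm $\cA$ solved $\Pi$ in $T=T(n)$ rounds. I would take $\ell=\Theta(T)$, pick reference white spine nodes $w_a,w_b$ with $|a-b|>T$ lying deep in the interior, and compare $C_\ell$ with the tree $C'_\ell$ obtained by inserting one more white spine node (with its black neighbour and pendants) strictly between $w_a$ and $w_b$ at spine distance more than $T$ from both. Assigning identifiers so that the radius-$T$ neighbourhoods of $w_a$ and of $w_b$ are mutually disjoint and literally isomorphic across the two instances, $\cA$ must output the same status for $w_a$ in both and the same status for $w_b$ in both; hence the predicate ``$w_a$ and $w_b$ have equal status'' takes the same value on both instances. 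But the spine distance from $w_a$ to $w_b$ has opposite parities in $C_\ell$ and $C'_\ell$, so rigidity forces that predicate to flip. Thus $\cA$ errs on at least one instance, a contradiction; and since the construction uses only $n=\Theta(T)$ nodes, this gives $T(n)=\Omega(n)$.

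I expect the main obstacle to be the routine but finicky bookkeeping in the last step: choosing $\ell$ and the positions of $w_a$, $w_b$, and the inserted segment so that the two radius-$T$ balls are disjoint and both avoid the modified region, checking that identifiers can indeed be assigned to make those balls isomorphic in the two instances (which is easy, since the relevant pieces of $C_\ell$ and $C'_\ell$ are isomorphic), and relating the number of nodes to $T$ tightly enough to extract $\Omega(n)$ on the nose rather than merely along a subsequence. The rigidity claim, the upper bound, and the $\fVb\to\fVa$ reduction are all straightforward.
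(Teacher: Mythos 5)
Your proposal follows essentially the same route as the paper: reduce $\fVb$ to $\fVa$ via Observation~\ref{obs:equiv}, read $W=\p\nnn\p$, $B=\n\p\n$ as a source/sink orientation problem so that adjacent degree-$d$ nodes must alternate (i.e., the degree-$d$ nodes must be properly $2$-colored), build a caterpillar whose spine carries $\Omega(n)$ constrained nodes, and conclude that the problem is at least as hard as $2$-coloring a path; the upper bound is the same ``a solution always exists, so brute force in $O(n)$'' argument. The one place you go further is that the paper simply invokes the path $2$-coloring lower bound as well known, whereas you prove it via indistinguishability --- and your version of that step has a fixable flaw: you compare $C_\ell$ with a tree $C'_\ell$ obtained by \emph{inserting} a spine segment, so the two instances have different numbers of nodes, and a LOCAL algorithm (whose behaviour may depend on $n$, e.g.\ through the running time $T(n)$ or the identifier range) is not obliged to answer identically on isomorphic radius-$T$ views taken from graphs of different sizes. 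The standard repairs are either to keep $n$ fixed by simultaneously deleting a matching segment elsewhere at distance more than $T$ from both reference nodes (this still flips the parity of the $w_a$--$w_b$ distance), or to use the mirroring trick that the paper itself employs in the proof of Theorem~\ref{rootedtreeglobal} for the type~\tVI{} problems, which replaces a subpath by its reflection without changing the node count. With that adjustment your argument is complete and is, if anything, more self-contained than the paper's.
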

   \begin{proof}
  $\fVb$ is equivalent to $\fVa$ thus we only need to show that $\fVa$ has complexity $\Theta(n)$ to prove the theorem. Recall that $\fVa=(d,2,W,B)$ with $W=\p\nnn\p$ and $B=\n\p\n$. As explained in Section~\ref{ssec:case-delta2}, since the black degree $\delta$ equals $2$, we can treat the problem as an orientation problem on a given $d$-regular tree $G=(V,E)$ in which all nodes are white. All nodes have to satisfy the constraint string $W=\p\nnn\p$, thus, for all nodes $v$, the resulting orientation of the edges must satisfy either $\indegree(v)=0$ or $\indegree(v)=d$, where $\indegree(v)$ is the number of edges incident to $v$ which are oriented towards $v$. Now consider two adjacent nodes of degree $d$ in $G$, denoted by $v,u$. A key observation is that in any valid orientation of the edges, it holds that $\indegree(v)=0$ implies $\indegree(u)=d$ and vice versa. Now, if we treat $\indegree(v)=d$ and $\indegree(v)=0$ as two colors, we can deduce that the subgraph of $G$ induced by nodes of degree $d$ has to be properly colored with $2$ colors in any valid solution to $\fVa$. 
      We use this observation to show that the $\fVa$ has complexity $\Omega(n)$. 
     Consider the tree obtained by starting from a path $P$ and connecting to each node $d-2$ new nodes. Also, connect an additional node to each endpoint of the path to make the tree $d$-regular. In a tree of $n$ nodes, this path $P$ has $\Omega(n)$ nodes. Note that each node in the path has degree $d$ and thus must satisfy the constraint of $W$.  Thus, in any valid solution for $\fVa$, the nodes of $P$ must be properly $2$-colored.     
      Thus any algorithm $A$ solving $\fVa$ in $T$ rounds would be able to solve the problem of $2$-coloring a path properly in $O(T)$ rounds by simply letting each node in a given path simulate $A$ with some additional virtual nodes connected to it as leaves. It is well known that any algorithm solving 2-coloring on a path has complexity $\Omega(n)$. Thus we can deduce that any algorithm solving $\fVa$ has complexity $\Omega(n)$, as required. 
      
      To show that $\fVa$ can also be solved in $O(n)$ rounds it is sufficient that $\fVa$ is solvable on any $d$-regular tree. Any tree is $2$-colorable and orienting all edges from one color family to the other solves the problem. 
   \end{proof}   

As any problem in $\fVIb$ is equivalent to a problem in $\fVIa$ we will only show that the easiest problem in $\fVIb$ has complexity $\Omega(n)$ to prove the following theorem. 
\begin{theorem}\label{rootedtreeglobal}
Any problem in family $\fVIa$ or $\fVIb$ has complexity $\Omega(n)$.
\end{theorem}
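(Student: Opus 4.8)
The plan is to reduce, for each white degree $d$ and black degree $\delta$, to the single ``easiest'' problem of family $\fVIb$ with those parameters, namely $\Pi^\ast = (d,\delta,W^\ast,B^\ast)$ with $W^\ast = \p\p\nnn$ (that is, $W^\ast=\{0,1\}$) and $B^\ast = \nnn\p\p$ (that is, $B^\ast=\{\delta-1,\delta\}$). By Observation~\ref{obs:equiv} every problem of $\fVIa$ is equivalent to a problem of $\fVIb$, and by Observation~\ref{obs:restriction} every problem of $\fVIb$ with the given degrees is a restriction of $\Pi^\ast$; hence an $\Omega(n)$ lower bound for every such $\Pi^\ast$ implies the theorem.

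The heart of the argument is a \emph{rigidity} claim for $\Pi^\ast$. Writing $Y=E\setminus X$ for the complement of a candidate solution, the constraints of $\Pi^\ast$ say exactly that every regular white node has $\deg_Y(v)\ge d-1$ and every regular black node has $\deg_Y(v)\le 1$. I would show, by a propagation argument along the tree, that fixing the membership in $X$ of a single edge incident to a regular node forces the membership of all neighbouring edges incident to regular nodes; iterating, every feasible solution is obtained by selecting one distinguished vertex $r$ and ``orienting everything toward $r$'' --- each regular white node spends its single permitted $X$-edge on the edge toward $r$, each regular black node spends its single permitted $Y$-edge on the edge toward $r$ --- up to a bounded amount of slack coming only from pendant edges and from the ``doubled'' option $\deg_X(v)=\delta$ at black nodes. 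Thus a feasible solution encodes a single one-dimensional ``which way is $r$'' signal.

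Given this, I would finish in the style of the proof for $\fVa$ in Section~\ref{sec:global}. Take a long alternating path $w_1,x_1,w_2,x_2,\dots$ (white $w_i$, black $x_i$) and attach enough pendant leaves to each $w_i$ and each $x_i$ to bring them to degrees $d$ and $\delta$ respectively, obtaining a tree on $n$ vertices whose spine contains $\Omega(n)$ regular nodes. By the rigidity claim, any feasible solution of $\Pi^\ast$ on such a tree restricts on the spine to an orientation of the spine path pointing away from a single ``source'' edge (up to the bounded slack above), i.e.\ it pins down one distinguished location on the spine. Consequently an algorithm solving $\Pi^\ast$ in $T(n)$ rounds yields, after $O(1)$ extra rounds, an algorithm for an $\Omega(n)$-hard task on a path of length $\Omega(n)$: finding such a consistent orientation is a threshold-location problem (the only feasible orientations of the bare spine being ``all edges point left'', ``all edges point right'', or a single source), hard by the same argument as for $2$-colouring a path used for $\fVa$. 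Hence $T(n)=\Omega(n)$.

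The main obstacle is the rigidity claim, and specifically choosing the pendant gadget attached to each spine node strongly enough that the two genuine degrees of freedom of $\Pi^\ast$ --- pendant edges incident to spine nodes, and the ``doubled'' option $\deg_X(v)=\delta$ permitted by $B^\ast=\{\delta-1,\delta\}$ --- cannot be used to decouple a spine node's two spine edges and so break the forced pattern. This needs a careful design of the gadgets together with a short case analysis over the small values of $d$ and $\delta$ and over where the distinguished vertex may sit (including the case where it lies inside a pendant gadget). The remaining steps, namely the reduction bookkeeping and the $\Omega(n)$ bound for the path task, are routine and parallel the treatment of $\fVa$.
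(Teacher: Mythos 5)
Your proposal follows essentially the same route as the paper: relax to the extremal problem $W=\{0,1\}$, $B=\{\delta-1,\delta\}$ via Observations~\ref{obs:equiv} and~\ref{obs:restriction}, observe that feasible solutions are exactly orientations in which every regular node has indegree $0$ or $1$ (hence trees oriented from a single distinguished root), and transfer an $\Omega(n)$ path lower bound through the pendant-leaf construction. The only place you overcomplicate matters is the rigidity claim, which needs no gadget engineering or case analysis over $d$ and $\delta$: in any feasible solution at most one vertex of the \emph{entire} tree (pendants included) can have all edges outgoing, since two such vertices would force a node on the path between them to have two incoming edges, and this single global fact already yields the ``almost oriented path'' structure on the spine; the remaining $\Omega(n)$ bound for that path problem is a mirroring indistinguishability argument (as in the paper) rather than literally the parity argument used for $2$-colouring.
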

\begin{proof}
Any problem in $\fVIa$ is equivalent to a problem in $\fVIb$. Thus it is sufficient to show that the \emph{easiest} problem in $\fVIb$ has complexity $\Omega(n)$ to show that all problems in $\fVIb$ and $\fVIa$ have complexity $\Omega(n)$. 
Let $\Pi=(d,\delta,W,B)\in\fVIb$ where  $W = \p\p\nnn$, $B = \nnn\p\p$. Problem $\Pi$ is the easiest problem in $\fVIb$ as $\Pi'\subseteq \Pi$ for any problem $\Pi'\in\fVIb$. We show that $\Pi$ has complexity $\Omega(n)$.

  Let $X \subseteq E$ be the set of marked edges. The set $X$ defines an orientation of the edges, given by orienting edges in $X$ from black to white nodes, and edges in $E\setminus X$ from white to black nodes. Similarly, an orientation defines an assignment for $\Pi$. It is easy to see that $\Pi$ requires to orient edges such that nodes have either 0 or 1 incoming edges. 
  
  We now prove that there can be at most a single vertex in the graph having all edges oriented outgoing, implying that $\Pi$ is equivalent to orienting a tree from some root, namely, to the \emph{oriented tree} problem. Assume towards a contradiction that there are two nodes $u,v$ having all outgoing edges. Denote by $P$ the unique path between $u,v$ in the tree. Starting from $u$, denote by $w$ the first vertex on $P$ such that the edges incident to $w$ that are part of $P$ are not both oriented towards $v$. Such a vertex must exist, because the incident edge of $v$ on $P$ is not directed towards $v$, as all its edges are oriented outgoing. Since $w$ is the first node satisfying the condition, both the incident edges of $w$ on $P$ are oriented towards $w$, thus $w$ has two incoming edges, giving a contradiction.

   All that is left to do, is to prove that the \emph{oriented tree} problem is indeed global. We reduce another problem, namely the \emph{almost oriented path} problem, to the oriented tree problem, and then we show that the almost oriented path problem is hard. The almost oriented path problem requires to orient the edges of a properly $2$-colored path such that at most one node of degree $2$ has all outgoing edges, and all other nodes of degree $2$ have exactly one outgoing edge (nodes of degree $1$ are unconstrained). 
   
   Now, let us show that an algorithm for the oriented tree problem can be used to solve the almost oriented path problem. In particular, we show how to locally turn an instance of one problem to the other. Start from a properly $2$-colored path, and connect to each white node $d-2$ new black nodes, and to each black node $\delta-2$ new white nodes. Then, connect an additional node to each endpoint of the path, preserving a proper $2$-coloring. Clearly, a solution for the oriented tree problem for this instance can be directly mapped to a solution for the almost oriented path problem on the original instance.
   
   We now prove that the almost oriented path problem requires $\Omega(n)$ time. Assume by contradiction that there is a $T=o(n)$ algorithm $A$ that solves the problem, and consider an instance $P$ of size $n$ such that $T(n) < n/100$. Let $u$, $v$ be the endpoints of the path. Depending on the output of $A$, we can give \emph{types} to nodes. A node is of type $L$ if both its incident edges are oriented towards $u$. A node is of type $R$ if both its incident edges are oriented towards $v$. Otherwise (if both edges are outgoing), it is of type $C$. 
   Consider four nodes $v_1,v_2,v_3,v_4$ at distance at least $3T$ between each other and from the endpoints of the path, where $v_1$ is the nearest to $u$ among them, and $v_4$ is the nearest to $v$. Notice that at least one of the two following statements must hold: both $v_1$ and $v_2$ are of type $L$, or both $v_3$ and $v_4$ are of type $R$, since otherwise there must be a node in the path having two incoming edges. Let assume w.l.o.g.\ that we are in the first case. Let $Q$ be the subpath of length $2T+1$ centered at node $v_1$.  We now construct a new path $P'$, by starting from $P$ and replacing $Q$ by its mirror. The algorithm $A$, while running on $v_1$ and $v_2$, cannot distinguish the instances $P$ and $P'$, thus it must output the same on both instances, but since $Q$ has been mirrored, this implies that $v_1$ on $P'$ has type $R$. Since in $P'$ the node $v_1$ has type $R$ and $v_2$ has type $L$, this implies that there is some node in the subpath between $v_1$ and $v_2$ having two incoming edges, contradicting the correctness of $A$.
   \end{proof}
\begin{theorem}
Any problem in family $\fVIa$ or $\fVIb$ has complexity $O(n)$.
\end{theorem}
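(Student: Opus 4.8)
The plan is to reduce the whole family to a single hardest problem and then observe that this problem always has a solution, which can be found by gathering the entire tree. By Observation~\ref{obs:equiv} every problem in $\fVIb$ is equivalent to a problem in $\fVIa$, and equivalent problems have the same complexity up to $\pm 1$ round, so it suffices to prove the $O(n)$ bound for $\fVIa$. Within $\fVIa$ the four patterns $W=\nnn\p\x$, $B=\x\p\nnn$ are all restrictions of the single problem $\Pi^\ast=(d,\delta,W^\ast,B^\ast)$ with $W^\ast=\nnn\p\p$ and $B^\ast=\p\p\nnn$; hence, by Observation~\ref{obs:restriction}, it is enough to solve $\Pi^\ast$ in $O(n)$ rounds.

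First I would identify $\Pi^\ast$ as an orientation problem, using exactly the correspondence from the proof of Theorem~\ref{rootedtreeglobal}: interpret a set $X\subseteq E$ as the orientation that directs every edge of $X$ from its black endpoint to its white endpoint and every edge of $E\setminus X$ from white to black. Then, for a white node $v$ of degree $d$, $\deg_X(v)\in\{d-1,d\}$ holds exactly when at most one incident edge is directed away from $v$, and for a black node $v$ of degree $\delta$, $\deg_X(v)\in\{0,1\}$ holds exactly when at most one incident edge is directed away from $v$. Thus $X$ is feasible for $\Pi^\ast$ if and only if the induced orientation gives outdegree at most one to every node of the relevant degree. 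Such an orientation always exists on a tree: root the tree at an arbitrary node $r$ and orient every edge towards $r$; then every node other than $r$ has outdegree exactly one and $r$ has outdegree zero, so all constraints of $\Pi^\ast$ are satisfied, in particular for the nodes of degree $d$ or $\delta$. Hence $\Pi^\ast$ has a solution on every tree meeting the degree requirements.

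It then remains to find such a solution in $O(n)$ rounds, and here I would use the standard argument: the diameter of an $n$-node tree is at most $n-1$, so in $O(n)$ communication rounds every node can learn the entire input tree together with all identifiers and port numbers. Each node then locally fixes the same root $r$ (say, the node of maximum identifier), computes the orientation towards $r$, translates it into the set $X$ as above, and (for white nodes) outputs the resulting labels of its incident edges; since every node computes the same global solution, the outputs are consistent and feasible. This gives white complexity $O(n)$ for $\Pi^\ast$, hence by Observation~\ref{obs:restriction} for every problem in $\fVIa$, and by Observation~\ref{obs:equiv} for every problem in $\fVIb$, completing the proof.

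I do not expect a real obstacle here. The only point requiring care is bookkeeping: picking out the correct relaxation $\Pi^\ast$ among the $\x$-patterns of $\fVIa$, and verifying through the $X\leftrightarrow$ orientation dictionary that the relaxed constraints are precisely ``outdegree at most one'', so that the trivial rooted orientation works. Everything else is the general fact that a solvable LCL on an $n$-node tree is solvable in $O(n)$ rounds by collecting the whole instance; the substantive content is merely that a solution exists.
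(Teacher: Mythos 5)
Your reduction goes in the wrong direction. Observation~\ref{obs:restriction} says that if $\Pi'\subseteq\Pi$ then a solution to the \emph{restriction} $\Pi'$ is also a solution to the \emph{relaxation} $\Pi$; so to cover a whole family with one algorithm you must solve its most \emph{restricted} member, not its most relaxed one. You correctly note that every problem in $\fVIa$ is a restriction of $\Pi^\ast=(d,\delta,\nnn\p\p,\p\p\nnn)$, but you then conclude that it suffices to solve $\Pi^\ast$ --- this is exactly backwards. Solving $\Pi^\ast$ only handles that single maximal relaxation; it says nothing about, e.g., $W=\nnn\p\n$, $B=\n\p\nnn$, which is also in $\fVIa$ and is the problem you actually need to solve.

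The gap is not merely formal: your proposed solution (orient all edges towards an arbitrary root $r$, say the max-ID node) genuinely fails for that hardest member. In the $X\leftrightarrow$ orientation dictionary, $W=\nnn\p\n$ and $B=\n\p\nnn$ demand outdegree \emph{exactly} one at every constrained node, but your root has outdegree zero, and a generic root (e.g.\ an internal white node of degree $d$) is constrained. The paper's proof fixes this by taking the hardest problem $W=\n\p\nnn$, $B=\nnn\p\n$ in $\fVIb$ (equivalently, the hardest in $\fVIa$) and rooting the orientation at a \emph{leaf}: a leaf has degree $1<\min(d,\delta)$ and is therefore unconstrained, while every other node gets outdegree exactly one. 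With that one change --- reduce to the restriction and root at a leaf --- the rest of your argument (existence of a solution plus the generic ``gather the whole tree in $O(n)$ rounds'' step) is exactly the paper's proof and goes through.
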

\begin{proof}
Let $\Pi=(d,\delta,W,B)$ where  $W = \n\p\nnn$, $B = \nnn\p\n$. 
Any problem in $\fVIa$ is equivalent to a problem in $\fVIb$ and $\Pi\in\fVIb$ is the \emph{hardest} problem in $\fVIb$, i.e., $\Pi\subseteq \Pi'$ for all $\Pi'\in\fVIb$. Thus it is sufficient to show that $\Pi$ can be solved in $O(n)$ rounds. 

  Since any problem that is solvable can be solved in $O(n)$ rounds, it is enough to show that a solution for $\Pi$ always exists in any tree that meets the degree requirements of $\Pi$. It is clear that the problem described in Theorem \ref{rootedtreeglobal} is always solvable, since it is always possible to orient a tree towards a root. The current problem $\Pi$ is slightly harder, since it does not allow any white node of degree $d$ and any black node of degree $\delta$ to have all incoming edges. We can solve this problem anyway, by choosing a leaf of the tree as a root, i.e., a node with degree $1$ and orient all edges towards the leaf. 
\end{proof}

\section{Logarithmic upper bounds}\label{sec:log-upper}

In this section we will show that all binary labeling problems that are not of the type discussed in Section~\ref{sec:unsol}, Section~\ref{sec:constant}, or Section~\ref{sec:global} can be solved in $O(\log n)$ rounds with deterministic distributed algorithms. This will give the upper bounds for the final section of Table~\ref{tab:deterministic}. In this section, we formally prove the following theorem.

\begin{theorem}\label{thm:log-upper}
Any problem that is not of type \tI, \tII, \tIII, \tIV, \tV{} or \tVI{} has complexity $O(\log n)$.
\end{theorem}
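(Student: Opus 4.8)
The plan is to reduce every type \tVII{} problem to a small number of ``canonical'' easiest problems and show each of those admits an $O(\log n)$ algorithm. By Observation~\ref{obs:equiv} we may fix representatives within each equivalence class, and by Observation~\ref{obs:restriction} it suffices to solve the \emph{easiest} problem in each family: if $\Pi' \subseteq \Pi$ and $\Pi'$ is solvable in $O(\log n)$ rounds, then so is $\Pi$. Since Sections~\ref{sec:unsol}--\ref{sec:global} already settle all problems with $d = \delta = 2$, I would assume throughout that $d \ge 3$ or $\delta \ge 3$. The first step is a careful case analysis of what it means for a problem \emph{not} to be of type \tI--\tVI: in vector notation, not being of type \tI{} or \tII{} means neither constraint string is ``all $\n$'', neither has a lone $\p$ at an extreme position facing an all-$\n$ partner, etc.; not being of type \tIII{} or \tIV{} means neither string is all $\p$ and neither has a $\p$ in both extreme positions simultaneously paired with the same on the other side. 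Combining these, one deduces structural facts such as: each of $W$ and $B$ contains at least one $\p$ and at least one $\n$, and — crucially — at least one of the two constraints must have a $\p$ that is neither at position $0$ nor at position $d$ (resp. $\delta$), i.e., it permits an ``interior'' $X$-degree. I expect this bookkeeping to be the main obstacle: getting the case split exhaustive and matching it cleanly against the residual cases of Table~\ref{tab:deterministic} without gaps.

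Next I would isolate the key algorithmic primitive. The canonical hard-but-logarithmic problem is sinkless orientation (Example~\ref{ex:sinkless-orientation}), which is solvable in $O(\log n)$ rounds on trees~\citep{Brandt2016,chang16exponential,ghaffari17distributed}; more generally, ``sinkless and sourceless orientation'' (Example~\ref{ex:sinkless-sourceless}) is solvable in $O(\log n)$ rounds~\cite{ghaffari17distributed}, and so is essentially any problem where every node needs both an incoming and an outgoing half-edge, or more relaxed. I would also use a second primitive: on trees, one can compute in $O(\log n)$ rounds a decomposition / an orientation in which the tree is split along a balanced edge-separator recursively (rake-and-compress style), which lets a white node in the ``interior'' freely choose an interior $X$-degree while a matching pattern is propagated. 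The plan is then: for each residual family, show its easiest member is a restriction of — or directly reducible to — sinkless-and-sourceless orientation (when the constraints are ``at least one in, at least one out'' type), or is solvable by a straightforward combination of the tree-decomposition primitive with local choice (when at least one side permits an interior value and the other side is permissive enough). Problems like bipartite splitting, regular matching, splitting, and even orientation (Table~\ref{tab:examples}) are the concrete targets to cover.

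Concretely, the steps in order: (1) establish the structural characterization of type \tVII{} via the vector-notation case analysis, reducing to a short explicit list of easiest representatives; (2) for each representative whose constraints force ``both a $\p$-neighbor and an $\n$-neighbor at every node,'' reduce to sinkless-and-sourceless orientation on a suitably subdivided/modified tree and invoke its $O(\log n)$ solvability; (3) for each remaining representative, give a direct $O(\log n)$ algorithm using a balanced tree decomposition: recursively cut at a centroid edge, solve the two halves on a slightly relaxed instance at the cut, and stitch; (4) verify in each case that the produced half-edge labeling satisfies both $W$ and $B$ at all degree-$d$ / degree-$\delta$ nodes, using that irregular and leaf nodes are unconstrained. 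The main risk, besides the exhaustiveness of step (1), is step (3): ensuring the recursion's ``relaxed instance at the cut'' is still a legitimate type \tVII{} (or easier) instance and that the stitching does not create a forbidden degree at the boundary; I would handle this by always cutting at an edge incident to a node that we can afford to leave in a permissive state, which the structural analysis from step (1) guarantees exists.
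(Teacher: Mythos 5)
There is a genuine gap, concentrated in your step (3) and in the cases your step (2) cannot reach. First, the decomposition you actually describe there --- ``recursively cut at a centroid edge, solve the two halves\dots and stitch'' --- is not implementable in $O(\log n)$ rounds in the LOCAL model: locating a centroid (or any balanced) edge of a tree is itself a global task, requiring $\Omega(n)$ rounds on a path. The paper instead uses the bottom-up rake-and-compress decomposition of Miller and Reif (Lemma~\ref{generelizedRC} and Claim~\ref{distRC}): repeatedly remove leaves and nodes on sufficiently long degree-$2$ paths, which is a purely local operation, yields $O(\log n)$ layers, and guarantees that when a node's layer is processed (top layer down), at most one or two of its ports are already fixed. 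Second, and more importantly, you are missing the criterion that makes the layered greedy actually terminate in a valid labeling: the paper's notion of \emph{resilience} (Definition~\ref{resiliencedef} and Lemma~\ref{ResilienceEquivLemma}), namely that $W$ contains no long run of consecutive $\n$'s, which is exactly equivalent to ``any partial labeling of few ports extends to a valid one.'' Your structural observation that some constraint ``permits an interior $X$-degree'' is far weaker and does not guarantee extendability of partial labelings, which is where a stitching argument would fail at the boundary.

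The reduction to sinkless-and-sourceless orientation in your step (2) also cannot cover all of type \tVII. Regular matching ($W=\n\p\nnn$, $B=\p\n\p$, Example~\ref{ex:regular-matching}) and the bipartite hypergraph-matching problem ($W=\n\p\n^{d-1}$, $B=\n\p\n^{\delta-1}$) force \emph{exactly one} selected edge per node in the regular interior; they are neither relaxations of sinkless-and-sourceless orientation nor resilient in the paper's sense (here $W$ contains $\n^{d-1}$), so neither of your two mechanisms applies to them. The paper handles these, and the ``special problem'' $W=\n\p\n^{d-1}$, $B=\p\n^{\delta-1}\p$, by bespoke arguments (Theorems~\ref{PMatchingTheorem} and~\ref{whiteoneBlackAllOrNothing}) that rerun rake-and-compress with path-length parameters $c=3$ and $c=5$ so that each processed node has at most one pre-fixed port and each compressed path contains enough white nodes to absorb the propagated choices. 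Your case analysis in step (1) is in the right spirit (it roughly parallels Lemma~\ref{calim2section6}), but without the resilience dichotomy it does not reduce to a finite list of tractable targets, and without the matching-type cases being addressed the proof is incomplete. A minor point: you call the most restricted member of a family the ``easiest,'' whereas the paper (consistently with Observation~\ref{obs:restriction}) calls it the hardest; your logical use of the observation is nonetheless correct.
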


To prove Theorem \ref{thm:log-upper}, we first introduce the concept of \emph{resilience}, and three families of binary labeling problems: \emph{bipartite matching}, \emph{hypergraph matching}, and \emph{edge grabbing}. In Lemma \ref{calim2section6} we will show that all problems that we have not yet covered in previous sections are resilient or they are contained in one of the three families.  Later, in Section~\ref{ssec:-resilupperbound}, we will show that resilient problems are solvable in $O(\log n)$ rounds, and finally in Sections \ref{ssec:bipartite-matching}--\ref{ssec:edge-grabbing} we show that all problems in the three families can also be solved in $O(\log n)$ rounds.

\subsection{Definitions}\label{ssec:log-upper-def}

\begin{definition}[Resilience]\label{resiliencedef}
Let $\Pi=(d,\delta,W,B)$ be a binary labeling problem. For $0\leq t \leq d+1$ and $0\leq s\leq \delta+1$ we say that $\Pi$ is \emph{$(t,s)$-resilient} if both of the following hold:
\begin{itemize}[noitemsep]
    \item bit string $W$ does not contain a substring of the form $\n^{d+1-t}$,
    \item bit string $B$ does not contain a substring of the form $\n^{\delta+1-s}$.
\end{itemize}
\end{definition}
\begin{remark}
Recall the \emph{even orientation} problem from Example~\ref{ex:even-orientation}: $W = \p\n\p\n$ and $B = \n\p\n$. This is a $(2,1)$-resilient problem: $W$ does not contain a substring $\n\n$, and $B$ does not contain a substring~$\n\n$.
\end{remark}

\begin{definition}[Bipartite matching]\label{def:bipartite-matching}
Problems of the form $\Pi = (\n\p\n\nnn, \n\p\n\nnn)$ are called \emph{bipartite matching problems}.
\end{definition}
\begin{remark}
Bipartite matching problems satisfy $d \ge 3$ and $\delta \ge 3$. In a $d$-regular bipartite graph, any solution to the bipartite matching problem is a perfect matching.
\end{remark}

\begin{definition}[Hypergraph matching]\label{def:hypergraph-matching}
Problems of the form $\Pi = (\n\p\n\nnn, \p\nnn\p)$ are called \emph{hypergraph matching problems}.
\end{definition}
\begin{remark}
Hypergraph matching problems satisfy $d\ge 3$ and $\delta \ge 2$. By interpreting black nodes as hyperedges the problem equals the hypergraph matching problem with hyperedges of rank $\delta$. In any $d$-regular hypergraph any such matching is a perfect hypergraph matching.
\end{remark}

\begin{definition}[Edge grabbing]\label{def:edge-grabbing}
Problems of the form $\Pi = (\n\p\n\nnn, \p\p\n)$ are called \emph{edge grabbing problems}.
\end{definition}
\begin{remark}
Edge grabbing problems satisfy $d\geq 3$ and $\delta=2$. If we interpret black nodes as edges of a simple graph, in a solution to the edge grabbing problem each white vertex \emph{grabs} exactly one of its incident edges where no edge is grabbed by both of its endpoints.
\end{remark}

\subsection{Coverage}

We now show that the definition of resilience and the three problem families defined above cover all binary labeling problems that are not unsolvable, trivial, or global. Recall Observation \ref{obs:restriction} for the notion of a restriction of a problem. 

\begin{lemma}\label{calim2section6}
Let $\Pi$ be a binary labeling problem that is not of type \tI, \tII, \tIII, \tIV, \tV{} or \tVI{}. Then there is a restriction $\Pi' \subseteq \Pi$ such that one of the following holds:
\begin{enumerate}[noitemsep]
    \item $\Pi'$ is $(2,1)$-resilient or $(1,2)$-resilient,
    \item $\Pi'$ is equivalent to a bipartite matching problem,
    \item $\Pi'$ is equivalent to a hypergraph matching problem,
    \item $\Pi'$ is equivalent to an edge grabbing problem.
\end{enumerate}
\end{lemma}
\begin{proof}
Let $\Pi=(d,\delta,W,B)$ be some problem that does not belong to type $\tI$--$\tVI$. If $\Pi$ is $(2,1)$-resilient or $(1,2)$-resilient, we are done. For the rest of the proof assume that $\Pi$ is neither $(2,1)$-resilient nor $(1,2)$-resilient. Therefore both of the following hold:
\begin{itemize}[noitemsep]
\item $W$ contains $\n^{d-1}$ or $B$ contains $\n^\delta$.
\item $W$ contains $\n^{d}$ or $B$ contains $\n^{\delta-1}$.
\end{itemize}
Now, we will go over all possible combinations and analyze them. First assume that $W$ contains $\n^d$. We have the following cases:
\begin{itemize}[noitemsep]
  \item $W=\n\n\nnn$: We have $\Pi \in \fIIa$.
  \item $W=\p\n\nnn$ and $B = \n\x\xxx$: We have $\Pi \in \fIa$.
  \item $W=\p\n\nnn$ and $B = \p\x\xxx$: We have $\Pi \in \fIVa$.
  \item $W=\n\nnn\p$ and $B = \x\xxx\n$: We have $\Pi \in \fIb$.
  \item $W=\n\nnn\p$ and $B = \x\xxx\p$: We have $\Pi \in \fIVb$.
\end{itemize}
But we assumed $\Pi \notin \tI, \tII, \tIV$, so none of these are possible.

Second, assume that $B$ contains $\n^\delta$. This is equivalent to $W$ containing $\n^d$; we would have $\Pi \in \tI$, $\Pi \in \tII$, or $\Pi \in \tIV$.

The final possibility is that $W$ does not contain $\n^d$, and $B$ does not contain $\n^\delta$.
Then $W$ has to contain $\n^{d-1}$, and $B$ has to contain $\n^{\delta-1}$, i.e.,
\begin{align*}
W&\in\{\p\nnn\p,\, \n\p\nnn,\, \nnn\p\n,\, \p\p\nnn,\, \nnn\p\p\},\\
B&\in\{\p\nnn\p,\, \n\p\nnn,\, \nnn\p\n,\, \p\p\nnn,\, \nnn\p\p\}.
\end{align*}
We go through all combinations of possible values of $W$ and $B$ in these sets:
\begin{itemize}[noitemsep]
  \item $W = \p\nnn\p$ and $B = \p\nnn\p$: We have $\Pi \in \fIVa$.
  \item $W = \p\nnn\p$ and $B = \n\p\nnn$:
  \begin{itemize}
    \item If $B = \n\p\n$, we have $\Pi \in \fVa$.
    \item Otherwise $B = \n\p\n\nnn$, and $\Pi$ is equivalent to a hypergraph matching problem.
  \end{itemize}      
  \item $W = \p\nnn\p$ and $B = \nnn\p\n$: Equivalent to the case $W = \p\nnn\p$ and $B = \n\p\nnn$ above.
  \item $W = \p\nnn\p$ and $B = \p\p\nnn$: We have $\Pi \in \fIVa$.
  \item $W = \p\nnn\p$ and $B = \nnn\p\p$: We have $\Pi \in \fIVb$.
  \bigskip
  \item $W = \n\p\nnn$ and $B = \p\nnn\p$: Equivalent to the case $W = \p\nnn\p$ and $B = \n\p\nnn$ above.
  \item $W = \n\p\nnn$ and $B = \n\p\nnn$:
  \begin{itemize}
    \item If $W = \n\p\n$, we have $\Pi \in \fVIa$.
    \item If $B = \n\p\n$, we have $\Pi \in \fVIb$.
    \item Otherwise $W = \n\p\n\nnn$ and $B = \n\p\n\nnn$, and $\Pi$ is equivalent to a bipartite matching problem.
  \end{itemize}      
  \item $W = \n\p\nnn$ and $B = \nnn\p\n$: We have $\Pi \in \fVIb$.
  \item $W = \n\p\nnn$ and $B = \p\p\nnn$:
  \begin{itemize}
    \item If $W = \n\p\n$ and $B = \p\p\nnn$, we have $\Pi \in \fVIa$.
    \item If $W = \n\p\n\nnn$ and $B = \p\p\n$, then $\Pi$ is equivalent to an edge grabbing problem.
    \item Otherwise $W = \n\p\n\nnn$ and $B = \p\p\n\nnn$. Define $\Pi' = (\n\p\n\nnn, \n\p\n\nnn)$. Now $\Pi' \subseteq \Pi$, and $\Pi'$ is equivalent to a bipartite matching problem.
  \end{itemize}      
  \item $W = \n\p\nnn$ and $B = \nnn\p\p$: We have $\Pi \in \fVIb$.
  \bigskip
  \item $W = \nnn\p\n$: Equivalent to one of the cases with $W = \n\p\nnn$ above.
  \bigskip
  \item $W = \p\p\nnn$ and $B = \p\nnn\p$: We have $\Pi \in \fIVa$.
  \item $W = \p\p\nnn$ and $B = \n\p\nnn$: Equivalent to $W = \n\p\nnn$ and $B = \p\p\nnn$ above.
  \item $W = \p\p\nnn$ and $B = \nnn\p\n$: We have $\Pi \in \fVIb$.
  \item $W = \p\p\nnn$ and $B = \p\p\nnn$: We have $\Pi \in \fIVa$.
  \item $W = \p\p\nnn$ and $B = \nnn\p\p$: We have $\Pi \in \fVIb$.
  \bigskip
  \item $W = \nnn\p\p$: Equivalent to one of the cases with $W = \p\p\nnn$ above. \qedhere
\end{itemize}
\end{proof}

\subsection{Rake \& compress procedures}

We now present two procedures, which will be are our main tool in order to show $O(\log n)$ upper bounds. These procedures are based on the rake and compress technique introduced by \citet{Miller1985}.

Let $G=(V,E)$ be a tree whose vertices are properly colored with two colors, called black and white. We start with two definitions; see Figure~\ref{fig:lowdegree-longpaths} for examples.
\begin{definition}[$\lowdegree$]\label{def:lowdegree}
Let $w,b \in \{1,2\}$. We define that $\lowdegree(G,w,b) \subseteq V$ is the set of all white nodes with degree at most $w$ and all black nodes with degree at most~$b$.
\end{definition}
\begin{definition}[$\longpaths$]\label{def:longpaths}
Let $p \in \{1,2,\dotsc\}$. Let $X \subseteq V$ consist of all nodes of degree exactly~$2$. We define that $\longpaths(G,p) \subseteq X$ consists of those nodes that belong to a connected component of size at least $p$ in the subgraph of $G$ induced by $X$.
\end{definition}

\begin{figure}
\centering
\includegraphics[page=4,scale=0.9]{figs.pdf}\\[7mm]
\includegraphics[page=5,scale=0.9]{figs.pdf}\\[7mm]
\includegraphics[page=6,scale=0.9]{figs.pdf}
\caption{Examples of $\lowdegree(G,w,b)$ from Definition~\ref{def:lowdegree} and $\longpaths(G,p)$ from Definition~\ref{def:longpaths}.}\label{fig:lowdegree-longpaths}
\end{figure}

Note that both of the definitions are local; $\lowdegree$ only depends on the degree of a node and to find out whether a given node $v$ is in $\longpaths(G,p)$, it is sufficient to explore its radius-$p$ neighborhood:
\begin{observation}\label{obs:lowedgree-longpaths}
For any constants $w,b,p$, there is a constant-round algorithm that finds which nodes belong to the set $\lowdegree(G,w,b)$ and which nodes belong to the set $\longpaths(G,p)$.
\end{observation}

Now we are ready to define two specific rake \& compress procedures, $\RC$ and $\RCP$. For a graph $G=(V,E)$ and a set of nodes $X$, we write $G-X$ for the subgraph of $G$ induced by $V \setminus X$, i.e., it is the subgraph formed by deleting nodes in $X$. See Figure~\ref{fig:RC-RCP} for illustrations.

\begin{definition}[$\RC$]\label{def:RC}
Let $w,b \in \{1,2\}$.
Procedure $\RC(w,b)$ partitions the set of nodes $V$ into non-empty sets $V_1, V_2, \dotsc, V_L$ for some $L$ as follows:
\begin{align*}
G_0 &= G, \\
V_{i+1} &= \lowdegree(G_i,w,b), & \text{$i = 0,1,\dotsc$ and $G_{i-1}$ is non-empty}, \\
G_{i+1} &= G_i - V_{i+1}, & \text{$i = 0,1,\dotsc$ and $G_{i-1}$ is non-empty}.
\end{align*}
\end{definition}

\begin{definition}[$\RCP$]\label{def:RCP}
Let $p \in \{1,2,\dotsc\}$.
Procedure $\RCP(p)$ partitions the set of nodes $V$ into non-empty sets $V_1, V_2, \dotsc, V_L$ for some $L$ as follows:
\begin{align*}
G_0 &= G, \\
V_{i+1} &= \lowdegree(G_i,1,1) \cup \longpaths(G_i,p), & \text{$i = 0,1,\dotsc$ and $G_{i-1}$ is non-empty}, \\
G_{i+1} &= G_i - V_{i+1}, & \text{$i = 0,1,\dotsc$ and $G_{i-1}$ is non-empty}.
\end{align*}
\end{definition}

\begin{figure}
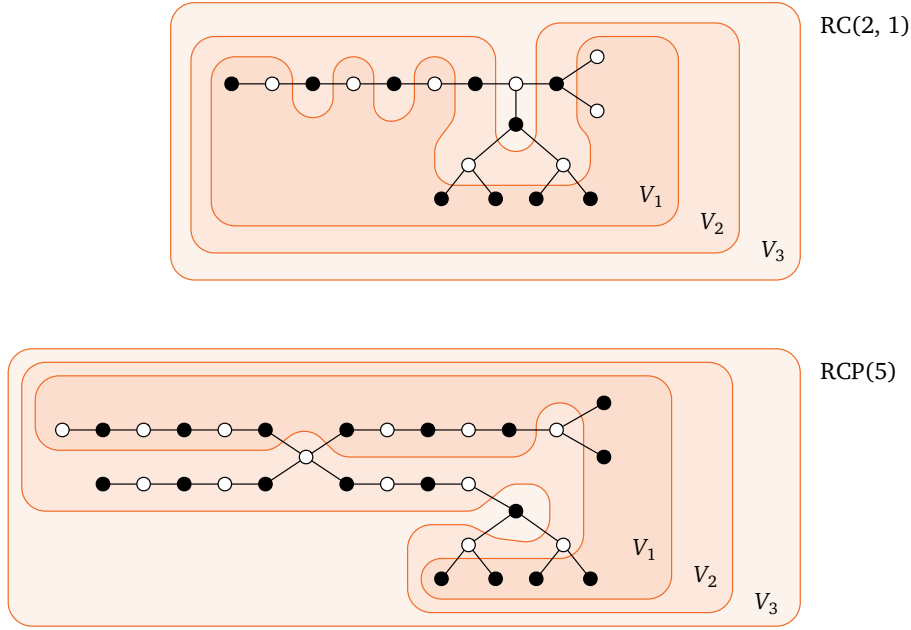

\centering
\includegraphics[page=7,scale=0.9]{figs.pdf}\\[7mm]
\includegraphics[page=8,scale=0.9]{figs.pdf}
\caption{Examples of processes $\RC(w,b)$ from Definition~\ref{def:RC} and $\RCP(p)$ from Definition~\ref{def:RCP}.}\label{fig:RC-RCP}
\end{figure}

We call the sets $V_i$ \emph{layers}. As we start with a tree $G_0$, each subgraph $G_i$ is a forest, and as long as it is non-empty there is at least one node of degree at most $1$. Therefore $V_{i+1}$ is also non-empty, and hence both $\RC$ and $\RCP$ eventually terminate after $L=O(n)$ layers. The key observation is that $L = O(\log n)$ layers suffices, as long as we do not use $\RC(1,1)$; this is what we will prove next.

\subsection{Rake \& compress runs in logarithmic time}

\begin{lemma}\label{lem:lowdegree11-longpaths}
Fix a $p = 1,2,\dotsc$, and let $G=(V,E)$ be a tree with $n$ nodes. At least one of the following holds:
\[
\bigl|\lowdegree(G,1,1)\bigr| \ge \frac{n}{6p} \quad\text{or}\quad
\bigl|\longpaths(G,p)\bigr| \ge \frac{n}{3}.
\]
\end{lemma}
\begin{proof}
If $n=1$, then $|\lowdegree(G,1,1)| = 1 = n$, and the claim holds. Otherwise $n > 1$, and the minimum degree of $G$ is $1$.

Let $n_1$ be the number of nodes of degree $1$, let $n_2$ be the number of nodes of degree $2$, and let $n_3$ be the number of nodes of degree at least $3$. We have $n = n_1+n_2+n_3$. The number of edges in tree $G$ is $m=n-1$, and by counting half-edges, we get
$n_1 + 2n_2 + 3n_3 \le 2m = 2n_1 + 2n_2 + 2n_3 - 2$.
Hence $n_3 < n_1$.

We have $|\lowdegree(G,1,1)| = n_1$, so if $n_1 \ge n/(6p)$, the claim holds. In what follows, assume that $n_1 < n/(6p)$. This implies that $n_1 + n_3 < 2n_1 < n/(3p)$.

Consider the subgraph $G_2$ induced by degree-$2$ nodes of $G$. If we contract each connected component of $G_2$ into an edge, we obtain a tree $G'$ in which we have $n' = n_1 + n_3$ nodes and $m' = n_1 + n_3 - 1$ edges, and each edge represents at most one connected component of $G_2$. Hence there are fewer than $n/(3p)$ components in $G_2$.

Components of size less than $p$ can contain fewer than $pn/(3p) = n/3$ nodes in total, but the total number of nodes of degree $2$ is $n_2 = n - (n_1 + n_3) > n - n/(3p) \ge 2n/3$. Therefore there have to be at least $n/3$ nodes in components of size at least $p$, and hence $|\longpaths(G,p)| \ge n/3$.
\end{proof}

\begin{lemma}\label{lem:lowdegree12}
Let $G=(V,E)$ be a tree with $n$ nodes. Then
\[
\bigl|\lowdegree(G,1,2)\bigr| \ge \frac{n}{12} \quad\text{and}\quad
\bigl|\lowdegree(G,2,1)\bigr| \ge \frac{n}{12}.
\]
\end{lemma}
\begin{proof}
Apply Lemma~\ref{lem:lowdegree11-longpaths} with $p=2$. If $|\lowdegree(G,1,1)| \ge n/(6p) = n/12$, we are done. Otherwise $|\longpaths(G,2)| \ge n/3$. Each connected component in the subgraph induced by $\longpaths(G,2)$ is a two-colored path of length at least $2$, and hence at least a fraction $1/3$ of the nodes in each such component is black. Therefore there are at least $n/9$ black nodes of degree $2$, and $|\lowdegree(G,1,2)| \ge n/9$. The same holds for white nodes and $|\lowdegree(G,2,1)|$.
\end{proof}

\begin{lemma}\label{generelizedRCP}
Procedures $\RC(1,2)$, $\RC(2,1)$, and $\RCP(p)$ terminate after $L = O(\log n)$ layers.  
\end{lemma}
 
\begin{proof}
We show that we eliminate at least a constant fraction of nodes per step. We can assume that $G_i$ is connected; otherwise we can analyze each connected component separately. For $\RC(w,b)$, we can apply Lemma~\ref{lem:lowdegree12} to see that we eliminate at least a fraction $1/12$ of the nodes in each step. For $\RCP(p)$, we can apply Lemma~\ref{lem:lowdegree11-longpaths} to see that we eliminate at least a fraction $1/(6p)$ of the nodes in each step.
\end{proof}

\begin{corollary}\label{cor:RCPimpl}
Procedures $\RC(1,2)$, $\RC(2,1)$, and $\RCP(p)$ can be implemented in $O(\log n)$ rounds in the LOCAL model of computation.
\end{corollary}
\begin{proof}
Follows from Observation~\ref{obs:lowedgree-longpaths} and Lemma~\ref{generelizedRCP}: it is sufficient to repeat a constant-time step for $O(\log n)$ times.
\end{proof}

\subsection{Properties of layer decompositions}

Given a decomposition of nodes in layers $V_1, V_2, \dotsc, V_L$ and a node $v \in V$, we define
\begin{align*}
N(v,i) &= \bigl\{ u \in V_i : \{u,v\} \in E \bigr\}, &
d(v,i) &= |N(v,i)|, \\
N(v,i\ldots j) &= N(v,i) \cup N(v,i+1) \cup \dotsb \cup N(v,j), &
d(v,i\ldots j) &= |N(v,i\ldots j)|.
\end{align*}
That is, $N(v,i\ldots j)$ is the set of neighbors of $v$ that are on layers $i,i+1,\dotsc,j$.

\begin{claim}\label{claim:RCdegree}
In a decomposition $V_1,V_2,\dots,V_L$ produced by $\RC(w,b)$, we will have $d(v,i\ldots L) \le w$ for all white nodes $v \in V_i$ and $d(v,i\ldots L) \le b$ for all black nodes $v \in V_i$.
\end{claim}
\begin{proof}
Note that $d(v,i\ldots L)$ is equal to the number of neighbors of $v$ in graph $G_{i-1}$, and by assumption we had $v \in V_i = \lowdegree(G_{i-1},w,b)$, so if $v$ is white, there are at most $w$ neighbors in $G_{i-1}$, and if $v$ is black, there are at most $b$ neighbors in $G_{i-1}$.
\end{proof}

\begin{claim}\label{claim:RCPdegree}
In a decomposition $V_1,V_2,\dots,V_L$ produced by $\RCP(p)$, $p > 1$, we will have
\[d(v,i\ldots L) \le 2 \quad\text{and}\quad d(v,i+1\ldots L) \le 1\]
for all nodes $v \in V_i$.
\end{claim}
\begin{proof}
We have $v \in V_i$, and hence one of the following holds:
\begin{itemize}
  \item $v \in \lowdegree(G_{i-1},1,1)$. Then $v$ has at most $1$ neighbor in $G_{i-1}$, and hence
  \[
    d(v,i+1\ldots L) \le d(v,i\ldots L) \le 1.
  \]
  \item $v \in \longpaths(G_{i-1},p)$. Then $v$ has at most $2$ neighbors in $G_{i-1}$. Furthermore, $v$ is a part of a connected component of size at least $p$, and the entire component belongs to $V_i$; hence there is a neighbor $u$ of $v$ in $V_i$, and in particular $u \in N(v,i\ldots L)$ but $u \notin N(v,i+1\ldots L)$. We have
  \begin{align*}
    d(v,i\ldots L) &= 2, \\
    d(v,i+1\ldots L) &\le d(v,i\ldots L) - 1 = 1. \qedhere
  \end{align*}
\end{itemize}
\end{proof}

\subsection{Algorithm for resilient problems}\label{ssec:-resilupperbound}

Now we are ready to use $\RC$ and $\RCP$ to develop $O(\log n)$-time algorithms for all four problem classes that we defined in Section~\ref{ssec:log-upper-def}. We will start with resilient problems in this section; we will discuss bipartite matching, hypergraph matching, and edge grabbing later.

In what follows, we will often deal with \emph{partial} labelings of graphs:
\begin{definition}[Partial labeling]
A \emph{partial labeling} of a tree $G = (V,E)$ is a labeling $g\colon E \to \{0,1,\bot\}$, where $g(e) = \bot$ indicates that edge $e$ is unlabeled. Given two partial labelings $g$ and $g'$, we say that $g'$ is a \emph{completion} of $g$ if $g'(e) = \bot$ implies $g(e) = \bot$ and $g(e) \ne \bot$ implies $g'(e) = g(e)$; that is, $g'$ is constructed from $g$ by changing some empty labels to non-empty labels. We say that the labeling around $v$ is complete in $g$ if all edges incident to $v$ have non-empty labels.

Given a problem $\Pi = (d,\delta,W,B)$ and a partial labeling $g$, we say that the labeling around $v$ is \emph{valid} in $g$ if it is complete around $v$ and $v$ satisfies the constraints of $\Pi$, i.e., the number of incident edges labeled with $1$ is in $W$ if $v$ is a white node and it is in $B$ if $v$ is a black node. Finally, we say that a labeling $g$ \emph{can be completed} around $v$ if there exists a completion $g'$ of $g$ such that the labeling around $v$ is valid in $g'$. We extend these definitions to sets of nodes in a natural manner.
\end{definition}

Note that if $v$ is not a relevant node (recall that only white nodes of degree $d$ and black nodes of degree $\delta$ are relevant), then any partial labeling can be trivially completed around $v$.

Now we are ready to connect the definition of resilience with the definition of completability:

\begin{lemma}\label{lem:resilience-complete}
Let $\Pi=(d,\delta,W,B)$ be a $(t,s)$-resilient problem, let $G = (V,E)$ be a tree, and let $g$ be a partial labeling of $G$. Then $g$ is completable around any $v \in V$ that satisfies one of the following:
\begin{itemize}[noitemsep]
\item $v$ is a white node incident to at most $t$ edges with non-empty labels,
\item $v$ is a black node incident to at most $s$ edges with non-empty labels.
\end{itemize}
\end{lemma}
\begin{proof}
We prove the claim for white nodes; the case of black nodes is analogous. Let $v$ be a white node incident to at most $t$ edges with non-empty labels. If the degree of $v$ is not $d$, the claim trivially holds, so assume that $v$ is a node of degree $d$.

Let $i \le t$ be the number of incident edges that are labeled with $1$. By the definition of resilience, there has to be an index $j \in W$ such that
\[
j \in \{ i, i+1, \dots, i+d-t \}.
\]
If this was not the case, we would have $d+1-t$ consecutive indexes $j$ with $j \notin W$, i.e., a substring of the form $\n^{d+1-t}$ in the bit string representation of set $W$. Note that $j-i \le d-t$ and therefore there are at least $j-i$ unlabeled edges around $v$.

This gives now a simple way to find a valid labeling for the unlabeled edges around $v$: take the first $j-i$ unlabeled edges incident to $v$, label them with $1$, and label all other edges incident to $v$ with $0$.
\end{proof}

\begin{remark}
The converse of Lemma~\ref{lem:resilience-complete} also holds---it turns out that resilience is \emph{equivalent} to completability. However, one direction is sufficient for us.
\end{remark}

\begin{figure}
\centering
\includegraphics[page=9,scale=0.9]{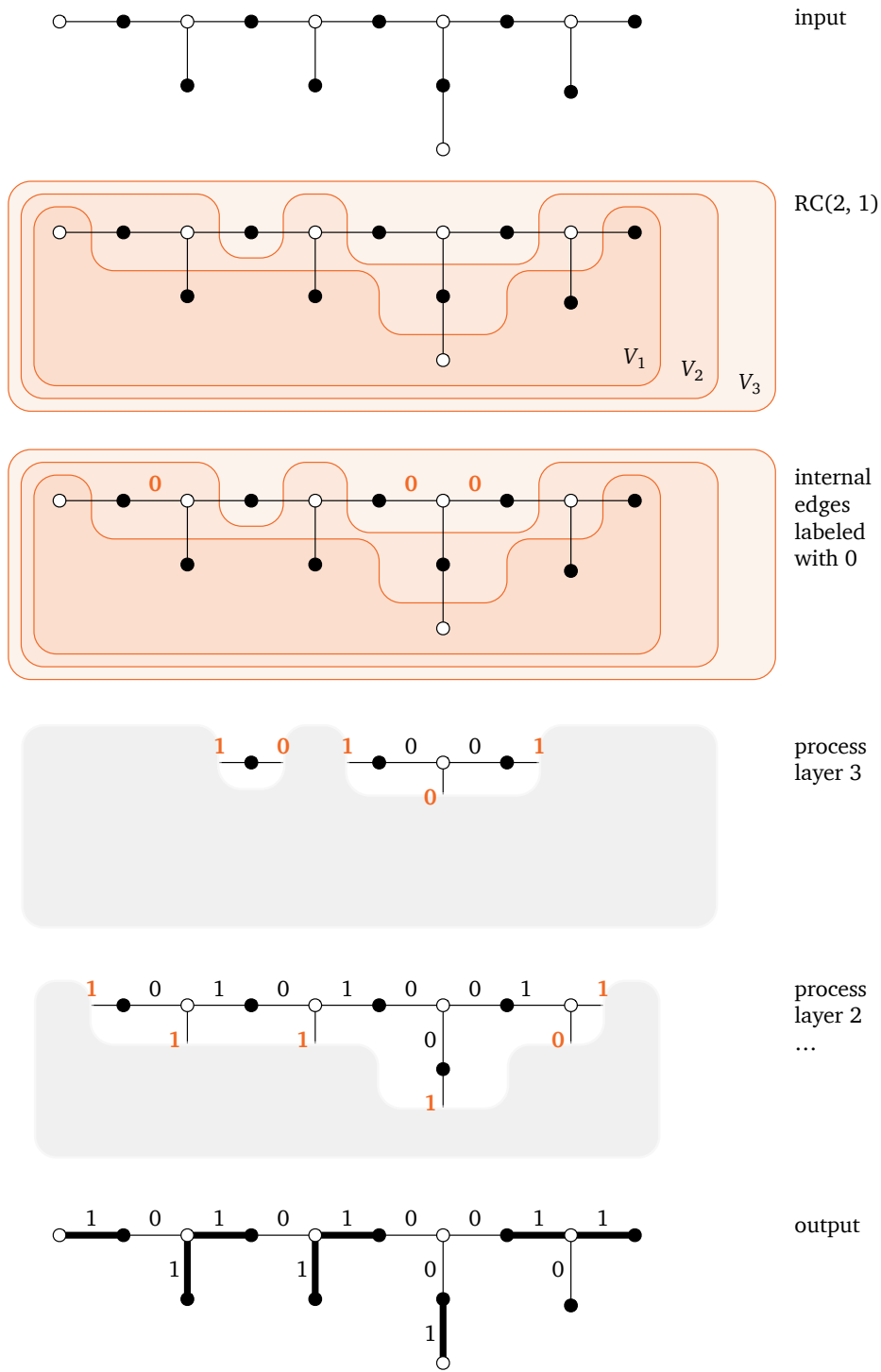}
\caption{Illustration of the proof of Theorem~\ref{theorem21resil}: solving $(2,1)$-resilient problems in $O(\log n)$ rounds, with the help of $\RC(2,1)$ procedure. Here we use the \emph{even orientation} problem (Example~\ref{ex:even-orientation}) as an example---the constraints are $W = \p\n\p\n$ and $B = \n\p\n$, and the constraint need to be satisfied for white nodes of degree $d=3$ and black nodes of degree $\delta=2$.}\label{fig:even-orientation}
\end{figure}

\begin{theorem}\label{theorem21resil}
If $\Pi=(d,\delta,W,B)$ is $(2,1)$-resilient or $(1,2)$-resilient, then $\Pi$ can be solved in $O(\log n)$ rounds.
\end{theorem}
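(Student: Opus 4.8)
The plan is to use the rake-and-compress layering of Lemma~\ref{generelizedRC} (with a suitable constant $c$, to be fixed below) together with the alternative characterization of resilience in Lemma~\ref{ResilienceEquivLemma}. First I would observe that, by Observation~\ref{obs:equiv}, it suffices to handle the $(2,1)$-resilient case, since a $(1,2)$-resilient problem becomes $(2,1)$-resilient after swapping the roles of black and white nodes, costing only $\pm 1$ round. So assume $\Pi=(d,\delta,W,B)$ is $(2,1)$-resilient: every white node of degree $d$ can have any partial labeling of up to $2$ of its ports completed, and every black node of degree $\delta$ can have any partial labeling of $1$ port completed. (Nodes of other degrees are unconstrained, hence trivially resilient in the same sense.)

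Next I would run the distributed rake-and-compress procedure of Claim~\ref{distRC} to obtain layers $V_1,\dots,V_L$ with $L = O(\log n)$, and by Claim~\ref{layerDecompositionClaim} each $v \in V_i$ has $\deg_{G_i}(v) \le 2$, where $G_i = \bigcup_{j \ge i} V_j$. The algorithm then processes layers in decreasing order $i = L, L-1, \dots, 1$: when we process layer $i$, every edge from $v \in V_i$ to a node in a strictly higher layer has already been labeled, and there are at most $2$ such edges (actually at most $2$ incident "upward" edges in $G_i$, but a node could also have many edges going to already-labeled lower layers—wait, no: edges to lower layers $j < i$ are labeled when layer $j$ is processed, which is \emph{later}). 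Let me reorganize: process layers from $i = L$ down to $i = 1$; when processing $V_i$, the already-decided edges incident to $v \in V_i$ are exactly those going to $G_{i+1}$, of which there are at most $2$ by Claim~\ref{layerDecompositionClaim} (since $\deg_{G_i}(v)\le 2$ and removing $V_i$ leaves $\deg_{G_{i+1}}(v) \le 2$). So each node, when its layer is processed, sees at most $2$ of its incident edges already labeled.

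Now the key point: when processing layer $i$, I want each node $v\in V_i$ to decide all its still-undecided incident edges so that $v$'s own constraint is satisfied, \emph{and} so that every already-processed neighbor in $G_{i+1}$ stays satisfied (which is automatic, we never touch its edges) \emph{and} so that lower-layer neighbors are not over-constrained. The subtlety is that a node $u$ in layer $j<i$ may receive labels on several of its edges from various higher layers. For resilience to apply at $u$ we need that $u$ ends up with at most $2$ (if white) or at most $1$ (if black) of its edges fixed by the time $u$ is processed. A white node $u \in V_j$ has $\deg_{G_j}(u) \le 2$, so at most $2$ of its edges go to layers $\ge j$, and these are exactly the edges that get labeled before $u$ is processed; so at most $2$ fixed edges, and $(2,1)$-resilience lets $u$ complete. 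A black node $u\in V_j$ likewise has $\deg_{G_j}(u)\le 2$, so up to $2$ edges could be pre-fixed — but $(2,1)$-resilience only guarantees completion from $1$ fixed port on the black side. This is the main obstacle, and the fix I would use is to treat the compressed path fragments carefully: in a degree-$\le 2$ residual graph, a path fragment is handled by a separate routine (orient/label it consistently along the path using the path structure and the fact that only the two endpoints connect to already-labeled material), while genuine "rake" leaves have degree $1$ in $G_i$ and hence at most $1$ pre-fixed edge — so black leaves are fine, and black degree-$2$ nodes only occur inside path fragments where we get to choose both their edges together in a coordinated sweep. So the algorithm is: raked nodes (degree $\le 1$ in $G_i$) self-complete via one-sided resilience; compressed path fragments of length $\ge c$ are labeled by a short internal computation that, knowing the labels already imposed at the $\le 2$ fragment-boundary edges, picks a consistent assignment along the whole fragment — here the constant $c$ must be chosen large enough that such a consistent completion always exists, which follows from resilience applied fragment-internally (each internal white node needs $\le 2$ slack, each internal black node needs only to realize one of its in/out choices which a path sweep provides).

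The main obstacle, as indicated, is verifying that path fragments of bounded length always admit a valid completion given arbitrary boundary conditions on their $\le 2$ external edges; I expect this to reduce to a small automaton/reachability argument on the possible "states" (running count of $1$-labels mod the relevant window) that is feasible because white resilience gives $2$ units of freedom at each white node and black nodes on a path have both incident edges free to choose. Once that is established, each layer is processed in $O(c)=O(1)$ rounds of local computation (nodes already know the required neighborhood from the rake-compress phase, or can re-gather it in $O(c)$ rounds), and since there are $L=O(\log n)$ layers plus the $O(\log n)$ rounds to build the decomposition, the total is $O(\log n)$ rounds, as claimed.
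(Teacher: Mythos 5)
Your overall strategy---a rake-and-compress layering followed by a top-down greedy completion via the resilience characterization of Lemma~\ref{ResilienceEquivLemma}---is exactly the paper's, and you correctly identify the one real difficulty: with the symmetric decomposition of Lemma~\ref{generelizedRC}, a black node can end up with $2$ pre-fixed ports, while $(2,1)$-resilience only guarantees completion from $1$ pre-fixed port on the black side. But your proposed patch is left unproven: the claim that every compressed path fragment admits a valid completion given arbitrary labels on its at most two boundary edges is precisely the crux of the argument, and you explicitly defer it to an unspecified ``automaton/reachability argument.'' As written, this is a genuine gap. The claim does in fact hold, and more simply than you suggest: fixing two ports of a black node to values summing to $s\in\{0,1,2\}$ is completable iff there is a $\p$ in the window $b_s\dotsc b_{s+\delta-2}$ of $B$; the union of the windows reachable from at most one already-fixed port is a length-$\delta$ window of $B$, which contains a $\p$ by the black part of $(2,1)$-resilience, so each black node on a fragment can independently choose a completable pair for its two path edges, after which every white node has at most $2$ fixed ports and the white part of $(2,1)$-resilience finishes. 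You also leave unaddressed the boundary case where a fragment endpoint's external neighbor is a same-layer raked node, so the order in which same-layer edges are decided needs to be pinned down.

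The paper sidesteps all of this with one small twist you missed: it runs an \emph{asymmetric} rake-and-compress in which the color that has only one unit of resilience (white, in the $(1,2)$-resilient normalization) is removed only when its residual degree is at most $1$, i.e., it is only ever raked, never compressed. This costs at most a constant-factor blowup in the number of layers, and it guarantees that such nodes see at most one pre-fixed port when their layer is processed, so the plain layer-by-layer completion goes through with no case analysis on path fragments at all. I recommend either adopting that modification or writing out your fragment-completion lemma in full.
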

\begin{proof}
Without loss of generality, let us consider a $(2,1)$-resilient problem. We use the rake and compress procedure $\RC(2,1)$ to construct a layer decomposition $V_1,V_2,\dotsc,V_L$; we have $L=O(\log n)$ due to Lemma \ref{generelizedRCP} and this takes $O(\log n)$ rounds due to Corollary \ref{cor:RCPimpl}.

We then use the layer decomposition to construct a feasible solution $X$ to $\Pi$; see Figure~\ref{fig:even-orientation} for an example. First, edges between nodes of the same layer get labeled with $0$. Then, nodes proceed sequentially by layer number, from layer $L$ to layer $1$.
For each layer $i$, assume that all nodes at higher layers have already labeled their incident edges in a valid manner. We prove that we can complete the labeling of the edges incident to nodes of layer $i$ in a valid manner.

Consider a white node $v$ at layer $i$. By Claim~\ref{claim:RCdegree}, $d(v,i\ldots L) \le 2$, and hence at most two edges incident to $v$ have already fixed their labels. Since the problem is $(2,1)$-resilient, by Lemma~\ref{lem:resilience-complete} white nodes can complete any such partial labeling in their neighborhood.

Similarly, black nodes have at most one edges incident to them that are already labeled, and as the problem is $(2,1)$-resilient, black nodes can also complete any such partial labeling in their neighborhood. Note that the choices of two adjacent nodes that are on the same layer do not interfere with each other, as all layer-internal edges are fixed to label $0$.  Further, in any layer, particularly in layer $1$, that contains all the leaves, there might white nodes of degree $<d$, and black nodes of degree $<\delta$. These nodes are always satisfied and can always complete their labeling in an arbitrary manner.

We need $O(1)$ rounds per layer and hence $O(\log n)$ rounds in total to construct a feasible solution to $\Pi$.
\end{proof}

\subsection{Algorithm for bipartite matching problems}\label{ssec:bipartite-matching}

\begin{theorem}[Bipartite matching]\label{PMatchingTheorem}
Problems of the form $\Pi=(\n\p\n\nnn, \n\p\n\nnn)$ can be solved in $O(\log n)$ rounds.
\end{theorem}
\begin{proof}
We apply the procedure $\RCP(3)$ to the input tree $G$. Denote by $V_1,\dotsc,V_L$ the resulting decomposition. We process the layers one by one, from layer $L$ to $1$.

Consider the nodes on layer $i$, and assume that all nodes of layers higher than $i$ have labeled their incident edges in a valid manner. We show that nodes of layer $i$ can label their incident edges as well. Recall that only white nodes of degree $d$ and black nodes of degree $\delta$ are \emph{relevant} and all other nodes are unconstrained. Note that in bipartite matching problems, all relevant nodes have degree at least $3$.

By Claim~\ref{claim:RCPdegree} each node $v$ at layer $i$ satisfies $d(v,i\ldots L) \le 2$ and $d(v,i+1\ldots L) \le 1$. Hence relevant nodes have at most one incident edge fixed by higher-level nodes, and they have at least one edge pointing towards lower-level nodes. In particular, relevant nodes are currently incident to at most one edge with label $1$, and each relevant node $v$ at level $i$ can pick one incident edge $e = \{v,u\}$ such that $u$ is on a lower level. Node $v$ then labels all other unlabeled incident edges with $0$, and finally it assigns $0$ or $1$ to edge $e$ in order to satisfy its constraints. Nodes that are not relevant can simply assign $0$ to all incident unlabeled edges.

Note that the choices of the nodes on the same level do not conflict with each other, as all edges between nodes of level $i$ are labeled with $0$. Thus we can safely label each layer in $O(1)$ rounds.
\end{proof}

\begin{figure}
\centering
\includegraphics[page=10,scale=0.9]{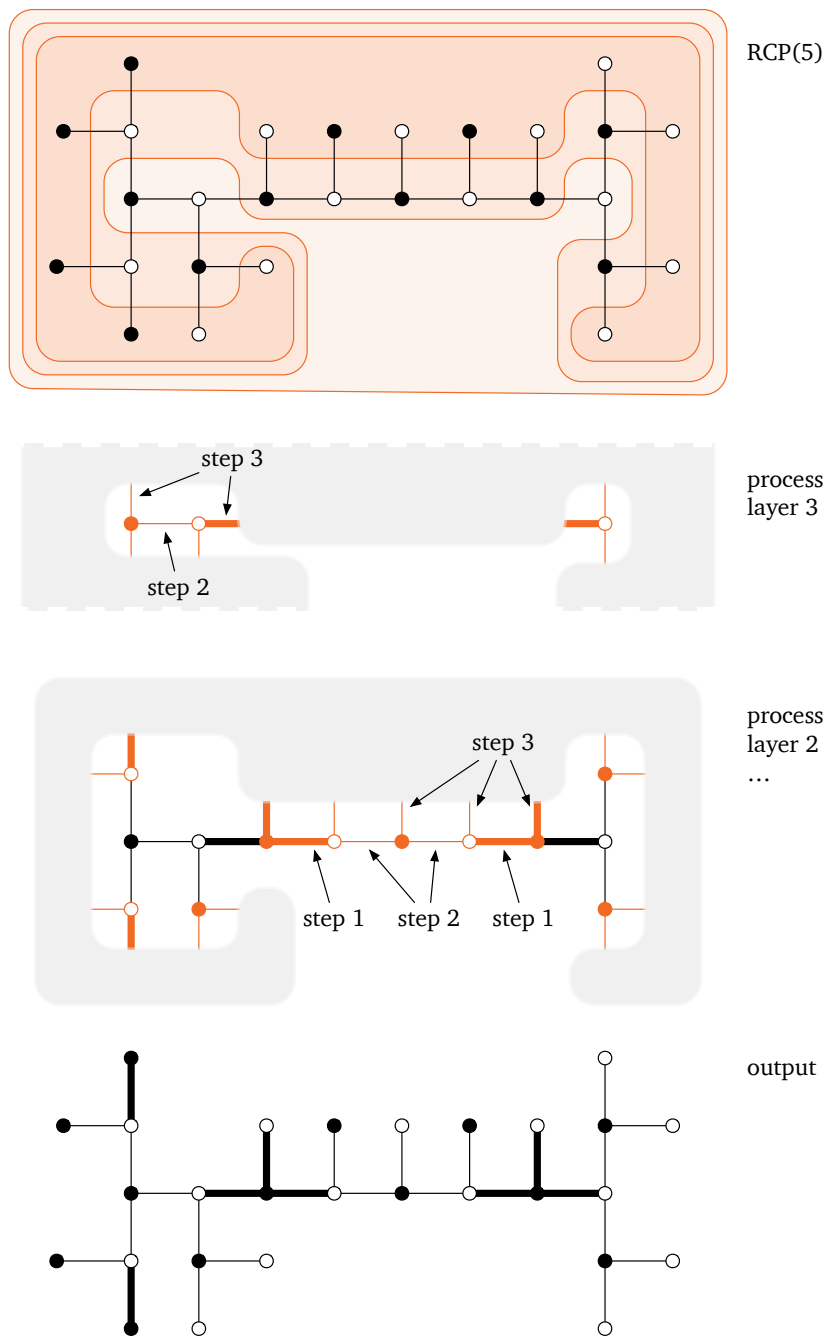}
\caption{Illustration of the proof of Theorem~\ref{whiteoneBlackAllOrNothing}: solving hypergraph matching in $O(\log n)$ rounds, with the help of $\RCP(5)$ procedure. Here we use the case of $d=\delta=3$ as an example, i.e., the problem to solve is $W=\n\p\n\n$, $B=\p\n\n\p$.}\label{fig:hypergraph-matching}
\end{figure}

\subsection{Algorithm for hypergraph matching problems}\label{ssec:hypergraph-matching}

\begin{theorem}[Hypergraph matching]\label{whiteoneBlackAllOrNothing}
Problems of the form $\Pi = (\n\p\n\nnn, \p\nnn\p)$ can be solved in $O(\log n)$ rounds.
\end{theorem}
\begin{proof}
We apply the procedure $\RCP(5)$ to the input tree $G$ to construct a decomposition $V_1,\dotsc,V_L$; see Figure~\ref{fig:hypergraph-matching} for an example. This choice of the parameter implies that the connected components of $G[V_i]$ are isolated nodes and paths of length at least $5$.  Recall that Claim~\ref{claim:RCPdegree} shows that each node $v \in V_i$ has at most two neighbors in layers $i,i+1,\dotsc,L$ and at most one neighbor in layers $i+1,\dotsc,L$.

We iterate through the layers in a backwards order and in iteration $i=L,\ldots,1$ the vertices in $V_i$ use a three step procedure to label all their incident edges that have not been labeled in any of the previous iterations:
\begin{itemize}
\item\textbf{Step 1}: Assume that $v$ is a black node, it is adjacent to a higher-layer node $u \in N(v,i+1\ldots L)$, and the edge $\{v,u\}$ is already labeled with $1$. Now if $v$ is adjacent to another node $s \in V_i$ on the same layer, we label the edge $\{v,s\}$ with $1$.
\item\textbf{Step 2:} Label all edges between vertices in $V_i$ that are unlabeled after step 1 with $0$.
\item\textbf{Step 3:} Each vertex completes the labeling of its incident edges going to lower layers in an arbitrary valid way.
\end{itemize}
Vertices only need to know their $O(1)$ neighborhood to execute the three steps. We show that each vertex can complete its labeling in step 3 in a valid way:
\begin{itemize}
  \item \textit{Black nodes:} A black node $v$ either has all of its already labeled edges labeled with a $0$ or it has two edges incident edges labeled with a $1$. Thus any black node can complete the labeling of its incident edges towards lower levels with all $0$s in the former and with all $1$s in the latter case.
  \item \textit{White nodes with higher-level neighbors:} Consider a white node $v\in V_i$ and assume that $v$ has a higher-layer neighbor $u$. Then $v$ cannot have a black neighbor $v_1$ with a higher-layer neighbor $u_1$, as otherwise $v$ and $v_1$ would only form a connected component that is a path of length $2$, a contradiction with the fact that $G[V_i]$ consist of isolated nodes and paths of length at least $5$. Thus $v$ has at most one incident edge labeled with a $1$ after step 2, i.e., the edge $\{v,u\}$, and $v$ can complete the labeling. 
  \item \textit{White nodes without higher-level neighbors:} Consider a white node $v\in V_i$ and assume that $v$ has no neighbors on higher layers. Then $v$ can have at most one black neighbor that participated in step $1$ because if it had two such neighbors $v_1$ and $v_2$, then $v_1$, $v$ and $v_2$ would again induce a connected component that is a path of length $3$ in $G[V_i]$, a contradiction. Thus $v$ has at most one black neighbor $v_1$ that potentially labeled the edge $\{v,v_1\}$ with a $1$, all other already labeled incident edges of $v$ must be labeled with $0$, that is, $v$ can complete the labeling. \qedhere
\end{itemize}
\end{proof}

\subsection{Algorithm for edge grabbing problems}\label{ssec:edge-grabbing}

We next show that any problem in the edge grabbing family can be reduced to sinkless orientation which implies that it can be solved in $O(\log n)$ rounds.

\begin{theorem}[Edge grabbing]\label{edgeGrabbingThrm}
Problems of the form $\Pi = (\n\p\n\nnn,\p\p\n)$ can be solved in $O(\log n)$ rounds.
\end{theorem}
\begin{proof}
We will show this upper bound by reducing the problem to sinkless orientation, that is, we reduce $\Pi$ to a problem of the form $\Pi'=(\n\p\p\ppp, \n\p\n)$, which can be solved in $O(\log n)$ rounds by Theorem~\ref{theorem21resil}.

Assume we are given a solution to $\Pi'$. Each white node $v$ of degree $d$ is incident to at least one edge that is labeled with $1$. Node $v$ chooses one such edge (say, the one with the smallest port number) and marks it \emph{red}. All other edges are \emph{black}.

Let $X$ be the set of red edges. These edges form a feasible solution to $\Pi$: each white node of degree $d$ is incident to exactly one red edge, and each black node of degree $2$ is incident to at most one red edge.
\end{proof}

\subsection{Proof of Theorem \ref{thm:log-upper}}

Lemma \ref{calim2section6} shows that any problem that is not of type $\tI$, $\tII$, $\tIII$, $\tIV$, $\tV$ or $\tVI$ falls in one of the following four classes:
\begin{enumerate}[noitemsep]
    \item $(2,1)$-resilient problems and $(1,2)$-resilient problems.
    \item Bipartite matching and equivalent problems.
    \item Hypergraph matching and equivalent problems.
    \item Edge grabbing and equivalent problems.
\end{enumerate}
Theorems \ref{theorem21resil}, \ref{PMatchingTheorem}, \ref{whiteoneBlackAllOrNothing}, and \ref{edgeGrabbingThrm} show that problems in each of the four classes can be solved in $O(\log n)$ rounds. This completes the proof.

\section{Logarithmic lower bounds}\label{sec:log-lower}

In this section we will show that all binary labeling problems that are not of the type discussed in Section~\ref{sec:unsol}, Section~\ref{sec:constant}, or Section~\ref{sec:global} require $\Omega(\log n)$ rounds with deterministic distributed algorithms. This will give the lower bounds for the final section of Table~\ref{tab:deterministic}, and hence it completes the classification of all binary labeling problems. Formally we prove the following theorem.
\begin{theorem}\label{thm:log-lower}
Any problem that is not of type \tI, \tII, \tIII, \tIV, \tV{} or \tVI{} has complexity $\Omega(\log n)$.
\end{theorem}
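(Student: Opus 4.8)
The plan is to combine Lemma~\ref{calim2section6} with the round elimination technique~\cite{Brandt2019automatic,Olivetti2019}. By that lemma it suffices to prove an $\Omega(\log n)$ lower bound for three families of type-\tVII{} problems: the $(2,1)$- and $(1,2)$-resilient ones; the hypergraph matching family $W=\n\p\n^{d-1}$, $B=\n\p\n^{\delta-1}$; and the special family $W=\n\p\n^{d-1}$, $B=\p\n^{\delta-1}\p$; together with all problems equivalent to these under Observation~\ref{obs:equiv}. For the remaining type-\tVII{} problems --- those that only land in one of the three families after being ``made harder'' in the proof of Lemma~\ref{calim2section6} --- note that a lower bound for the harder variant does not transfer downward, so I would instead re-run the argument directly on the original, more relaxed problem; a short finite case analysis over the handful of relevant constraint patterns should show that each of them is itself, after $O(1)$ applications of round elimination, a nontrivial fixed point.

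For the resilient problems the first observation is that sinkless orientation (Example~\ref{ex:sinkless-orientation}) is $(2,1)$-resilient and of type \tVII{}, so a number of resilient type-\tVII{} problems --- including sinkless-and-sourceless orientation and even orientation --- are restrictions of sinkless orientation or one of its four equivalents, and for those the $\Omega(\log n)$ bound is inherited for free via Observation~\ref{obs:restriction} from the known lower bound~\cite{Brandt2016,chang16exponential}. The resilient problems not captured this way (those whose white constraint has both endpoints equal to $\p$, say), together with the hypergraph matching and special families, require a direct argument: for each one I would compute the round-eliminated problem $\mathrm{re}(\Pi)$ via the subset construction on the two-symbol alphabet, simplify it back to a clean combinatorial description, and show that after $O(1)$ steps one reaches a \emph{nontrivial fixed point} $\Pi^\ast$ --- a problem that is not $0$-round solvable, yet such that $\mathrm{re}(\Pi^\ast)$ is a restriction of a relabeling of $\Pi^\ast$. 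Exactly as in the classical sinkless orientation lower bound, the presence of such a fixed point on the round-elimination path, together with the standard argument bounding the number of usable steps by $\Omega(\log n)$, yields the deterministic $\Omega(\log n)$ lower bound in the LOCAL model. The hypergraph matching and special families are precisely the \emph{new} nontrivial fixed points advertised in the introduction; they cannot be reduced to sinkless orientation because they only arise for $d,\delta\ge 3$, a regime in which sinkless orientation (which needs $\delta=2$) is not even defined.

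The step I expect to be the main obstacle is the round elimination bookkeeping for the two new families: one must carry out the subset construction, recognise the resulting (a priori large-alphabet) constraint as a clean combinatorial problem, and verify the fixed-point property \emph{uniformly for all $d,\delta\ge3$} rather than for a few small cases. A second delicate point is passing from a port-numbering-model statement to a genuine deterministic LOCAL lower bound of $\Omega(\log n)$: here one wants to invoke the machinery behind the sinkless orientation lower bound essentially verbatim, which requires checking that the new fixed points satisfy the same hypotheses. Finally, the ``made harder'' boundary cases of Lemma~\ref{calim2section6} must be handled one by one, precisely because hardness of the harder variant does not automatically imply hardness of the original.
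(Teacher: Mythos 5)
Your high-level plan (reduce to a few core problems, inherit the sinkless-orientation bound where possible, and prove new round-elimination fixed points for the rest) points in the right direction, but the steps you defer are exactly where the proof lives, and your specific plan for them has concrete problems. First, you build on Lemma~\ref{calim2section6}, which is the \emph{upper-bound} decomposition; as you note yourself, its ``can be made harder'' clause is useless for lower bounds, so you are left with an open-ended case analysis. The paper instead proves a separate decomposition tailored to lower bounds (Lemma~\ref{lem:classificationLB}): every type-\tVII{} problem can be \emph{relaxed} to either bipartite sinkless orientation or a \emph{forbidden degree} problem ($W=\{0,\dots,d\}\setminus\{i\}$, $B=\{1,\dots,\delta-1\}$), and hardness of a relaxation transfers down to the original problem. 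This also contradicts your claim that the hypergraph matching family cannot be reduced to sinkless orientation: the paper works with the \emph{bipartite} version of sinkless orientation ($W=\p\ppp\n$, $B=\n\p\ppp$, defined for all $d,\delta\ge 3$), and since the hypergraph matching family has $w_0=\n$ and $b_\delta=\n$ it relaxes directly to it. The special family likewise relaxes to a forbidden degree problem, so neither family needs its own fixed point.

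Second, your method for the remaining cases --- iterate round elimination on the binary problem itself and hope to land on a fixed point after $O(1)$ steps --- is not substantiated and is not what works. Round elimination on a two-label problem produces a problem over $2^{\Sigma}$, and there is no reason the result is a restriction of a relabeling of the original; the paper instead hand-crafts a four-label relaxation \sfdso{} of the forbidden degree problem (Definition~\ref{def-p0}) and verifies by explicit computation that \emph{two} round elimination steps map it back to itself (Lemmas~\ref{lem:speedup-part-1} and~\ref{lem:speedup-part-2}). Identifying this intermediate problem is the main technical contribution, and nothing in your proposal produces it. Finally, your route from ``fixed point'' to a deterministic $\Omega(\log n)$ bound in the LOCAL model is left as a delicate point; the paper resolves it by running the round elimination with a quantitative local-failure-probability analysis (Lemma~\ref{lem:rand-generalized-simulation}) to get a \emph{randomized} $\Omega(\log\log n)$ bound (Theorem~\ref{thm:forbiddenDegree}), and then invoking the $\omega(\log^* n)$--$o(\log n)$ deterministic gap theorem of \cite{chang16exponential} to lift it to $\Omega(\log n)$ (Corollary~\ref{cor:forbiddenDegree}). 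Without that lifting step, or an ID-aware deterministic round elimination argument, your sketch does not yield the claimed bound.
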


To prove Theorem~\ref{thm:log-lower}, we first introduce two families of problems: \emph{forbidden degree problems} and \emph{sinkless orientation problems}. In Lemma~\ref{lem:classificationLB} we will prove that any problem not covered in Sections \ref{sec:unsol}--\ref{sec:global} is at least as hard as a forbidden degree problem or a sinkless orientation problem. Then we will prove a lower bound of $\Omega(\log n)$ for both problem families.

\subsection{Definitions}

\begin{definition}[Forbidden degree]\label{def:forbidden-degree}
Problems of the form $\Pi = (\ppp\n\ppp,\n\ppp\n)$ are called \emph{forbidden degree problems}.
\end{definition}
\begin{remark}
Note that here we can have any degrees $d, \delta \geq 2$.
In forbidden degree problems, there is one value $i \notin W$, $0 < i < d$.
This value is known as the forbidden degree; the intuition here is that white nodes can have any $X$-degree except $i$.
\end{remark}

\begin{definition}[Bipartite sinkless orientation]\label{def:bipartite-so}
Problems of the form $\Pi = (\p\ppp\n,\n\p\ppp)$ are called \emph{bipartite sinkless orientation problems}.
\end{definition}

\begin{remark}
Note that here we can have any degrees $d, \delta \geq 2$.
For bipartite sinkless orientation, it is known~\cite{Brandt2016,chang16exponential,chang18complexity} that the same lower bounds of $\Omega(\log \log n)$ and $\Omega(\log n)$ as above hold for the case $d, \delta \geq 3$, and it is not hard to extend these bounds to the case where $d = 2$ or $\delta = 2$.
However, to avoid case distinctions and make sure that the bounds also hold in our setting (where, e.g., we have a tree instead of a general graph), we will prove these lower bounds for bipartite sinkless orientation in an analogous fashion to the lower bound proofs for the forbidden degree problems.
\end{remark}

\subsection{Coverage}

We will first show that all problem that are not unsolvable (types $\tI$ and $\tII$) or trivial (types $\tIII$ and $\tIV$) are at least as hard as forbidden degree problems or bipartite sinkless orientation problems:

\begin{lemma}\label{lem:classificationLB}
Let $\Pi=(d,\delta,W,B)$ be a problem that is not of type $\tI$, $\tII$, $\tIII$ or $\tIV$. Then there is a relaxation $\Pi' \supseteq \Pi$ such that $\Pi'$ is equivalent to a forbidden degree problem or a bipartite sinkless orientation problem.
\end{lemma}
\begin{proof}
As before, we denote the bits in $W$ and $B$ by $w_0, w_1, \dots, w_d$ and $b_0, b_1, \dots, b_\delta$, respectively. Consider all possible $4$-bit strings $w_0\,b_0\,w_d\,b_{\delta}$:
\begin{itemize}
\item $\p\p\x\x$: We have $\Pi = (\p\x\xxx, \p\x\xxx) \in \fIVa$.
\item $\x\x\p\p$: We have $\Pi = (\x\xxx\p, \x\xxx\p)$, which is equivalent to $\Pi = (\p\x\xxx, \p\x\xxx)$ above.
\item $\n\x\x\n$: We have $\Pi = (\n\x\xxx, \x\xxx\n) \subseteq \Pi' = (\n\p\ppp, \p\ppp\n)$.\\
\phantom{$\n\x\x\n$: }Now $\Pi'$ is a bipartite sinkless orientation problem.
\item $\x\n\n\x$: We have $\Pi = (\x\xxx\n, \n\x\xxx)$, which is equivalent to $\Pi = (\n\x\xxx, \x\xxx\n)$ above.
\item $\p\n\p\n$: We have $\Pi = (\p\xxx\p, \n\xxx\n)$.\\
\phantom{$\p\n\p\n$: }If $\Pi = (\p\p\ppp, \n\xxx\n)$, then $\Pi \in \fIIIb$.\\
\phantom{$\p\n\p\n$: }Otherwise $\Pi \subseteq \Pi' = (\ppp\n\ppp, \n\ppp\n)$, and $\Pi'$ is a forbidden degree problem.
\item $\n\p\n\p$: We have $\Pi = (\n\xxx\n, \p\xxx\p)$, which is equivalent to $\Pi = (\p\xxx\p, \n\xxx\n)$ above.
\qedhere
\end{itemize}
\end{proof}

\subsection{Automatic round elimination} \label{ssec:automatic-simulation}

We will show in this section that all forbidden degree problems require $\Omega(\log \log n)$ randomized time in the LOCAL model, and the same holds also for bipartite sinkless orientation. To prove the claim, we will use \emph{round elimination}. We begin by defining the necessary concepts.

The various round elimination proofs~\cite{Naor1991,Brandt2016,chang18complexity,Balliu2019weak2col,Balliu2019,Brandt2019automatic,Laurinharju2014} have all used slightly different notation. In this section we adopt the framework of Brandt~\cite{Brandt2019automatic} and much of the notation of Balliu et al.\ \cite{Balliu2019}.

\paragraph{Model of computing.} We use the randomized model of computing from Section~\ref{sec:model}, with two important distinctions.
\begin{enumerate}
  \item We always assume that the algorithm is running in the middle of a 2-colored tree with white degree $d$ and black degree $\delta$. In particular, we assume that the nodes never see any irregularities, such as leaves. This implies that the topology of the input network is fixed. We show that the problem is hard even with this promise.
  \item We assume that the nodes have random port numbers assigned independently from their random bits. This assumption simplifies the analysis of Lemma~\ref{lem:rand-generalized-simulation}. This model can simulate the randomized LOCAL model, since there an algorithm must work with high probability for \emph{any} fixed port numbering.
\end{enumerate}

\paragraph{Bipartite labeling problems.} In this section we consider a generalized version of binary labeling problems with $|\Sigma| > 2$. A \emph{bipartite labeling problem} is given by a tuple $(\Sigma,d,\delta,W,B)$, where $\Sigma$ is the alphabet, $d$ is the white degree, $\delta$ is the black degree, $W$ is the set of \emph{white configurations}, and $B$ is the set of \emph{black configurations}.
Here, each white configuration is a multiset containing exactly $d$ (not necessarily distinct) labels from $\Sigma$, and each black configuration is a multiset containing exactly $\delta$ (not necessarily distinct) labels from $\Sigma$.
The problem is solved correctly if each edge is labeled with some $\sigma \in \Sigma$, and for each white, resp.\ black, node, the multiset $\{ x_1, x_2, \dots, x_k \}$, where $k=d$, resp.\ $k=\delta$, of incident edge labels is a configuration contained in $W$, resp.\ $B$.

We represent (sets of) configurations as regular expressions~\cite{Balliu2019}. We use the shorthands $[x_1 x_2 \cdots x_k ] = x_1 | x_2 | \cdots | x_k$ and $x^k$ for $x$ repeated $k$ times, e.g., the regular expression $XY^2[XZ]$ stands for the set containing the two multisets $\{ X, X, Y, Y\}$ and $\{ X, Y, Y, Z \}$.

\paragraph{Output problems.} For each bipartite labeling problem $\Pi = (\Sigma, d, \delta, W, B)$ there exists a problem $\Pi' = (2^{\Sigma}, d, \delta, W', B')$ such that if $\Pi$ can be solved in $T$ rounds by a white algorithm, then $\Pi'$ can be solved in $T-1$ rounds by a black algorithm (with slightly worse success probability). We will provide a definition for such a problem below and call the problem $\Pi'$ specified by the definition the \emph{black output problem of $\Pi$}.
Similarly, if $\Pi$ can be solved in $T$ rounds by a black algorithm, we define a \emph{white output problem of $\Pi$} that can be solved by a $(T-1)$-round white algorithm.
Note that the definitions of the black and white output problems of $\Pi$ depend only on $\Pi$ and not on $T$.

In the following we define the black output problem $\Pi'$ of $\Pi$. The white output problem is obtained by applying the same operations with the roles of white and black reversed.

Given $\Pi$, we form the black constraint $B'$ of $\Pi'$ as follows. A multiset $X_1 X_2 \dots X_{\delta}$, where $X_i \subseteq 2^{\Sigma}$, is in $B'$ if and only if for all $(x_1, x_2, \dots, x_{\delta}) \in X_1 \times X_2 \times \cdots \times X_{\delta}$ we have that $x_1 x_2 \dots x_{\delta}$ is in $B$.
The white constraint is formed as follows. A multiset $Y_1 Y_2 \dots Y_d$, where $Y_i \subseteq 2^{\Sigma}$, is in $W'$ if and only if there exist $y_1 \in Y_1, y_2 \in Y_2, \dots, y_d \in Y_d$ such that $y_1 y_2 \dots y_d$ is in $W$.

Given a solution to the black (resp.\ white) output problem of $\Pi$, the white (resp.\ black) nodes can solve the initial problem $\Pi$ in one additional round by construction~\cite{Brandt2019automatic}.
We show essentially the converse in Lemma~\ref{lem:rand-generalized-simulation}.

\paragraph{Simplification via maximality.} We simplify any black output problem $\Pi' = (2^{|\Sigma|},d,\delta,W',B')$ using an observation of Brandt~\cite[Section 4.2]{Brandt2019automatic}. We consider only \emph{maximal} black configurations: if there are black configurations $Y_1 Y_2 \cdots Y_\delta$ and $Z_1 Z_2 \cdots Z_\delta$ such that $Y_i \subseteq Z_i$ for all $i$, then we can safely remove the constraint $Y_1 Y_2 \cdots Y_\delta$ from $B'$ without changing the (black) complexity of the problem. In particular, any black algorithm $A$ solving $\Pi'$ can always safely output $Z_1 Z_2 \cdots Z_\delta$ instead of $Y_1 Y_2 \cdots Y_\delta$ since this can only increase the chances that the white constraint is satisfied at a white node, according to our definition of a black output problem. Moreover, we can remove all white configurations that contain labels (from $2^\Sigma$) that are not contained in any of the black configurations (after applying the above maximality argument), as those labels cannot occur in a correct output. We will use these simplifications (and the analogous versions for white output problems) throughout the proofs of Lemmas~\ref{lem:speedup-part-1}~and~\ref{lem:speedup-part-2}.

\paragraph{Local failure probability.} Way say that an algorithm $A$ has local failure probability at most $p$ if for all nodes, the probability that the labeling incident to that node does not satisfy the constraint of that node is at most $p$. Since randomness is the only input, all white nodes have degree $d$, all black nodes have degree $\delta$, and the graph is a tree, all white nodes and all black nodes in the middle of the tree, respectively, have the same local failure probability.

\subsection{Local success probability decay}

In this section we show how to bound the failure probability of the round elimination. This proof is essentially present in previous round elimination arguments, but has always been presented in a problem-specific way. Lemma~\ref{lem:rand-generalized-simulation} presents it here in full generality. In particular, our argument does not depend on the particular structure of the LCL, only the number of different output labels and the degrees of the nodes.

\begin{lemma} \label{lem:rand-generalized-simulation}
  Let $A$ be a white $T$-round anonymous randomized algorithm for a bipartite labeling problem $\Pi$ with local failure probability at most $p$ in the randomized LOCAL model. Then there exists a black $(T-1)$-round algorithm $A'$ for $\Pi'$, the black output problem of $\Pi$, with local failure probability at most $Kp^{1/(\delta+1)}$, for some constant $K = K(\delta, d, |\Sigma|)$. Respectively, for a black $T$-round algorithm for $\Pi$ with local failure probability $p$ there exists a white $(T-1)$-round algorithm for the white output problem of $\Pi$ with local failure probability at most $Kp^{1/(d+1)}$ for some constant $K = K(d, \delta, |\Sigma|)$.
\end{lemma}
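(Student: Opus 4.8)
The plan is to prove this via a standard round-elimination "simulation" argument, carried out in full generality (no reference to the specific LCL structure, only to $|\Sigma|$, $d$, $\delta$). I will prove the first statement; the second follows by symmetry (swapping the roles of white and black, and $d$ and $\delta$). Given a white $T$-round algorithm $A$ for $\Pi$ with local failure probability $\le p$, I want to build a black $(T-1)$-round algorithm $A'$ for the black output problem $\Pi'$. The key idea is that a black node $v$, in $T-1$ rounds, sees everything that each of its $\delta$ white neighbors would see except $v$'s own random bits and port numbers; so $v$ can, for each incident edge $e=\{v,u\}$, compute the \emph{set} $X_e \subseteq 2^\Sigma$ of all labels that $A$ could output on $e$ from $u$'s perspective, ranging over all possible completions of the missing information at $v$. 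Formally: $v$ collects its radius-$(T-1)$ view; for each white neighbor $u$ and each possible assignment $(\rho, r)$ of port numbers and random bits to $v$ itself, $v$ can reconstruct $u$'s radius-$T$ view and hence the label $A$ would assign to edge $\{v,u\}$; let $X_{\{v,u\}}$ be the set of labels obtained this way. Node $v$ outputs the configuration $X_{e_1} X_{e_2} \cdots X_{e_\delta}$ on its incident edges.

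The correctness analysis has two parts. First, \emph{black feasibility}: whenever $A$ would have produced a feasible labeling at $v$ (i.e.\ for every actual completion the resulting $\Sigma$-labels satisfy $B$), the output configuration $X_{e_1}\cdots X_{e_\delta}$ lies in $B'$ by the very definition of the black output problem — the product $X_{e_1}\times\cdots\times X_{e_\delta}$ consists exactly of the label tuples $A$ could realize at $v$, all of which are in $B$. So $v$ fails in $A'$ only if $v$ was going to fail in $A$, giving black failure probability $\le p$ — in fact we don't even lose a factor here. Second, \emph{white feasibility}: consider a white node $u$ with neighbors $v_1,\dots,v_d$. In $A'$, edge $\{u,v_j\}$ gets the set $X_{\{u,v_j\}}$, and by construction the label $A$ actually outputs on $\{u,v_j\}$ (with the true random bits/ports) is \emph{one element} of $X_{\{u,v_j\}}$. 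So if $A$ succeeds at $u$, then there is a choice $y_j \in X_{\{u,v_j\}}$ with $y_1\cdots y_d \in W$, which is exactly the condition for the configuration at $u$ to be in $W'$. The subtlety is that whether $u$ succeeds in $A$ now depends on random bits of \emph{all} of $u$'s neighbors $v_1,\dots,v_d$, and these are no longer independent of $u$'s own view — but we only need a bound on the probability that $u$ fails, and here we use an averaging/union argument over the $\le (\delta+1)$ "slots" whose random bits are being marginalized out.

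This averaging step is where the $p^{1/(\delta+1)}$ and the constant $K$ come from, and it is the \textbf{main obstacle} — the rest is bookkeeping. The standard argument (due to Brandt, following Naor–Stockmeyer): fix a white node $u$ and its neighborhood. Partition the random bits relevant to $u$'s success in $A$ into the bits "owned" by $v_1,\dots,v_d$ (the $d$ black neighbors of $u$) plus the bits owned by $u$ — but actually the relevant decomposition is different: one considers the radius-$T$ ball around $u$ and groups the randomness by which of the $\le \delta+1$ "half-edge directions" at a black node it reaches through. The point is that $u$'s failure event in $A'$ is contained in an event of the form "for some black neighbor $v_j$, conditioned on a typical configuration of the other randomness, the conditional probability that $u$ fails in $A$ is large". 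If $\Pr[u \text{ fails in } A'] = q$, then by an averaging argument there is a black node among the $\delta+1$ candidates and a fixing of its local randomness after which $u$ fails in $A$ with probability $\ge q^{\delta+1}/K'$ for a constant $K'$ depending on $d,\delta,|\Sigma|$ (the constant absorbs the number of possible local views/outputs that one must union-bound over); but $A$ has local failure probability $\le p$, so $q^{\delta+1}/K' \le p$, i.e.\ $q \le (K'p)^{1/(\delta+1)} \le Kp^{1/(\delta+1)}$. I would carry this out by (i) defining the views and the algorithm $A'$ precisely, (ii) checking black feasibility (the clean direction), (iii) setting up the probability space for white feasibility by identifying the $\delta+1$ marginalized coordinates, and (iv) invoking the averaging lemma to convert $A'$'s white failure probability into a lower bound on $A$'s white failure probability, then solving for the bound. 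The symmetric statement for black input algorithms is obtained verbatim with $d$ and $\delta$ interchanged.
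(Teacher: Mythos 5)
Your construction of $A'$ is the \emph{deterministic} round-elimination step, and it breaks exactly where randomness enters. You define $A'(v,u)$ as the set of \emph{all} labels that $A$ could output on $\{v,u\}$ over all completions of the missing information at $v$, and then claim that the black node $v$ fails in $A'$ only if it ``was going to fail in $A$,'' so the black failure probability is at most $p$ with no loss. This is false: the configuration $X_{e_1}\cdots X_{e_\delta}$ lies in $B'$ only if \emph{every} tuple in the product $X_{e_1}\times\cdots\times X_{e_\delta}$ is in $B$, i.e., only if $A$ succeeds at $v$ with conditional probability $1$ given $I_{T-1}(v)$ --- a condition on all completions, not on the realized one. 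If, say, every label has some tiny positive probability on every edge, then $X_e=\Sigma$ for all $e$ and $A'$ fails at every black node (unless $B$ contains all of $\Sigma^\delta$), even though $A$ has failure probability $1/n$. You are conflating the event ``$A$ fails at $v$'' with ``some completion makes $A$ fail at $v$''; these coincide only for deterministic algorithms. Your subsequent averaging argument is then aimed at the white side, where under your definition there is no loss at all (the true label is always in $X_e$), so it does not repair the actual problem, and the ``$\delta+1$ slots'' decomposition you gesture at does not correspond to anything in the white node's neighborhood (a white node has $d$ black neighbors, not $\delta+1$).

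The missing idea is the \emph{frequency threshold}, which is the heart of the paper's proof. Define $A'(v,u)$ to be the set of labels $\sigma$ with $\Pr[A(u,v)=\sigma \mid I_{T-1}(v)]\ge f$ for a parameter $f$ to be chosen. Because the graph is a tree and nodes are anonymous, the outputs of $v$'s neighbors are conditionally independent given $I_{T-1}(v)$; hence if some combination of frequent labels violates $B$, then $A$ itself fails at $v$ with conditional probability at least $f^\delta$, which bounds the black failure probability of $A'$ by roughly $p/f^\delta$ (after discarding an atypical set of ``bad'' neighborhoods, contributing another $p/r$ term with $r=f^\delta$). On the white side the truncation costs you: the true label fails to be frequent with probability at most $|\Sigma|f$ per edge, giving white failure at most $p+d|\Sigma|f$. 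Balancing $p/f^\delta$ against $d|\Sigma|f$ yields $f=\Theta\bigl(p^{1/(\delta+1)}\bigr)$ and the stated bound $Kp^{1/(\delta+1)}$. So the exponent $1/(\delta+1)$ arises from trading off the two sides of the truncation, not from an averaging argument over neighbors on the white side.
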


\begin{proof}
  We prove the case of initial white algorithm. The case of a black algorithm is symmetric. We connect the failure probability of algorithm $A'$ to the failure probability of the original algorithm $A$. We denote the $T$-hop ball around node $v$, considered as a graph, by $B_T(v)$. We denote the labeled graph obtained by labeling $B_T(v)$ with the inputs of each node by $I_T(v)$.

  Given a $T$-round white algorithm $A$, black nodes can execute a $(T-1)$-round \emph{simulation} algorithm $A'$ as follows. Each black node $v$ looks at the input, including the random inputs and the random port numbers, inside its $(T-1)$-neighborhood. It simulates $A$ at each of its neighbors, treating them as the active nodes, over all possible assignments of random inputs outside $B_{T-1}(v)$. We observe that given $I_{T-1}(v)$, the outputs of the neighbors of $v$, as random variables, are independent from each other: for all $u, w \in N(v)$ we have that $(B_T(u) \cap B_T(w)) \setminus B_{T-1}(v)$ is empty. This follows from the fact that we assumed the input graph to be a tree and implies independence since we assumed that the nodes do not have any correlated inputs, in particular that the nodes do not have unique identifiers. Each black node $v$ can independently compute the conditional probabilities $\Pr[A(u,v) = \sigma \mid I_{T-1}(v)]$ for all labels $\sigma$ and neighbors $u$, that is, the probability of $u$ outputting $\sigma$ on edge $\{u,v\}$ as an active node executing algorithm $A$ after fixing $I_{T-1}(v)$.

  We say that an output label $\sigma$ is \emph{frequent} in $A$ at node $u$ for $v$ if
  \[
    \Pr\bigl[A(u,v) = \sigma \bigm| I_{T-1}(v)\bigr] \geq f.
  \]
  We will set the value of $f$ later. The output of the simulation algorithm $A'$ for edge $\{v,u\}$ at $v$, denoted by $A'(v,u)$, is defined as the set of frequent output labels, that is,
  \[
    A'(v,u) = S_v(u) = \{ \sigma \in \Sigma \mid \text{$\sigma$ is frequent in $A$ at $u$ for $v$} \}.
  \]
  Note that $v$ can determine $A'(v,u)$ for each neighbor $u$ based on $I_{T-1}(v)$.

  We want to consider only the random assignments $I_{T-1}(v)$ that make $A$ likely to succeed, and discard the tail of assignments that cause $A$ to fail with a large probability. We say that the $(T-1)$-neighborhood $I_{T-1}(v)$ of a black node $v$ is \emph{good} if $A$ is likely to succeed around it. Let $E(v)$ denote the event that for a black node $v$ the outputs $A(u,v)$ for all $u \in N(v)$ form a black configuration in $\Pi$. For a parameter $r$ to be fixed later, we say that $I_{T-1}(v)$ is good if and only if $\Pr[E(v) \mid I_{T-1}(v)] \geq 1-r$. Since the outputs of the neighbors are independent, this probability can again be determined by $v$ based on $I_{T-1}(v)$.

  Most random assignments $I_{T-1}(v)$ must be good, as otherwise the original algorithm $A$ would not work with probability $1-p$: we have that $r \cdot \Pr[I_{T-1}(v) \text{ is not good}] \leq p$. We get that
  \begin{align*}
    \Pr[\text{$I_{T-1}(v)$ is good}] &= 1 - \Pr[\text{$I_{T-1}(v)$ is not good}] \\
    & \geq 1 - \frac{p}{r}.
  \end{align*}
  If the neighborhood is not good, the conditional failure probability of $A$ is at least $r$. As the total failure probability is at most $p$, we get the above. We can lower bound the success probability of black nodes in $A'$ by observing that
  \begin{equation} \label{eq:success-simulation}
      \Pr[\text{$A'$ succeeds at $v$}] \geq \Pr[\text{$I_{T-1}(v)$ is good}](1 - \Pr[\text{$A'$ fails at $v$} \mid \text{$I_{T-1}(v)$ is good}]).
  \end{equation}

  Assume that the $(T-1)$-neighborhood of a black node $v$ is good. Let $u_1, \dots, u_{\delta}$ be the neighbors of $v$. Assume that $A'$ fails at node $v$, i.e., assume that there exist $\sigma_1 \in S_v(u_1), \sigma_2 \in S_v(u_2), \dots, \sigma_{\delta} \in S_v(u_{\delta})$ such that the multiset $\{ \sigma_1, \dots, \sigma_{\delta} \}$ is not contained in the black constraint $B$ of $\Pi$. This implies that there is a significant probability for $A$ to fail in this neighborhood as well. Since the labels $\sigma_1, \sigma_2, \dots, \sigma_\delta$ appear independently and are all frequent, the probability having $A(u_i,v) = \sigma_i$ for all $i$ given $I_{T-1}(v)$ is at least $f^\delta$. We have that
  \[
    \Pr[\text{$I_{T-1}(v)$ is good}] \cdot \Pr[\text{$A'$ fails at $v$} \mid \text{$I_{T-1}(v)$ is good}] \cdot f^{\delta} \leq \Pr[\text{$A$ fails at $v$}].
  \]
  Since the local failure probability of $A$ is at most $p$, we have that
  \begin{align*}
    \Pr[\text{$A'$ fails at $v$} \mid \text{$I_{T-1}(v)$ is good}] &\leq \frac{\Pr[\text{$A$ fails at $v$}]}{\Pr[\text{$I_{T-1}(v)$ is good}] \cdot f^\delta} \\
    &\leq \frac{p}{(1-p/r) \cdot f^\delta}.
  \end{align*}

  By \eqref{eq:success-simulation} we get that black nodes succeed in $A'$ with probability at least
  \[
    \Bigl(1 - \frac{p}{r}\Bigr)\Bigl(1 - \frac{p}{(1-p/r) \cdot f^\delta}\Bigr) = 1 - \frac{p}{r} - \frac{p}{f^\delta}.
  \]
 By choosing $r = f^\delta$, we obtain that
  \[
    \Pr[\text{$A'$ succeeds at a black node}] \geq 1 - \frac{2p}{f^\delta}.
  \]

  Next we consider the white nodes in the simulation algorithm $A'$. According to the definition of $\Pi'$, each white node $u$ succeeds if and only if the sets algorithm $A'$ outputs on $u$'s incident edges allow for selecting one label (from $\Sigma$) from each set such that the resulting selected multiset is a configuration in $W$. This is guaranteed to happen when the true outputs $A(u,v)$ of $u$ are included in the corresponding sets $A'(v,u)$, that is, they are all frequent.

  We begin by upper bounding the probability that a true output is not frequent. Consider a white node $u$, and its black neighbor $v$. From the definition of a frequent label it follows that, for (any) fixed $I_{T-1}(v)$ and any label $\sigma \in \Sigma$, the conditional probability that $\sigma = A(u,v)$ but $\sigma$ is not frequent is at most $f$. We obtain that
  \[
    \Pr[\text{$A(u,v) \notin A'(v,u)$}] \leq |\Sigma|f
  \]
  by a union bound. A white node can fail in $A'$ if $A$ would have failed in the neighborhood or if at least one edge does not have the true output as a frequent output. Taking another union bound over all the edges, we have that a white node succeeds with probability at least $1 - p - d |\Sigma| f$.

  It remains to set $f$ to minimize the local failure probability of the simulation. First, note that we must have $f \leq 1 / |\Sigma|$ to guarantee that there is a frequent output. We choose $f$ to upper bound the maximum of $p + d |\Sigma| f$ and $2p / f^\delta$. By choosing for example $f = Cp^{1/(\delta+1)}$ for $C = (2 / (d |\Sigma|))^{1/(\delta+1)}$, we get for black nodes that
  \[
    \Pr[\text{$A'$ fails at a black node}] \leq \frac{2p}{f^{\delta}} = 2p C^{-\delta} p^{-\delta / (\delta+1)} = 2C^{-\delta}p^{1/(\delta+1)},
  \]
  and for white nodes that
  \[
    \Pr[\text{$A'$ fails at a white node}] \leq d|\Sigma|f + p = d|\Sigma|Cp^{1/(\delta+1)} + p.
  \]

  By the choice of $C$ we have that both terms are bounded by
  \[
    2^{1/(\delta+1)}(d|\Sigma|)^{\delta / (\delta + 1)} p^{1/(\delta+1)} + p \leq Kp^{1/(\delta+1)}
  \]
  for some constant $K = K(\delta,d,|\Sigma|)$.
\end{proof}

\subsection{\fdso{} problem} \label{ssec:problem definition}

Our proof employs an intermediate (non-binary) LCL problem, which we refer to as \emph{forbidden degree or sinkless orientation} (FDSO). As we show in Lemma~\ref{lem:fdso-fd-so}, this problem is at least as easy as both the forbidden degree problem and bipartite sinkless orientation, i.e., as easy as the two problems that play important roles in Lemma~\ref{lem:classificationLB}. It is defined as follows in our white-black formalism.

\begin{definition} \label{def-p0}
Forbidden degree $s$ or sinkless orientation (\sfdso{}), for $0 < s < d$ is a bipartite labeling problem $(\Sigma,d,\delta,W,B)$ over four labels: $\Sigma = \{X,H,T,A\}$. The white constraint $W$ is given by the configurations
\begin{align*}
    &AX^{d-1}\\
    &H^{s+1}X^{d-s-1}\\
    &T^{d-s+1}X^{s-1},
\end{align*}
and the black constraint $B$ is given by the configurations
\begin{align*}
    &X[AHTX]^{\delta - 1}\\
    &HT[AHTX]^{\delta - 2}.
\end{align*}
\end{definition}

\begin{lemma} \label{lem:fdso-fd-so}
	\sfdso{} is at least as easy as the bipartite sinkless orientation problem (for the same $d, \delta$), and at least as easy as the forbidden degree problem (for the same $d, \delta, s$).
\end{lemma}

\begin{proof}
Fix $d, \delta \geq 2$ and $0 < s < d$ arbitrarily. We first show that \sfdso{} is at least as easy as bipartite sinkless orientation, i.e., as the problem with $W = \p\ppp\n$ and $B = \n\p\ppp$. Given a solution to bipartite sinkless orientation, a white node labels one 0-edge with $A$, and the remaining edges with $X$. This is clearly a valid solution for \sfdso{}: each white node is labeled $AX^{d-1}$ and each black node has a 1-edge that is labeled $X$.

Now, we show that \sfdso{} is at least as easy as the forbidden degree problem (with parameters $d, \delta, s$). Assume we are given a solution to $\Pi=(d,\delta,W',B')$ for $W' = [d] \setminus \{ s \}$, and $B' = \{ 1,2,\dots,\delta-1 \}$. In the solution to $\Pi$, white nodes have either at most $s-1$ or at least $s+1$ incident edges labeled $1$. In the first case, they will select $d-s+1$ edges among the 0-labeled edges and output $T$ on those, and $X$ on the rest. In the second case, they will select $s+1$ edges among the 1-labeled edges and output $H$ on these, and $X$ on the rest. A white node is always happy according to \sfdso{}. Assume that a black node does not have any incident label $X$, and therefore does not satisfy the first set of configurations $X[AHTX]^{\delta-1}$. Then, it must have incident edges labeled 0 and 1 in $\Pi$, which must be labeled $T$ and $H$ in \sfdso{}, respectively, and it follows that it satisfies the second set of configurations $HT[AHTX]^{\delta-2}$.
\end{proof}

\subsection{\fdso{} is a fixed point} \label{ssec:fsdo-lb}

In this section we show that, for any $s$, \sfdso{} is a fixed point. In particular, after two round elimination steps starting from \sfdso{}, the output problem is again \sfdso{}.

\begin{lemma} \label{lem:speedup-part-1}
Let $\Pi_0 = \sfdso{}$ for some $0 < s < d$. Then $\Pi_1 = (2^\Sigma,d,\delta,W_1,B_1)$, the black output problem of $\Pi_0$, is the following problem. Given the renaming of labels as
\begin{align*}
\bA &=\set{X}\\
\bH &=\set{H,X}\\
\bT &=\set{T,X}\\
\bX &=\set{A,H,T,X},
\end{align*}
the white constraint $W_1$ is given by the configurations
\begin{align*}
  &\bX [\bA \bH \bT \bX]^{d-1}\\
  &\bH^{s+1}[\bA \bH \bT \bX]^{d-s-1}\\
  &\bT^{d-s+1}[\bA \bH \bT \bX]^{s-1},
\end{align*}
and the black constraint $B_1$ by the configurations
\begin{align*}
  &\bA \bX^{\delta-1}\\
  &\bH \bT \bX^{\delta-2}.
\end{align*}
\end{lemma}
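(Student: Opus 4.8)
The plan is to apply the construction of the black output problem from Section~\ref{ssec:automatic-simulation} to $\Pi_0=\sfdso{}$, then simplify it with the maximality argument of that section, and finally read off the renamed constraints. As a first step I would put the black constraint $B$ of $\Pi_0$ into a usable form: a size-$\delta$ multiset over $\Sigma=\{X,H,T,A\}$ lies in $B$ exactly when it contains an $X$, or it contains both an $H$ and a $T$; equivalently, the \emph{forbidden} black multisets are precisely those whose labels lie entirely in $\{H,A\}$ or entirely in $\{T,A\}$. By definition of the black output problem, a tuple $X_1X_2\cdots X_\delta$ of subsets of $\Sigma$ is a black configuration of $\Pi_1$ iff every selection of one label per slot produces a non-forbidden multiset.

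Next I would identify the \emph{maximal} such tuples. If some slot equals the singleton $\{X\}$, then every selection contains $X$, so the tuple is valid, and the only maximal tuple of this kind is $\{X\}\,\Sigma^{\delta-1}$. If no slot is $\{X\}$, then one can avoid $X$ in every slot, and a short case analysis shows the tuple is valid iff some slot is contained in $\{H,X\}$ and contains $H$, and some other slot is contained in $\{T,X\}$ and contains $T$ (this is exactly what forces every $X$-avoiding selection to contain both an $H$ and a $T$); the only maximal such tuple is $\{H,X\}\,\{T,X\}\,\Sigma^{\delta-2}$. The two maximal tuples are incomparable, so after the maximality simplification the black constraint of $\Pi_1$ consists of exactly these two, which under the renaming $\bA=\{X\}$, $\bH=\{H,X\}$, $\bT=\{T,X\}$, $\bX=\Sigma$ become $\bA\bX^{\delta-1}$ and $\bH\bT\bX^{\delta-2}$, as claimed; in particular the only subsets of $\Sigma$ that survive in the simplified problem are $\bA,\bH,\bT,\bX$.

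Finally I would compute the white constraint $W_1$. The maximality simplification discards every white configuration using a label other than $\bA,\bH,\bT,\bX$, so it remains to decide, for $Y_1\cdots Y_d$ with each $Y_i\in\{\bA,\bH,\bT,\bX\}$, whether some choice of $y_i\in Y_i$ gives a multiset in $W=AX^{d-1}\mid H^{s+1}X^{d-s-1}\mid T^{d-s+1}X^{s-1}$. Using that $X$ belongs to all four sets, $A$ only to $\bX$, $H$ only to $\bH$ and $\bX$, and $T$ only to $\bT$ and $\bX$, one checks: $AX^{d-1}$ is attainable iff some $Y_i=\bX$; $H^{s+1}X^{d-s-1}$ is attainable iff at least $s+1$ of the slots lie in $\{\bH,\bX\}$, which in the absence of a $\bX$-slot means at least $s+1$ copies of $\bH$; and symmetrically $T^{d-s+1}X^{s-1}$ is attainable iff at least $d-s+1$ slots lie in $\{\bT,\bX\}$. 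Taking the union of the three cases yields exactly $W_1=\bX[\bA\bH\bT\bX]^{d-1}\mid\bH^{s+1}[\bA\bH\bT\bX]^{d-s-1}\mid\bT^{d-s+1}[\bA\bH\bT\bX]^{s-1}$.

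I expect the main obstacle to be the maximal-black-configuration step: one must argue carefully \emph{both} that enlarging either listed tuple (in any single slot) produces an explicit forbidden selection, \emph{and} that every valid tuple is dominated by one of the two. The white-constraint computation, and the verification that the two maximal black configurations are incomparable and exhaust exactly the labels $\bA,\bH,\bT,\bX$, are then routine bookkeeping.
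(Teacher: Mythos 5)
Your proposal is correct and follows essentially the same route as the paper: apply the definition of the black output problem, use the maximality simplification to reduce the black constraint to the two configurations $\bA\bX^{\delta-1}$ and $\bH\bT\bX^{\delta-2}$, and then compute the white constraint over the surviving labels $\bA,\bH,\bT,\bX$. The only differences are presentational (you phrase the black constraint via forbidden multisets and split cases on whether a slot equals $\{X\}$, where the paper argues domination by contradiction), and you correctly flag the maximality verification as the step requiring the most care.
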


\begin{proof}
  According to the definition of a black output problem, we have to show that
\begin{enumerate}[label=(\arabic*)]
  \item the specified black constraint $B_1$ contains exactly those maximal configurations $\bY_1 \dots \bY_{\delta}$ for which for every choice $Y_1 \in \bY_1, \dots, Y_\delta \in \bY_{\delta}$, we have that $Y_1 \dots Y_\delta$ is a black configuration for $\Pi_0$, and
  \item the specified white constraint $W_1$ contains exactly those configurations $\bZ_1 \dots \bZ_{\delta}$  for which there exists a choice $Z_1 \in \bZ_1, \dots, Z_\delta \in \bZ_{\delta}$ such that $Z_1 \dots Z_\delta$ is a white configuration for $\Pi_0$.
\end{enumerate}

  We start by proving (1). Recall that the black constraint of $\Pi_0$ is given by $X[AHTX]^{\delta - 1}$ and $HT[AHTX]^{\delta - 2}$. Consider the first configuration $\bA \bX^{\delta-1}$ in the black constraint $B_1$ of $\Pi_1$. Clearly, choosing one original label (i.e., a label from $\Sigma$) from $\bA$, and one from each of the $\delta-1$ copies of $\bX$ will result in a configuration that is contained in $X[AHTX]^{\delta - 1}$ since the only choice to pick from $\bA$ is $X$. Similarly, for the second configuration $\bH \bT \bX^{\delta-2}$ in $B_1$, any choice of labels (from $\Sigma$) from $\bH$, $\bT$, and $\delta-2$ copies of $\bX$ yields a configuration contained in $X[AHTX]^{\delta - 1}$ (if at least one $X$ is chosen), or a configuration contained in $HT[AHTX]^{\delta - 2}$ (if no $X$ is chosen, which implies that the two labels chosen from $\bH$ and $\bT$ are $H$ and $T$).

  What is left to be shown is that $\bA \bX^{\delta-1}$ and $\bH \bT \bX^{\delta-2}$ are the only maximal configurations. (It is straightforward to check that these configurations are indeed maximal by checking that any new element added to any of the sets in one of these configurations will break the property given in the definition of the black constraint of the black output problem.) Now assume that there is another maximal configuration $\bY_1 \dots \bY_\delta$. Then, due to its maximality, each of the $\bY_i$ has to contain at least one label $Y_i$ such that $Y_i \neq X$ (as otherwise $\bY_1 \dots \bY_\delta$ would be ``dominated'' by $\bA \bX^{\delta-1}$), and, using this fact, we can also infer that there is a $\bZ \in \{ \bH, \bT \}$ such that, for all $i$, we have $\bY_i \setminus \bZ \neq \emptyset$ (as otherwise $\bY_1 \dots \bY_\delta$ would be dominated by $\bH \bT \bX^{\delta-2}$). From this, it follows that we can pick $Y_1 \in \bY_1, \dots, Y_\delta \in \bY_\delta$ such that $X \neq Y_i \neq H$ for all $i$ (if $\bZ = \bH$), or $X \neq Y_i \neq T$ for all $i$ (if $\bZ = \bT$). But this implies that $Y_1 \dots Y_\delta$ is contained neither in $X[AHTX]^{\delta - 1}$, nor in $HT[AHTX]^{\delta - 2}$, yielding a contradiction. Hence, $\bA \bX^{\delta-1}$ and $\bH \bT \bX^{\delta-2}$ are the only maximal configurations. As $\bA$, $\bH$, $\bT$, and $\bX$ are the only labels that occur in the black configurations of $\Pi_1$, we can restrict attention to these labels also when we discuss the white constraints next.

  To show (2), consider the white configurations in Definition~\ref{def-p0}.
  In the following, we list all white configurations of the black output problem of $\Pi_0$, by simply going through the three white configurations of $\Pi_0$ one by one, each time listing all ``superconfigurations'' consisting of $\delta$ labels from $\{ A, H, T, X \}$ (i.e., those configurations from which the considered white configuration of $\Pi_0$ can be picked).
  For the first configuration of $\Pi_0$, this means anything containing an $A$ and anything containing $X$ (i.e.\ any label) in the remaining positions. We obtain the set of configurations $\{ A, H, T, X \} [\{ X \} \{H, X \} \{T, X \} \{ A, H, T, X \}]^{d-1}$ which is $\bX [\bA \bH \bT \bX]^{d-1}$ after renaming. For the second configuration, we require $s+1$ new labels containing $H$ and the remaining labels containing $X$. We get the set of configurations
  \[
  [\{H,X\} \{A,H,T,X\}]^{s+1} [\{ X \} \{H, X \} \{T, X \} \{ A, H, T, X \}]^{d-s-1}
  \]
  which is $[\bH, \bX]^{s+1} [\bA \bH \bT \bX]^{d-s-1}$ after renaming. We note that any contained configuration containing the label $\bX$ is already contained in the first obtained set of configurations, and therefore we can simplify the second set to the claimed $\bH^{s+1} [\bA \bH \bT \bX ]$. The last white configuration of $\Pi_0$ is similar to the second. We require $d-s+1$ new labels containing the original label $T$ and the remaining to contain the label $X$. We get the set of configurations
  \[
  [\{T,X\} \{A,H,T,X\}]^{d-s+1} [\{ X \} \{H, X \} \{T, X \} \{ A, H, T, X \}]^{s-1}
  \]
  which is $[\bT, \bX]^{d-s+1} [\bA \bH \bT \bX]^{s-1}$ after renaming. Again, any configuration containing the label $\bX$ is already contained in the first obtained set of configurations, so we can simplify the third set to $\bT^{d-s+1} [\bA \bH \bT \bX]^{s-1}$.
As we can see, the obtained white configurations are exactly those listed in the lemma, which completes the proof.
\end{proof}

We proceed to prove that the white output problem of $\Pi_1$ is $\sfdso{}$.

\begin{lemma} \label{lem:speedup-part-2}
  The white output problem $\Pi_2$ of $\Pi_1$, from Lemma~\ref{lem:speedup-part-1}, is \sfdso{}. In particular, $\Pi_2$ is the following problem. Given a renaming of the labels as
  \begin{align*}
    A &= \{ \bX \} \\
    H &= \{ \bH, \bX \} \\
    T &= \{ \bT, \bX \} \\
    X &= \{ \bA, \bH, \bT, \bX \},
  \end{align*}
  the white constraint $W_2$ of $\Pi_2$ is given by the configurations
  \begin{align*}
    &AX^{d-1}\\
    &H^{s+1}X^{d-s-1}\\
    &T^{d-s+1}X^{s-1},
  \end{align*}
  and the black constraint $B_2$ of $\Pi_2$ is given by the configurations
  \begin{align*}
    &X[AHTX]^{\delta - 1}\\
    &HT[AHTX]^{\delta - 2}.
  \end{align*}
\end{lemma}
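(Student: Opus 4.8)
The plan is to apply the definition of the white output problem---which is the construction of Section~\ref{ssec:automatic-simulation} with the roles of white and black exchanged---to $\Pi_1$, and to check that the resulting problem, after the stated renaming $A=\set{\bX}$, $H=\set{\bH,\bX}$, $T=\set{\bT,\bX}$, $X=\set{\bA,\bH,\bT,\bX}$, has exactly the white constraint, black constraint, and label set of $\sfdso{}=\Pi_0$. Concretely, I would prove two things. (i) The maximal multisets $\bY_1\cdots\bY_d$ of subsets of the label set $\set{\bA,\bH,\bT,\bX}$ of $\Pi_1$ such that \emph{every} selection $y_i\in\bY_i$ lies in $W_1$ are exactly $AX^{d-1}$, $H^{s+1}X^{d-s-1}$, and $T^{d-s+1}X^{s-1}$. (ii) The multisets $\bX_1\cdots\bX_\delta$ of subsets such that \emph{some} selection lies in $B_1$ are exactly $X[AHTX]^{\delta-1}$ and $HT[AHTX]^{\delta-2}$. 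The whole argument parallels the proof of Lemma~\ref{lem:speedup-part-1}, with the two constraints playing swapped roles.

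The black side (ii) is routine. Recall $B_1=\set{\bA\bX^{\delta-1},\ \bH\bT\bX^{\delta-2}}$, and note that among the four renamed labels only $X$ contains $\bA$, that $\bH$ lies only in $H$ and $X$, that $\bT$ lies only in $T$ and $X$, and that $\bX$ lies in all of $A,H,T,X$. Hence from $X[AHTX]^{\delta-1}$ one picks $\bA$ at the $X$-coordinate and $\bX$ elsewhere to get $\bA\bX^{\delta-1}$, and from $HT[AHTX]^{\delta-2}$ one picks $\bH$, $\bT$, and $\bX$ at the remaining coordinates to get $\bH\bT\bX^{\delta-2}$. Conversely, a selection equal to $\bA\bX^{\delta-1}$ forces one coordinate to contain $\bA$, hence to be $X$, so the multiset is in $X[AHTX]^{\delta-1}$; and a selection equal to $\bH\bT\bX^{\delta-2}$ forces one coordinate to contain $\bH$ (so it is $H$ or $X$) and another to contain $\bT$ (so it is $T$ or $X$)---if either is $X$ we are back in $X[AHTX]^{\delta-1}$, otherwise the multiset is in $HT[AHTX]^{\delta-2}$. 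Since all four labels occur among the white configurations listed below, the label-pruning simplification removes nothing.

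For the white side (i) I would first record the membership criterion for $W_1$ implicit in Lemma~\ref{lem:speedup-part-1}: a size-$d$ multiset over $\set{\bA,\bH,\bT,\bX}$ is in $W_1$ iff it contains $\bX$, or at least $s+1$ copies of $\bH$, or at least $d-s+1$ copies of $\bT$. Using $0<s<d$, the three claimed configurations are then easily checked to be valid (every selection from $AX^{d-1}$ contains $\bX$; in $H^{s+1}X^{d-s-1}$ either some selected label is $\bX$ or every $H$-coordinate contributes $\bH$, giving $s+1$ copies; symmetrically for $T^{d-s+1}X^{s-1}$) and maximal (enlarging the $A$-coordinate of $AX^{d-1}$ by a label $\ell\in\set{\bA,\bH,\bT}$ permits the selection $\ell\bA^{d-1}$, which contains no $\bX$ and at most one copy of each of $\bH$ and $\bT$, hence is not in $W_1$; the other coordinates are already the full set; analogously for the other two configurations).

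The crux is that there are no further maximal valid white configurations. Let $\bY_1\cdots\bY_d$ be valid and maximal. If some $\bY_i=\set{\bX}$, every selection contains $\bX$ irrespective of the other coordinates, so maximality forces all of them to equal $\set{\bA,\bH,\bT,\bX}$, giving $AX^{d-1}$. Otherwise each $\bY_i$ contains a label in $\set{\bA,\bH,\bT}$; call coordinate $i$ \emph{$\bH$-forced} if $\bH$ is the only element of $\bY_i\setminus\set{\bX}$, and \emph{$\bT$-forced} analogously. If at most $s$ coordinates were $\bH$-forced and at most $d-s$ were $\bT$-forced, one could build a selection that avoids $\bX$ everywhere, picks $\bA$ wherever available, picks the forced label at each forced coordinate, and splits the remaining coordinates between $\bH$ and $\bT$ so as to keep at most $s$ copies of $\bH$ and at most $d-s$ of $\bT$ (a short count shows this is always possible, since the number of remaining coordinates is at most $(s-\#\bH\text{-forced})+(d-s-\#\bT\text{-forced})$); such a selection is not in $W_1$, contradicting validity. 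Hence there are at least $s+1$ $\bH$-forced or at least $d-s+1$ $\bT$-forced coordinates. In the first case maximality pins everything down: exactly $s+1$ coordinates are $\bH$-forced (from $\geq s+2$ one could enlarge one to the full set, keeping $\geq s+1$ $\bH$-forced coordinates and hence validity, a contradiction), each of them equals $\set{\bH,\bX}=H$ (adding $\bX$ to $\set{\bH}$ preserves $\bH$-forcedness), and every other coordinate equals $\set{\bA,\bH,\bT,\bX}=X$ (adding any missing label to a non-$\bH$-forced coordinate leaves the $s+1$ $\bH$-forced coordinates intact, so validity survives); this is precisely $H^{s+1}X^{d-s-1}$. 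The symmetric case yields $T^{d-s+1}X^{s-1}$, so the list is complete and $\Pi_2=\sfdso{}$. I expect this last step---isolating the ``$\bH$-forced / $\bT$-forced coordinate count'' invariant, proving the ``no bad selection'' characterization, and running the maximality-pushing argument---to be the only real obstacle; computing $B_2$ and verifying that the three named configurations are valid and maximal are mechanical.
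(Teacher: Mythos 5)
Your proposal is correct and follows essentially the same route as the paper's proof: compute the black constraint of $\Pi_2$ by listing superconfigurations of the two black configurations of $\Pi_1$ and simplifying, verify that the three claimed white configurations are valid and maximal, and rule out any further maximal white configuration by a counting argument that builds a selection avoiding $\bX$ with at most $s$ copies of $\bH$ and at most $d-s$ copies of $\bT$. The only difference is organizational---the paper derives the coordinate-count bounds from non-domination by the three listed configurations and then contradicts validity, whereas you derive the dichotomy (at least $s+1$ ``$\bH$-forced'' or at least $d-s+1$ ``$\bT$-forced'' coordinates) from validity and then let maximality pin down the exact form---but the underlying combinatorics is identical.
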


\begin{proof}
  The proof proceeds similarly to the proof of Lemma~\ref{lem:speedup-part-1}.
  According to the definition of a white output problem, we have to show that
\begin{enumerate}[label=(\arabic*)]
  \item the specified white constraint $W_2$ contains exactly those maximal configurations $Y_1 \dots Y_d$ for which for every choice $\bY_1 \in Y_1, \dots, \bY_d \in Y_d$, we have that $\bY_1 \dots \bY_d$ is a black configuration for $\Pi_1$, and
  \item the specified black constraint $B_2$ contains exactly those configurations $Z_1 \dots Z_d$  for which there exists a choice $\bZ_1 \in Z_1, \dots, \bZ_d \in Z_d$ such that $\bZ_1 \dots \bZ_d$ is a white configuration for $\Pi_1$.
\end{enumerate}

  We start by proving (1). Recall that the white constraint of $\Pi_1$ is given by $\bX [\bA \bH \bT \bX]^{d-1}$, $\bH^{s+1}[\bA \bH \bT \bX]^{d-s-1}$, and $\bT^{d-s+1}[\bA \bH \bT \bX]^{s-1}$. Consider the first configuration $AX^{d-1}$ in the white constraint $W_2$ of $\Pi_2$. Clearly, choosing one label from $A$, and one from each of the $d-1$ copies of $X$ will result in a configuration that is contained in $\bX [\bA \bH \bT \bX]^{d-1}$ since the only choice to pick from $A$ is $\bX$. Similarly, for the second configuration $H^{s+1}X^{d-s-1}$ in $W_2$, any choice of $s+1$ labels from $H$ and $d-s-1$ labels from $X$ yields a configuration contained in $\bX [\bA \bH \bT \bX]^{d-1}$ (if at least one $\bX$ is chosen), or a configuration contained in $\bH^{s+1}[\bA \bH \bT \bX]^{d-s-1}$ (if no $\bX$ is chosen, which implies that all $s+1$ labels chosen from $H$ are $\bH$). Finally, for the third configuration $T^{d-s+1}X^{s-1}$ in $W_2$, any choice of $d-s+1$ labels from $T$ and $s-1$ labels from $X$ yields a configuration contained in $\bX [\bA \bH \bT \bX]^{d-1}$ (if at least one $\bX$ is chosen), or a configuration contained in $\bT^{d-s+1}[\bA \bH \bT \bX]^{s-1}$ (if no $\bX$ is chosen, which implies that all $d-s+1$ labels chosen from $T$ are $\bT$).

  What is left to be shown is that $AX^{d-1}$, $H^{s+1}X^{d-s-1}$, and $T^{d-s+1}X^{s-1}$ are the only maximal configurations. (Again, it is straightforward to check that these configurations are indeed maximal.) Now assume that there is another maximal configuration $Y_1 \dots Y_d$. Then, due to its maximality, each of the $Y_i$ has to contain at least one label $\bY_i$ such that $\bY_i \neq \bX$ (as otherwise $Y_1 \dots Y_d$ would be ``dominated'' by $AX^{d-1}$), at least $d-s$ many of the $Y_i$ have to contain a label that is not contained in $H$ (as otherwise $Y_1 \dots Y_d$ would be dominated by $H^{s+1}X^{d-s-1}$), and at least $s$ many of the $Y_i$ have to contain a label that is not contained in $T$ (as otherwise $Y_1 \dots Y_d$ would be dominated by $T^{d-s+1}X^{s-1}$). Now we claim that we can pick $\bY_1 \in Y_1, \dots, \bY_d \in Y_d$ such that $\bY_i \neq \bX$ for all $i$, $\bY_i \neq \bH$ for $d-s$ many $i$, and $\bY_i \neq \bT$ for $s$ many $i$ as follows.

  Let $J_H \in [d]$ be a set of indexes such that $Y_i \setminus H \neq \emptyset$ for all $i \in J_H$ and $|J_H| = d - s$, and let $J_T \in [d]$ be a set of indexes such that $Y_i \setminus T \neq \emptyset$ for all $i \in J_T$ and $|J_T| = s$. These index sets exists due to our observations above. Now for all $i \in J_H \setminus J_T$, pick a $\bY_i$ satisfying $\bX \neq \bY_i \neq \bH$; for all $i \in J_T \setminus J_H$, pick a $\bY_i$ satisfying $\bX \neq \bY_i \neq \bT$; and for all $i \notin J_H \cup J_T$, pick a $\bY_i$ satisfying $\bX \neq \bY_i$. Since $H \cap T = \{ \bX \}$, we can see that for each $i \notin J_H \cup J_T$, we have that $\bY_i$ is not contained in $H$ or not contained in $T$. Furthermore, at least $(d-s)-|J_H \cap J_T|$ of the $d-|J_H \cap J_T|$ many $\bY_i$ we fixed so far are not contained in $H$, and at least $s-|J_H \cap J_T|$ are not contained in $T$. Since $d=(d-s) + s$, it follows that we can fix the remaining $|J_H \cap J_T|$ choices $\bY_i$ (where $i \in J_H \cap J_T$) so that at least $(d-s)$ of all $\bY_i$ are not contained in $H$, and at least $s$ of all $\bY_i$ are not contained in $T$. This concludes the proof of the claim.

  The choice of our $\bY_i$ immediately implies that $\bY_1 \dots \bY_d$ is not a configuration contained in the white constraint $W_1$ of $\Pi_1$, yielding a contradiction to our assumption. Hence, $AX^{d-1}$, $H^{s+1}X^{d-s-1}$, and $T^{d-s+1}X^{s-1}$ are the only maximal configurations.

  To show (2), consider the black configurations in $\Pi_1$. In the following, we list all black configurations of the white output problem of $\Pi_1$, by going through the two black configurations of $\Pi_1$ one by one, analogously to the approach in the proof of Lemma~\ref{lem:speedup-part-1}.

  The first black configuration requires one element containing the label $\bA$ and $\delta-1$ elements containing the label $\bX$. This gives the set of configurations $X [A H T X]^{\delta-1}$ after renaming. The second black configuration requires an element containing $\bH$, an element containing $\bT$, and the remaining elements to contain $\bX$. This gives the set of configurations $[HX][TX][AHTX]^{\delta-2}$ after renaming. We observe that anything containing the element $X$ is already covered by the first set of configurations, so we can simplify the second set of configurations to $HT[AHTX]^{\delta-2}$. As we can see, the obtained black configurations are exactly those listed in the lemma, which completes the proof.
\end{proof}

\subsection{\fdso{} is nontrivial} \label{sssec:fdso-nontrivial}

It remains to show that the \fdso{} problem cannot be solved in zero rounds.

\begin{lemma} \label{lem:fdso-zero-rounds}
  Any 0-round randomized algorithm for \sfdso{} has local failure probability $p \geq C$ for some constant $C$.
\end{lemma}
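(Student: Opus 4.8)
The plan is to reduce the statement to a short probabilistic estimate on the labels that white nodes put on their incident edges. First I would exploit anonymity: in a $0$-round white algorithm a white node sees nothing but its own random bits and the fact that it has $d$ ports, so its output (one label per port) is a deterministic function of its random string; since the port numbering is a uniformly random bijection independent of everything else, the labels the node assigns to its $d$ incident edges form, conditioned on the random bits, a uniformly random permutation of a fixed multiset. Consequently the label on any single edge, seen from its white endpoint, has one fixed distribution, and for a black node the $\delta$ labels on its incident edges are i.i.d.\ with that distribution --- distinct white nodes use independent randomness, and since the network is a tree their $0$-hop views are disjoint. Writing $q_A,q_H,q_T$ for the probabilities that a white node outputs $AX^{d-1}$, $H^{s+1}X^{d-s-1}$, $T^{d-s+1}X^{s-1}$ respectively and $q_0 = 1-q_A-q_H-q_T$ for the probability it outputs anything else (which is automatically an infeasible white configuration), I may assume $q_0 < C$, as otherwise every white node fails with probability at least $C$ and we are done. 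Counting occurrences of each label in the three feasible configurations then gives, for the label $L$ on a single edge, $\Pr[L=A] \ge q_A/d$, $\Pr[L=H] \ge q_H(s+1)/d \ge q_H/d$, and $\Pr[L=T] \ge q_T(d-s+1)/d \ge q_T/d$, so $\Pr[L=A]+\Pr[L=H]+\Pr[L=T] \ge (1-q_0)/d$.

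The second step bounds the failure probability of a black node. A black node is satisfied exactly when the multiset of its $\delta$ incident labels contains an $X$ or contains both an $H$ and a $T$ (the two black configurations of $\sfdso$), so it fails precisely when all its incident labels lie in $\{A,T\}$ or all lie in $\{A,H\}$. Since these labels are i.i.d., inclusion--exclusion gives failure probability $(\Pr[A]+\Pr[T])^\delta + (\Pr[A]+\Pr[H])^\delta - \Pr[A]^\delta$, which, using $(\Pr[A]+\Pr[T])^\delta \ge \Pr[A]^\delta$, is at least $\max\bigl\{(\Pr[A]+\Pr[H])^\delta,(\Pr[A]+\Pr[T])^\delta\bigr\}$; by convexity of $x\mapsto x^\delta$ this is in turn at least $\bigl(\tfrac12(\Pr[A]+\Pr[H])+\tfrac12(\Pr[A]+\Pr[T])\bigr)^\delta \ge \bigl(\tfrac{\Pr[A]+\Pr[H]+\Pr[T]}{2}\bigr)^\delta \ge \bigl(\tfrac{1-q_0}{2d}\bigr)^\delta$. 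Taking $C = (3d)^{-\delta}$, which is at most $1/3$, in the remaining case $q_0 < C$ we get $1-q_0 > 2/3$ and hence a black failure probability of at least $(1/(3d))^\delta = C$. In both cases the local failure probability is at least $C$, proving the lemma. A $0$-round black algorithm, should the round elimination chain require it, is handled by the symmetric argument (in that direction white nodes must even match an exact pattern, which only makes the bound easier).

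I expect the one genuinely delicate point to be the first step: making precise, in the anonymous model with random port numbers, that each edge carries a label with a well-defined edge-independent distribution and that a black node observes $\delta$ mutually independent such labels. Everything downstream is the routine two-parameter inequality sketched above; the only modelling subtlety to be careful about is charging white nodes that produce infeasible configurations, which the $q_0$ case distinction handles.
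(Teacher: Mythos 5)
Your proposal is correct and follows essentially the same route as the paper's proof: either white nodes emit an infeasible configuration with constant probability, or else one conditions on the feasible configurations, uses the independence of the $\delta$ white neighbours of a black node together with the uniformly random port assignment to get a per-edge marginal of at least $\Omega(1)/d$ on the labels $A$, $H$, $T$, and concludes that the black node sees a forbidden combination with probability at least $(\Omega(1)/d)^{\delta}$. The only (harmless) difference is that the paper picks a single dominant configuration probability $p_i$ and exhibits the monochromatic failure $A^{\delta}$, $H^{\delta}$, or $T^{\delta}$, whereas you bound the full failure event ``all labels in $\{A,T\}$ or all in $\{A,H\}$'' uniformly; both yield the required constant.
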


\begin{proof}
  An \sfdso{} algorithm fails at a black node if it is labeled with a configuration from $[AH]^\delta$ or $[AT]^\delta$. This is because the first configuration requires an $X$ and the second configuration requires both an $H$ and a $T$.

	A white 0-round algorithm assigns probabilities $p_1$, $p_2$, and $p_3$ to outputting the three white configurations $AX^{d-1}$, $H^{s+1}X^{d-s-1}$, and $T^{d-s+1}X^{s-1}$. If $1 - p_1 - p_2 - p_3 = \Omega(1)$, then it outputs an illegal output with a constant probability, and we are done. If not, we have that at least one of $p_i$ is $\Omega(1)$. We only consider the case $p_1 = \Omega(1)$, as the other cases are similar.

  Consider an arbitrary black node $u$. Each incident white node $v$ chooses the configuration $AX^{d-1}$ independently with probability $p_1$. Since the white nodes can see only their random bits, they will assign the label $A$ to a port with probability at least $1/d$. We label each edge $\{v,u\}$ with that port from the side of $v$. We get that the incident edges of $u$ are labeled with the illegal configuration $A^{\delta}$ with probability at least $(p_1 / d)^{\delta}$, which is $\Omega(1)$, as required.
\end{proof}

\subsection{Logarithmic lower bound for \fdso{}}

We are now ready to put the pieces together and prove that any forbidden degree problem and bipartite sinkless orientation require $\Omega(\log \log n)$ randomized time.

\begin{theorem} \label{thm:forbiddenDegree}
  All forbidden degree problems have randomized complexity $\Omega(\log \log n)$ in the LOCAL model. The same holds for bipartite sinkless orientation.
\end{theorem}

\begin{proof}
  By Lemma~\ref{lem:fdso-fd-so}, it suffices to prove the desired lower bound for \sfdso{}. For the sake of contradiction, assume that there exists an algorithm $A$ with running time $2T = o(\log \log n)$ for solving \sfdso{} with probability $1-1/n$. This implies that $A$ solves \sfdso{} with local failure probability at most $1/n$.

  Using Lemmas~\ref{lem:speedup-part-1} and~\ref{lem:speedup-part-2}, we have that after two round elimination steps, starting with \sfdso{}, the obtained problem is again \sfdso{}. Assuming that \sfdso{} can be solved with local failure probability $p$ in time $T'$ and given Lemma~\ref{lem:rand-generalized-simulation}, we get that \sfdso{} can be solved in time $T'-2$ with local failure probability at most
  \[
  p' \leq K_2\Bigl(K_1 p^{1/(d+1)}\Bigr)^{1/(\delta+1)},
  \]
  where $K_1 = K(d,\delta,4)$ and $K_2 = K(\delta, d, 4)$. Let $\Delta = \max \{ d, \delta \}$. We have that
  \[
  p' \leq K_2 K_1^{1/(\delta+1)} \Bigl(p^{1/(\Delta+1)}\Bigr)^{1/(\Delta+1)} \leq K_2 K_1^{1/(\delta+1)}p^{1/(\Delta+1)^2}.
  \]

  Starting with the assumed $2T$-time algorithm and iterating the above step $T$ times we arrive at a 0-round algorithm for \sfdso{} with local failure probability at most some $r$. Since for some large enough constant $C$ we have that $K_2 C^{1/(\delta+1)} \leq C$ and $K_1 C^{1/(d+1)} \leq C$, we get that the multiplicative term in front of $p$ converges to some constant $K$. We get the bound
  \[
    r \leq K \biggl(\frac{1}{n}\biggr)^{1/(\Delta+1)^T}.
  \]
  If $T = o(\log \log n)$, then $r = o(1)$. By Lemma~\ref{lem:fdso-zero-rounds} we know that for any 0-round algorithm $r = \Omega(1)$, a contradiction with the assumption that $T = o(\log \log n)$.
\end{proof}

\begin{corollary} \label{cor:forbiddenDegree}
  All forbidden degree problems require $\Omega(\log n)$ deterministic time in the LOCAL model. The same holds for bipartite sinkless orientation.
\end{corollary}

\begin{proof}
  By Theorem~\ref{thm:forbiddenDegree}, the deterministic complexity of any forbidden degree problem is $\Omega(\log \log n)$. Since there are no LCLs with deterministic complexity between $\omega(\log^* n)$ and $o(\log n)$~\cite{chang16exponential}, the complexity must be $\Omega(\log n)$.
\end{proof}

\subsection{Proof of Theorem~\ref{thm:log-lower}}

Lemma~\ref{lem:classificationLB} shows that any problem that is not of type $\tI$, $\tII$, $\tIII$, $\tIV$, $\tV$ or $\tVI$ can be relaxed to a forbidden degree problem, the bipartite sinkless orientation problem, or a problem equivalent to one of those. Corollary~\ref{cor:forbiddenDegree} shows that all forbidden degree problems and bipartite sinkless orientation require $\Omega(\log n)$ rounds.
Observation~\ref{obs:equiv} ensures that this lower bound also holds for all problems that are equivalent to a forbidden degree problem or bipartite sinkless orientation.
Hence, Theorem~\ref{thm:log-lower} follows.

\section{Randomized complexity}\label{sec:rand}

We have now completed the classification of the deterministic distributed complexity of binary labeling problems in trees; all solvable problems fall in one of the complexity classes of $O(1)$, $\Theta(\log n)$, and $\Theta(n)$. Trivially, $O(1)$ deterministic complexity also implies $O(1)$ randomized complexity, and it is straightforward to verify that all problems discussed in Section~\ref{sec:global} require $\Theta(n)$ rounds not only for deterministic algorithms but also for randomized algorithms. However, the class of $\Theta(\log n)$ is much more interesting, as it contains both problems in which randomness helps and also problems in which randomness does not help. In this section we will explore this in more detail and show how to classify \emph{some} binary labeling problems according to their randomized complexity.

For example, recall the problem of even orientation (Example~\ref{ex:even-orientation}) in $3$-regular graphs, defined by $W=\p\n\p\n$, $B=\n\p\n$. In this section, we prove in Theorem~\ref{thm:propagation} that even orientation has randomized complexity $\Omega(\log n)$. Combining this with the fact that even orientation is $(2,1)$-resilient and thus by Theorem~\ref{theorem21resil} has a deterministic algorithm with complexity $O(\log n)$, we can deduce that even orientation in $3$-regular graphs has randomized complexity $\Theta(\log n)$.

Throughout the section, it is assumed that $d\geq 3$, or $\delta \geq 3$.

\subsection{Propagation lower bound}
In this section, we prove that if $\Pi=(d,\delta,W,B)$ is a binary LCL with deterministic complexity $\Theta (\log n)$ such that both white and black constraints do not have a so called \emph{escape}, then also the randomized complexity of $\Pi$ is $\Theta(\log n)$. We begin with the definition of an escape using the vector notation (cf.\ Section~\ref{ssec:binary-labeling-problems}).
\begin{definition}[Escape]
Given a binary LCL $\Pi=(d,\delta,W,B)$ we say that $\Pi$ has a \emph{white} (resp.\ \emph{black}) \emph{escape} if $W$ (resp.\ $B$) contains two consecutive $\p$s.
\end{definition}

Note that problem of even orientation (Example~\ref{ex:even-orientation}) mentioned in the beginning of the section has neither black nor white escape.

The definition of an escape immediately implies the following observation.
\begin{observation}[Propagation]
\label{lem:propagation}
Let $\Pi$ be a binary LCL with no white and no black escape.
If we fix for a given node $v\in V$ the labeling of all but one port in any way, then there is at most a single way to complete the labeling of the last port to produce a valid output for node $v$.
\end{observation}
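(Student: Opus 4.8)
The plan is to unpack the definitions: the only feature of the edge labeling around a node $v$ that $\Pi$ constrains is the count $\deg_X(v)$ of incident edges receiving label $1$. If the labels of all incident edges but one are fixed, this count is already determined except for the contribution of the last edge, which is either $0$ or $1$; hence the two possible completions yield two \emph{consecutive} values of $\deg_X(v)$, and the claim will follow by translating "no escape" into a statement about these two values.

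Concretely, I would argue as follows. Since the white and black cases are symmetric, assume $v$ is a white node; the relevant case (the one used in the propagation lower bound, where we work in the middle of a $(d,\delta)$-biregular tree) is $\deg(v)=d$, so assume this. Let $k$ be the number of the $d-1$ already-labeled edges incident to $v$ that carry label $1$. Completing the remaining port with label $1$ gives $\deg_X(v)=k+1$, while completing it with label $0$ gives $\deg_X(v)=k$. By the white constraint, the first completion is valid for $v$ exactly when $k+1\in W$, and the second exactly when $k\in W$. If both were valid, then $k\in W$ and $k+1\in W$, i.e.\ $w_k=w_{k+1}=\p$ in the vector notation of $W$; as these are adjacent positions, $W$ would contain two consecutive $\p$s, contradicting the assumption that $\Pi$ has no white escape. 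Hence at most one completion is valid. The argument for a black node $v$ of degree $\delta$ is identical with $W$ replaced by $B$ and $d$ by $\delta$.

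I expect essentially no obstacle: the observation is a direct reformulation of the "no escape" condition, and the proof is a two-line case distinction on the label of the last edge. The only points needing a little care are (i) stating the white/black symmetry cleanly so one does not repeat the same computation twice, and (ii) the harmless degenerate case in which $\deg(v)$ does not match the relevant degree parameter, where $v$ is unconstrained; since the observation is invoked only on constrained nodes, I would simply note this restriction rather than treat it as a real case.
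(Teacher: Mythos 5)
Your proof is correct and is exactly the argument the paper intends: the paper states the observation as an immediate consequence of the escape definition without spelling it out, and your two-completion case distinction (both completions valid would force $w_k=w_{k+1}=\p$, i.e., an escape) is precisely that intended reasoning. Your remarks on white/black symmetry and on unconstrained nodes of the wrong degree are sensible and introduce no issues.
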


\begin{theorem}[Randomized propagation lower bound]\label{thm:propagation}
Let $\Pi=(d,\delta,W,B)$ be a binary LCL such that there is neither a white nor a black escape and assume that $\Pi$ has deterministic complexity $\Theta(\log n)$ on trees. Then $\Pi$ has randomized complexity $\Theta(\log n)$ on trees.

\end{theorem}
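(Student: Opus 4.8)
The upper bound is immediate: the deterministic $O(\log n)$-round algorithm guaranteed by the hypothesis is in particular a randomized $O(\log n)$-round algorithm, so $\Pi$ has randomized complexity $O(\log n)$. All of the content is therefore in the matching randomized lower bound $\Omega(\log n)$, and this is where the absence of escapes is used.

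The plan is to run round elimination, but to exploit Observation~\ref{lem:propagation} in order to control the failure probability far more tightly than the generic bound of Lemma~\ref{lem:rand-generalized-simulation} allows. Recall that, for a general bipartite labeling problem, one speed-up step turns a $T$-round algorithm with local failure probability $p$ into a $(T-1)$-round algorithm with local failure probability $O(p^{1/(\delta+1)})$ (or $O(p^{1/(d+1)})$), and it is exactly this polynomial blow-up that caps the generic round-elimination lower bound at $\Omega(\log\log n)$. The observation I would use is that for a problem with no escape the ``frequent label'' step of the simulation becomes essentially free: by Observation~\ref{lem:propagation}, fixing the labels of all ports of a node but one leaves at most one legal value for the last port, so a simulating node, conditioned on its $(T-1)$-neighbourhood, does not genuinely have to hedge over several candidate labels per incident edge; the only uncertainty it faces is the original algorithm's own failure. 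I therefore expect that for no-escape problems the speed-up step can be re-derived so that local failure probability $p$ turns into only $O(p)$ (with a constant depending on $d$ and $\delta$) rather than $O(p^{1/(\delta+1)})$.

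Granting such an improved speed-up lemma, the proof finishes as follows. Since $\Pi$ is solvable, is not of types \tI--\tVI{} (by the hypothesis together with the classification of Table~\ref{tab:deterministic}), and has no escape, I would show — by a short case analysis on the possible shapes of $W$ and $B$ with no two consecutive $\p$'s, in the spirit of Lemma~\ref{calim2section6} and Lemma~\ref{lem:classificationLB} — that $\Pi$ is a \emph{fixed point} of two round-elimination steps (exactly as \sfdso{} is, by Lemmas~\ref{lem:speedup-part-1} and~\ref{lem:speedup-part-2}), or at least relaxes to one. Then an assumed randomized algorithm of complexity $T(n) = o(\log n)$ and local failure probability $1/n$, iterated $T(n)$ times through the improved speed-up, yields a $0$-round randomized algorithm for (a relaxation of) $\Pi$ with local failure probability at most $K^{T(n)}/n = o(1)$ for a constant $K$, contradicting the fact that the fixed point has constant $0$-round failure probability (the analogue of Lemma~\ref{lem:fdso-zero-rounds}); hence $T(n) = \Omega(\log n)$. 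An equivalent packaging of the same idea, which I would use as a cross-check, is: the gap theorems for LCLs on trees leave only randomized complexity $\Theta(\log\log n)$ to be excluded (a problem with deterministic complexity $\Theta(\log n)$ cannot have randomized complexity $O(\log^* n)$), and Observation~\ref{lem:propagation} directly obstructs the graph-shattering paradigm on which every $O(\log\log n)$-round randomized tree algorithm relies, since a random partial labelling around a ``hole'' is, by propagation, almost never completable.

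The step I expect to be the main obstacle is the improved speed-up lemma: making precise, and proving, that the unique-completion property of no-escape problems lets one run the simulation of Lemma~\ref{lem:rand-generalized-simulation} with a constant ``frequency threshold'' while still paying only a constant factor in the local failure probability. One must check that this works in both the white-to-black and black-to-white directions (so that two composed steps land back at the same problem), and that the resulting constant does not degrade with $d$ or $\delta$ in a way that would spoil the geometric bound $K^{T(n)}/n$. The accompanying structural claim — that a no-escape, solvable, type-\tVII{} problem is (a relaxation of) a two-step fixed point — is a finite case analysis, and I expect it to be routine, mirroring the analyses already carried out in Section~\ref{sec:log-lower}.
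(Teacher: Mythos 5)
Your proposal takes a genuinely different route from the paper, and its central step has a real gap. The paper does not use round elimination here at all: it places a white node $u$ deep inside a regular tree with identically port-numbered subtrees $T_1,\dotsc,T_d$ hanging off it, uses the no-escape property (Observation~\ref{lem:propagation}) to show that the output on the far-away edge sets $F_i$ at distance $2k$ from $u$ \emph{uniquely propagates} down to determine the label of each edge $e_i$ incident to $u$, observes that for a $(<k)$-round algorithm the outputs on $F_1,\dotsc,F_d$ are mutually independent and identically distributed, and then derives a constant-probability constraint violation at $u$ (or at an auxiliary black node) by a short case analysis on the common single-edge distribution. The no-escape hypothesis is used to transport independence from distance $2k$ to distance $1$, not to accelerate round elimination.

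The gap in your approach is the ``improved speed-up lemma.'' The unique-completion property of Observation~\ref{lem:propagation} is a statement about the \emph{constraints} $W$ and $B$; it says nothing about the conditional distribution of $A(u,v)$ given $I_{T-1}(v)$ in the simulation of Lemma~\ref{lem:rand-generalized-simulation}. The neighbor $u$ still sees genuine randomness outside $B_{T-1}(v)$, so its conditional output can be spread over several labels even for a no-escape problem, and the simulating node really does have to hedge; nothing in your argument removes the $p \mapsto O(p^{1/(\delta+1)})$ loss. Moreover, the black and white output problems live over the alphabet $2^{\Sigma}$ and are not themselves binary labeling problems, so the notion of ``escape'' does not even apply to them directly, which undermines the claim that the improvement composes over two steps. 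Finally, the structural claim that every solvable, no-escape, type-\tVII{} problem is (a relaxation of) a two-step fixed point is asserted without proof and is far from routine: relaxation can introduce escapes, and being a round-elimination fixed point is a much more special property than having no escape. The ``graph-shattering obstruction'' cross-check is likewise only a heuristic. As it stands, the lower bound is not established by your argument.
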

\begin{proof}
Let $\Pi$ be a problem satisfying the conditions of the theorem and let $d$ denote the degree of white nodes and let $\delta$ be the degree of black nodes (note that the considered trees are regular except for the degrees at the leaves). We emphasize that the degrees $d$ and $\delta$ are fixed and constant throughout this proof. As there is a deterministic $O(\log n)$-round algorithm for $\Pi$ there is also a randomized algorithm that solves $\Pi$ in $O(\log n)$ rounds.

Next, we show that there is no randomized algorithm for $\Pi$ with complexity $o(\log n)$. Assume towards a contradiction that there is an algorithm $\mathcal{A}$ solving $\Pi$ in $o(\log n)$ rounds and fix a constant $0<c<(100 \log \max\{d,\delta\})^{-1}$. By the assumption there exists some $n_0$ such that $\mathcal{A}$ solves $\Pi$ on any tree with $n_0$ nodes (that satisfies the degree restrictions of $\Pi$) in less than $k(n_0)=c\cdot \log n_0$ rounds. Fix any $n$ larger than $n_0$ for which a tree meeting the degree requirements of $\Pi$ exists. We show that there is a tree with $n$ nodes and a port assignment on which algorithm $\mathcal{A}$ has to fail.
To this end let $T=(V,E)$ be  a tree (satisfying the degree restrictions of $\Pi$) with $n$ nodes such that it has an internal white node $u$ that is more than $10 k $ hops away from any leaf of $T$. Due to the choice of $c$ such a tree exists. However, constants in this proof are not optimized to ease presentation.
Denote by $v_i,1\leq i\leq d$ the neighbors of $u$, by $e_i=\{u,v_i\}$ the incident edges of $u$ and by $T_i=(V_i,E_i)$ the tree of depth $2k-1$ rooted at $v_i$ such that $u\not\in V_i$.
For each tree $T_i$, denote by $S_i\subseteq V_i$ the subset of nodes that are at distance $2k$ from $u$, that is, $S_i$ is the set of leaves of $T_i$. Let $F_i$ be the set of edges of $T_i$ that are incident to a node in $S_i$. Equip the tree with any port labeling such that \emph{corresponding nodes} in $S_i$ and $S_j$ for $1\leq i,j\leq d$ have the exact same $k$-hop view.
Since both white and black constraints have no escape, we have full propagation in the following sense:

\medskip
\noindent\textbf{Propagation Claim:} For any $1\leq i\leq d$ and under the assumption that $\mathcal{A}$ is a correct algorithm for $\Pi$ the output of $\mathcal{A}$ on the edges of $F_i$ uniquely determines the output labels of all other edges of $T_i$ and also the output label of the edge $e_i$.

\medskip
\noindent Let us prove the propagation claim. If we fix for a given node $v$ the labeling of all but one ports in any way, there is at most a single way to complete the labeling of the last port to produce a valid output for $v$ (cf.\ Observation \ref{lem:propagation}). The claim follows by an induction on the levels of the tree $T_i$ starting with the `leaf edges' incident to $S_i$. Hence the claim follows.

As $\mathcal{A}$ runs in less than $k$ rounds for all $1\leq i\leq d$ and the nodes in $S_i$ have distance $2k$ from $u$, the nodes in $S_i$ decide on their output independently from the information in the graph $T_j$ for all $j\neq i$ and also independent from $e_j$. Even stronger this independence is not just pairwise but the output on the sets of edges $F_1,\ldots,F_d$ are fully independent. Furthermore, as the graph and the port numbering around all these sets is identical the outputs produced by the randomized algorithm $\mathcal{A}$ on $F_1,\ldots,F_d$ follow the same probability distribution. Using the propagation claim we obtain that also all edges $e_1,\ldots,e_d$ follow the same output distribution and the output on these edges is fully independent.
Thus all edges $e_1,\ldots,e_d$ are in one of three following (non exclusive) cases: For all $1\leq i\leq d$
\begin{itemize}
\item \textbf{Case 1:} the output of $S_i$ forces $e_i$ to be labeled $0$ with probability at least $1/4$ and it forces $e_i$ to be labeled with $1$ with probability at least $1/4$, or
    \item \textbf{Case 2:} with probability at least $1/4$, the output of the nodes in $S_i$ forces $e_i$ to be labeled with $1$, or
    \item \textbf{Case 3:} with probability at least $1/4$, the output of the nodes in $S_i$ forces $e_i$ to be labeled with $0$.
\end{itemize}

We deduce a contradiction for each case using that all edges are in the same case and the problem $\Pi$ is non trivial as it has deterministic complexity $\Theta(\log n)$ by assumption.

\medskip

\textbf{Case 1:}
As $\Pi$ is not a trivial problem there exists some $m\leq d$ such that $\Pi$ does not allow for exactly $m\leq d$ ports to be labeled $1$ incident around white nodes. However, as each edge $e_1,\ldots,e_d$ is independently labeled with $1$ or $0$ each  with probability at least $1/4$ the probability that the first $m$ edges $e_1,\ldots,e_m$ incident to $u$ are labeled with $1$ and the remaining incident edges $e_{m+1},\ldots e_d$ are labeled with $0$ is constant, i.e., it is at least $(1/4)^m\cdot (1/4)^{d-m}=(1/4)^d=\Omega(1)$. As this labeling is not allowed for $u$ the algorithm fails at least with constant probability at $u$, a contradiction.\footnote{The actual probability of failure at a white node in this case is much larger but the presented probability is sufficient for a contradiction.}

\medskip

\textbf{Cases 2 and 3:} Both cases are symmetric with interchanging the role of the labels.
Assume we are in case 2, and first assume $\Pi$ does not allow white nodes to label all incident edges with $1$.
Then with constant probability, i.e., with probability at least $(1/4)^{d}$, algorithm $\mathcal{A}$ labels all edges incident to $u$ with label $1$, i.e., $\mathcal{A}$ fails to produce a valid output for $u$ with constant probability, a contradiction.

If we are in case 2 and $\Pi$ does allow white nodes to label all ports with $1$ we construct a new tree and a port numbering on which algorithm $\mathcal{A}$ fails for a black node with constant probability:
Consider a black node $t$ and denote by $u_1,\dotsc,u_{\delta}$ its white neighbors, now similarly fix the trees and the port numbers rooted at the children of  $u_i,1\leq i\leq \delta$ to be such that the edges incident to $u_i,1\leq i\leq \delta$ are in case 2, i.e.\ labeled $1$ with probability at least $1/4$, except for the edge connected to $t$.\footnote{The constant $c$ is chosen small enough such that we do not run out of nodes in this construction if we cap trees at a large enough radius that does not influence the output at the considered vertices.}
Now since we have full propagation and no escape, and we know that white nodes are allowed to label their ports with all 1, we deduce that also the edges $\{t,u_i\},1\leq i\leq \delta$ must be labeled with 1 with probability at least $1/4$, since by the no escape property, an output of a single port labeled 0 is not allowed. Thus we have that with probability at least $(1/4)^{\delta}=\Omega(1)$, $t$ labels all its edges with $1$. As this cannot be a feasible output, otherwise the problem would be trivial as then white and black nodes would accept all edges to be labeled with $1$, the algorithm fails with constant probability, a contradiction.

\medskip

In either case, we reach a contradiction, thus there is no algorithm that solves $\Pi$ w.h.p.\ in less than $k$ rounds which proves the claim.
\end{proof}

\subsection{Lower bounds using independent sets}
In this section, we prove the following theorem.
\begin{theorem}\label{lowerboundusingindependentset}
Consider the following binary LCL problem $\Pi=(d,2,W,B)$, $W=\p\p\n^{d-2}\p$, $B=\n\p\n$. Then for all $d\geq 20$, this problem has randomized complexity $\Omega (\log n)$.
\end{theorem}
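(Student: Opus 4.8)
The plan is to argue the $\Omega(\log n)$ bound through the orientation picture of Section~\ref{ssec:case-delta2}. Since $\delta=2$ and $B=\n\p\n$, a solution to $\Pi$ is an orientation of the edges of a $d$-regular tree in which every node has in-degree in $\{0,1,d\}$; call such a node a \emph{source}, \emph{ordinary}, or \emph{sink} accordingly. Two structural facts drive the argument. First, the sinks of any valid solution form an independent set, and deleting them leaves subtrees that each have to be oriented away from a single source --- so $\Pi$ is essentially the ``orient a tree from a root'' problem of Theorem~\ref{rootedtreeglobal}, with sinks acting as wildcards that interrupt propagation. Second, if the orientations of all but one edge at a white node are fixed, the last edge is \emph{forced} to a unique value \emph{unless} all the already-fixed edges are outgoing; that is, the escape $\p\p$ in $W$ occurs only in the ``almost-source'' configuration. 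Hence $\Pi$ has just barely enough slack to avoid complexity $\Theta(n)$, and the point is to show it has no more slack than $\Theta(\log n)$ even with randomness. (The matching upper bound $O(\log n)$ is free, since $\Pi$ is of type~\tVII\ and Theorem~\ref{thm:log-upper} applies.)

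I would then run an indistinguishability argument in the spirit of Theorem~\ref{thm:propagation}, but spread over a large \emph{independent set} of gadget centers so that the failure events become independent and the small escape cannot hurt. Assume for contradiction a randomized algorithm $\mathcal A$ with round complexity $k=o(\log n)$ and failure probability at most $1/n$. Build an $n$-node tree that is $d$-regular in a large core and contains white nodes $u_1,\dots,u_m$ with $m=n/d^{O(k)}=n^{1-o(1)}$, pairwise at distance more than $10k$, each carrying the same gadget: $d$ pendant subtrees of depth $\Theta(k)$, pairwise isomorphic with matching port numbers and inputs, so that the depth-$2k$ boundary nodes of the $d$ arms have identical $k$-hop views and are pairwise more than $2k$ apart. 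Since $\mathcal A$ stops in fewer than $k$ rounds, the labels it writes on the boundary edges of the $d$ arms of a fixed $u_j$ are i.i.d., and the gadgets of distinct $u_j$ are fully independent. Propagating a boundary labeling inward, black nodes forward the orientation and, by the second structural fact, each white node's edge toward $u_j$ is forced except at an all-outgoing node; conditioning on the boundary labelings (and on the independent internal randomness that resolves an all-outgoing branch) one aims to conclude that the $d$ half-edge orientations at $u_j$ are i.i.d.\ with some common marginal $q$. If $q\in(0,1)$, then for $d\geq 20$ the number of in-pointing arms at $u_j$ is an ``intermediate'' value in $\{2,\dots,d-1\}$ with probability at least an absolute constant, which is forbidden by $W$, so $\mathcal A$ fails at $u_j$ with constant probability; independence over the $m$ centers then makes $\mathcal A$ fail somewhere with probability $1-(1-\Omega(1))^m\to 1$, a contradiction. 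If $q\in\{0,1\}$, then $\mathcal A$ deterministically forces the relevant core nodes to be all sources or all sinks, and including in the construction a further pair of gadget centers at distance two in the core (through one black node) gives a direct contradiction with that shared black node's $B=\n\p\n$ constraint.

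The step I expect to be the main obstacle is concluding that the $d$ half-edge orientations at $u_j$ really behave like i.i.d.\ variables, given the escape: propagation may branch at an all-outgoing white node, so $\mathcal A$ might arrange the arms so that the first all-outgoing node along an arm lies within distance $k$ of $u_j$, where its choice can be correlated across arms and with $u_j$, possibly steering the in-degree of $u_j$ into $\{0,1,d\}$ ``for free''. Ruling this out is the crux, and I would do it either by building the arm gadgets so that a forced-propagation prefix followed by an all-outgoing node already violates a constraint inside the arm --- which pushes the first branch more than $k$ hops from $u_j$ and restores genuine independence --- or by extracting a sub-instance of diameter $\Theta(\log n)$ in which with constant probability no sink is placed, and invoking on it the $\Omega(n)$ lower bound for the \emph{almost oriented path} problem from the proof of Theorem~\ref{rootedtreeglobal}, which translates to $\Omega(\log n)$ on a $\Theta(\log n)$-diameter instance. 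Either way one obtains the claimed $\Omega(\log n)$ randomized lower bound, completing the classification for this problem.
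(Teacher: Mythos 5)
You do not close the step you yourself flag as the crux, and that step is exactly where the argument breaks. Because $W=\p\p\n^{d-2}\p$ has the escape $w_0=w_1=\p$, propagation from the boundary of an arm toward the center $u_j$ is not unique: it can branch at any white node whose already-determined edges are all outgoing, so you cannot conclude that the $d$ half-edge orientations at $u_j$ are i.i.d., nor even that they are determined by the boundary labels. This is precisely why Theorem~\ref{thm:propagation} excludes problems with an escape, and neither of your two proposed repairs works as stated. The first (``build the arm so that a branch within distance $k$ of $u_j$ already violates a constraint'') presupposes control over where the algorithm places its branch nodes, but the algorithm is free to create an in-degree-$0$ node wherever its $k$-ball permits, and nothing in a fixed gadget forbids this near $u_j$. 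The second (extract a diameter-$\Theta(\log n)$ sub-instance on which ``with constant probability no sink is placed'' and invoke the almost-oriented-path bound from Theorem~\ref{rootedtreeglobal}) rests on an unjustified and most likely false claim: a $o(\log n)$-round algorithm can place in-degree-$0$ and in-degree-$d$ nodes densely throughout any region of diameter $\Theta(\log n)$, and the almost-oriented-path argument is an indistinguishability proof for instances with at most one exceptional node, which does not transfer to this setting. So the proposal has a genuine gap rather than a fixable presentation issue.

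It is worth noting that you record the key structural fact --- the all-outgoing nodes form an independent set --- but then abandon it in favor of a local propagation scheme. The paper's proof exploits exactly that fact globally: counting with $\sum_v d_{in}(v)=\sum_v d_{out}(v)$ over the classes $x_0,x_1,x_d$ of out-degrees $0,1,d$ yields $x_d=x_0+\frac{d-2}{d}x_1$ and hence $x_d\ge\frac{d-2}{3d}n$, so every valid solution contains an independent set of relative size $\frac{d-2}{3d}$. By Lemma~\ref{highgirthindependencelemma} there exist $d$-regular graphs of girth $\Omega(\log n)$ with independence number at most $\bigl(\frac{2\ln d}{d}+\epsilon\bigr)n$, which is strictly smaller for $d\ge 20$; since a $o(\log n)$-round algorithm cannot distinguish such a graph from a $d$-regular tree, it would have to produce a valid solution there, a contradiction. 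If you want to salvage your propagation route you would need a quantitative bound on how often the escape can be invoked, which is essentially what this global counting argument supplies for free.
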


To prove this theorem, we employ the following lemma, whose proof can be found in \cite[Lemma~21]{Alon2010}.
\begin{lemma}\label{highgirthindependencelemma}
  Let $d\geq 3,\epsilon >0$, then there exists a $d$-regular graph $G$ on $n$ vertices such that $G$ has girth $\Omega(\log n)$, and furthermore $\alpha (G)\leq\left( \frac{2\ln d}{d}+\epsilon \right)n$. Here, $\alpha (G)$ denotes the size of the largest independent set in $G$.
\end{lemma}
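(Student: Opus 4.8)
The plan is to prove the lemma by the probabilistic method, working in the configuration (pairing) model: on $N$ vertices place $d$ half-edges at each vertex and take a uniformly random perfect matching of the $dN$ half-edges. I would first show that, with probability $1-o(1)$, the resulting $d$-regular multigraph simultaneously has a small independence number and only $o(N)$ short cycles; I would then post-process a graph with both properties, deleting a vertex from each short cycle and repairing the degrees, to obtain a genuinely $d$-regular simple graph of girth $\Omega(\log n)$ on $n = N - o(N)$ vertices with the desired independence ratio.

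For the independence number I would use a first-moment bound. A direct count in the pairing model shows that the probability that a fixed set of $\alpha N$ vertices spans no edge has exponent $d\bigl[(1-\alpha)\ln(1-\alpha) - \tfrac{1-2\alpha}{2}\ln(1-2\alpha)\bigr]$ per vertex, so that the expected number of independent $\alpha N$-sets is $e^{N\Phi(\alpha)(1+o(1))}$ with
\[
\Phi(\alpha) = -\alpha\ln\alpha - (1-\alpha)\ln(1-\alpha) + d\Bigl[(1-\alpha)\ln(1-\alpha) - \tfrac{1-2\alpha}{2}\ln(1-2\alpha)\Bigr].
\]
Expanding $\Phi$ for small $\alpha$, the dominant terms are $-\alpha\ln\alpha$ and $-\tfrac{d}{2}\alpha^2$, and a short computation places the relevant root of $\Phi$ at $\alpha = \tfrac{2\ln d}{d}\bigl(1+o_d(1)\bigr)$; hence $\Phi\bigl(\tfrac{2\ln d}{d} + \tfrac{\epsilon}{2}\bigr) < 0$ (for the small values of $d$ where the asymptotics are not yet accurate, this target already lies past the first root of $\Phi$, so negativity still holds). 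By Markov's inequality the random graph then satisfies $\alpha(G) \le \bigl(\tfrac{2\ln d}{d}+\tfrac{\epsilon}{2}\bigr)N$ with probability $1-o(1)$.

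For the girth I would use the standard fact that the expected number of cycles of length $\ell$ in the pairing model is asymptotically $\tfrac{(d-1)^\ell}{2\ell}$. Choosing $g = \lfloor c\ln N\rfloor$ with $c$ small enough that $2c\ln(d-1) < 1$ makes $\sum_{\ell=3}^{g}\tfrac{(d-1)^\ell}{2\ell} \le (d-1)^{g} = N^{c\ln(d-1)+o(1)} = o(N)$, so by Markov the number of cycles shorter than $g$ is $o(N)$ with probability $1-o(1)$. Intersecting this event with the independence-number event yields a concrete $d$-regular graph $G_0$ on $N$ vertices with $\alpha(G_0) \le \bigl(\tfrac{2\ln d}{d}+\tfrac{\epsilon}{2}\bigr)N$ and at most $o(N)$ cycles of length below $g$ (also discarding the $o(N)$ loops and parallel edges of the multigraph in the same step).

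The main obstacle is the final alteration, which must raise the girth to $g$ while keeping the graph exactly $d$-regular. Deleting one vertex from each short cycle removes all cycles of length $< g$ and can only lower $\alpha$, but it leaves $O(d)\cdot o(N) = o(N)$ dangling half-edges and destroys regularity. To restore degrees I would use degree-correcting edge switchings that relocate each unit of deficiency: to fix a deficient vertex $u$, delete an edge $\{x,y\}$ with $x,y$ chosen far (distance $> 2g$) from $u$ and add $\{u,x\}$, which repairs $u$ and transports the deficiency to $y$; one then pairs off deficiencies that have been moved far apart by adding an edge between them. Each such operation is feasible because the forbidden neighborhoods have size at most $d(d-1)^{2g} = N^{2c\ln(d-1)+o(1)} = o(N)$, so almost every vertex and edge is available, and by construction no new edge closes a cycle of length $\le g$. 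The points needing care are (i) a parity check ensuring the total deficiency, a multiple of $d$, is even so that the pairing completes (adjustable by deleting one extra vertex), and (ii) the accounting of $\alpha$: the switchings delete $o(N)$ edges, each of which raises the independence number by at most one, so $\alpha(G_{\mathrm{final}}) \le \alpha(G_0) + o(N)$, which together with $n = N-o(N)$ gives independence ratio at most $\tfrac{2\ln d}{d} + \epsilon$ for $N$ large, completing the proof.
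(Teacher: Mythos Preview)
The paper does not give its own proof of this lemma: it simply cites \cite[Lemma~21]{Alon2010} and moves on. Your sketch, by contrast, supplies the actual argument, and it is the standard one---first-moment bound for the independence ratio in the configuration model, Poisson-type control on the number of short cycles, then a switching/alteration pass to raise the girth while preserving $d$-regularity and perturbing $\alpha(G)$ by only $o(N)$. This is essentially how the cited result is proved, so your plan is both correct and aligned with the intended route.

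Two minor points worth tightening if you ever write this out in full. First, your parenthetical about small $d$ is slightly off as stated: for $d\le 8$ the target $\tfrac{2\ln d}{d}$ exceeds $\tfrac12$, so $\Phi$ is not even defined there (it contains $\ln(1-2\alpha)$). The fix is simply to note that $\Phi(\alpha)\to (1-\tfrac{d}{2})\ln 2<0$ as $\alpha\uparrow\tfrac12$, so the first positive root of $\Phi$ lies strictly below $\tfrac12$ for every $d\ge 3$; the first-moment method then already gives independence ratio below $\tfrac12$, which is stronger than the (weak) target $\tfrac{2\ln d}{d}$ for those small $d$. Second, in the repair step your parity observation can be made exact: the total deficiency after deleting a set $S$ is $d|V\setminus S|-2|E(V\setminus S)|$, which is even iff $d|V\setminus S|$ is even, so for odd $d$ one may need to delete one additional vertex to make $|V\setminus S|$ even---exactly as you say.
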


Now we are ready to prove Theorem~\ref{lowerboundusingindependentset}.
\begin{proof}
   Given some graph $G=(V,E)$ in the black/white formalism, denote by $V_W,V_B$ the sets of black and white nodes respectively. Note that the constraints of the black nodes allow us to view the problem as an orientation problem on white nodes. Let $c:E\to \set{0,1}$ be some labeling to the edges that induces a valid solution to $\Pi$. Define
   \[E'=\bigl\{\{v_1,v_2\}\bigm|v_1,v_2\in V_W,\,\exists b\in V_B,\,\{v_1,b\} \in E,\,\{b,v_2\}\in E\bigr\}.\]
   We can abuse notation and view $c$ as function from  $E'\to \set{{\leftarrow},{\rightarrow}}$. Basically, as an orientation function. This is since given some black node $b\in B$ and his white neighbors $v,u$, assume w.l.o.g.\ that $c(u,b)=1,c(v,b)=0$, then we define that $c(u,v)={\rightarrow}$, and vice versa. Here ${\rightarrow}$ means that $(u,v)$ is oriented towards $v$, ${\leftarrow}$ is defined similarly. Note that each white node still has total degree $d$, thus we are now working with a $d$-regular graph. From here on in, we treat $E'$ as our set of edges, and $W$ as our set of nodes (we refer to them using $E,V$, respectively). Now, $\outdegree(v)$ of a node $v\in V_W$ is the number of edges incident to it oriented away from $v$. Similarly, we define $\indegree(v)$ of a node $v\in V_W$. Denote by $x_0,x_1,x_d$ the number of nodes with outdegree $0,1,d$ respectively.

   Now, since each edge in the graph is oriented, we have that \[\sum_{v\in V}\indegree(v)=\sum_{v\in V}\outdegree(v).\] Or, in other words, since the total degree of each node is $d$:
   \[
   dx_0+(d-1)x_1=x_1+dx_d.
   \]
   Thus $x_d=x_0+\frac{d-2}{d}x_1$. We also know that $x_0+x_1+x_d=n$. From these two equations we continue with two possible cases:
   \begin{itemize}[noitemsep]
       \item $x_d\geq \frac{n}{3}$.
       \item $x_1\geq \frac{n}{3}$, in which case we have that $x_d\geq \frac{d-2}{3d}n$.
   \end{itemize}

  Now since $d\geq 11$, we can simply deduce that $x_d\geq\frac{d-2}{3d}n $ in all cases.

  Denote by $V_d$ all the nodes with outdegree $d$, note that $V_d$ is an independent set and that $|V_d|=x_d$. Thus we can conclude that any $d$-regular graph in which there is a valid solution to $\Pi$, there is an independent set of size at least $\frac{d-2}{3d}n$.

  Now assume towards a contradiction that $\Pi$ is solvable on $d$-regular trees in $o(\log n)$ rounds. In particular, following from an indistinguishability argument, which holds also for randomized algorithms, the same algorithm should solve $\Pi$ on $d$-regular graphs with girth $\Omega(\log n)$. However, according to Lemma \ref{highgirthindependencelemma}, there is a $d$-regular graph with girth $\Omega(\log n)$ and $\alpha (G)\leq\left( \frac{2\ln d}{d}+\epsilon \right)n$, for any $\epsilon >0$, denote such a graph by $G_0$. Furthermore, each valid solution to $\Pi$ in a graph $G$ induces an independent set of size at least $\frac{d-2}{3d}n$ in $G$, notice that for all $d\geq 20$, it holds that $\frac{d-2}{3d}n>\left( \frac{2\ln d}{d}+\epsilon \right)n$. According to our assumptions $\Pi$ will be solved correctly on $G_0$, since it has $\Omega(\log n)$ girth. This is a contradiction to the fact that any valid solution to $\Pi$ induces an independent set of size at least $\frac{d-2}{3d}n$. Thus, our assumption was wrong and no such algorithm exists. In particular, $\Pi$ has randomized complexity $\Omega(\log n)$ as required.
\end{proof}

\subsection{Lower bounds using cuts}
In this section, our goal is to prove the following theorem,
\begin{theorem}\label{lowerboundusingnolargecuts}
Given some integers $r,d,r<\frac{d}{2}$, consider the following binary LCL $\Pi=(d,2,W,B)$, $W=\p^{r+1}\n^{d-2r-1}\p^{r+1}$, $B=\n\p\n$. Then $\Pi$ has randomized complexity $\Omega(\log n)$ whenever $r< \frac{d}{4}-\frac{\sqrt{d-1}}{2}$.
\end{theorem}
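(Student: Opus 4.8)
The plan is to reinterpret $\Pi$ as an orientation problem, deduce that any $d$-regular graph admitting a valid solution must have a large cut, and then derive a contradiction using a high-girth regular graph whose maximum cut is small. Since $\delta=2$ and $B=\n\p\n$, by Section~\ref{ssec:case-delta2} a solution to $\Pi$ on a $d$-regular graph is exactly an orientation of the edges in which every node $v$ satisfies $\indegree(v)\in\{0,1,\dots,r\}\cup\{d-r,\dots,d\}$; equivalently, every node has $\outdegree(v)\le r$ \emph{or} $\indegree(v)\le r$. Because $r<d/2$, these two options are mutually exclusive, so any valid solution induces a partition $V=S\cup\bar S$ with $S=\{v:\outdegree(v)\le r\}$ and $\bar S=\{v:\indegree(v)\le r\}$.

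First I would show this partition is a large cut. For $v\in S$, the out-degree of $v$ inside the induced subgraph $G[S]$ is at most its global out-degree, so $e(S):=|E(G[S])|=\sum_{v\in S}\outdegree_{G[S]}(v)\le r|S|$; symmetrically $e(\bar S)=\sum_{v\in\bar S}\indegree_{G[\bar S]}(v)\le r|\bar S|$. Hence $e(S)+e(\bar S)\le rn$, and the number of edges crossing the cut $(S,\bar S)$ is
\[
|E|-e(S)-e(\bar S)\ \ge\ \frac{dn}{2}-rn\ =\ \Bigl(\frac{d}{2}-r\Bigr)n .
\]
So every $d$-regular graph that admits a valid solution of $\Pi$ has a cut of at least $(d/2-r)n$ edges (both sides are nonempty since $rn<dn/2=|E|$).

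Next I would invoke a high-girth $d$-regular graph with a nearly optimal spectral gap: there is a family of $d$-regular graphs $G_0$ on $n$ vertices with girth $\Omega(\log n)$ and smallest adjacency eigenvalue $\lambda_{\min}(G_0)\ge -2\sqrt{d-1}-o(1)$ (explicit near-Ramanujan constructions, or random $d$-regular graphs via Friedman's theorem after removing short cycles). The standard spectral bound $\mathrm{maxcut}(G)\le\tfrac14\,n\,(d-\lambda_{\min}(G))$ for $d$-regular $G$ then gives $\mathrm{maxcut}(G_0)\le\tfrac14(d+2\sqrt{d-1})n+o(n)$. Comparing with the previous paragraph, $(d/2-r)>\tfrac14(d+2\sqrt{d-1})$ is equivalent to $r<d/4-\sqrt{d-1}/2$, so in the stated range of $r$ — and since $r$ is a fixed integer strictly below the threshold, the $o(n)$ slack is harmless — the graph $G_0$ has \emph{no} valid solution for $\Pi$. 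The usual indistinguishability argument then finishes the proof: if some randomized algorithm solved $\Pi$ on $d$-regular trees in $o(\log n)$ rounds w.h.p., then running it on $G_0$ — whose radius-$o(\log n)$ neighborhoods are all $d$-regular trees — would, by a union bound over the $n$ nodes, produce a valid orientation of $G_0$ with high probability, contradicting the non-existence of a valid solution. Hence the randomized complexity of $\Pi$ is $\Omega(\log n)$.

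I expect the main obstacle to be the input side: producing, for \emph{every} relevant $d$, a $d$-regular graph that simultaneously has logarithmic girth and a spectral gap close enough to the Ramanujan bound. For $d=p+1$ this is immediate from the Lubotzky–Phillips–Sarnak / Margulis construction; for general $d$ one should combine Friedman's near-Ramanujan bound for random $d$-regular graphs with a short-cycle-removal step (noting that deleting an $o(n)$ fraction of edges changes regularity and $\lambda_{\min}$ by only $o(d)$ and $o(\sqrt d)$ respectively, absorbed by the integrality of $r$), or cite an explicit near-Ramanujan family of logarithmic girth. The rest — the orientation reinterpretation, the two-line cut count, and the spectral max-cut bound — is routine, and mirrors the structure of the independent-set argument in Theorem~\ref{lowerboundusingindependentset}.
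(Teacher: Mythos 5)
Your proposal is correct and follows essentially the same route as the paper: reinterpret $\Pi$ as an orientation problem, show any valid solution induces a cut of relative size at least $1-2r/d$ (your direct count of internal edges via out-/in-degrees gives the same bound as the paper's argument), and contradict this on a high-girth $d$-regular graph with no large cut obtained from Friedman's theorem plus short-cycle removal, which is exactly the paper's Lemma~\ref{Highgirthnolargecut}. The ``main obstacle'' you flag is resolved in the paper just as you suggest, so nothing further is needed.
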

To prove the theorem, we will employ the following lemma.
\begin{lemma}\label{Highgirthnolargecut}
  Given an integer $d\geq 3$ and any $\epsilon >0$, there exists a $d$-regular graph $G$ on $n$ vertices of girth $\Omega(\log n)$, and for all $S\subseteq V$, it holds that $\frac{E(S)}{|E|}\leq \frac{1}{2}+\frac{\sqrt{d-1}+\epsilon}{d}$. Here, $E(S)$ is the number of edges in $G$ crossing the cut $S$.
\end{lemma}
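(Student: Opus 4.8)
The plan is to realise $G$ as a \emph{near-Ramanujan} $d$-regular graph of logarithmic girth and to bound every cut at once via the expander mixing lemma; the coefficient $\sqrt{d-1}$ in the statement is exactly half of the Ramanujan spectral bound $2\sqrt{d-1}$, which is what makes this work out to the stated constant. So the first step is to record the standard cut estimate that follows from spectral expansion: if $H$ is a $d$-regular graph on $n$ vertices whose adjacency matrix $A$ satisfies $\lambda_{\min}(A) \ge -\mu$, then for every $S \subseteq V(H)$ one has $e(S,\bar S) \le (d+\mu)\,|S|\,|\bar S|/n \le (d+\mu)\,n/4$, hence $e(S,\bar S)/|E(H)| \le (d+\mu)/(2d) = \tfrac12 + \mu/(2d)$. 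I would derive this in two lines: write $\mathbf{1}_S = (|S|/n)\mathbf{1} + x$ with $x \perp \mathbf{1}$, use $A\mathbf{1} = d\mathbf{1}$ to get $e(S,\bar S) = \mathbf{1}_S^{\top} A\,\mathbf{1}_{\bar S} = d|S||\bar S|/n - x^{\top} A x$, bound $-x^{\top} A x \le \mu\,\|x\|^2$, and note $\|x\|^2 = |S||\bar S|/n$, followed by $|S||\bar S| \le n^2/4$ and $|E(H)| = dn/2$.

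The second step is to invoke the existence of $d$-regular graphs that are simultaneously near-Ramanujan and of girth $\Omega(\log n)$: for every $d \ge 3$ and every $\eta > 0$ there are $d$-regular graphs $H$ on $n$ vertices (for a suitable infinite set of $n$) with $\lambda_{\min}(H) \ge -(2\sqrt{d-1}+\eta)$ and girth $\Omega(\log n)$ — when $d-1$ is a prime power these are the classical Lubotzky--Phillips--Sarnak / Margulis Ramanujan graphs, for which one can take $\eta = 0$ and the girth is at least $\tfrac43\log_{d-1} n - O(1)$, and for general $d$ one uses the more recent near-Ramanujan constructions (e.g.\ via random lifts, or random $d$-regular graphs with the few short cycles excised). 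Feeding $\mu = 2\sqrt{d-1}+\eta$ with $\eta \le \epsilon$ into the estimate of the first step gives, for all $S$, $e(S,\bar S)/|E| \le \tfrac12 + (2\sqrt{d-1}+\eta)/(2d) = \tfrac12 + (\sqrt{d-1}+\eta/2)/d \le \tfrac12 + (\sqrt{d-1}+\epsilon)/d$, and $H$ has girth $\Omega(\log n)$ by construction, which is the lemma.

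The only genuinely delicate point — and hence the main obstacle — is the second ingredient: producing girth $\Omega(\log n)$ and the spectral bound $2\sqrt{d-1}+\eta$ in the \emph{same} graph, and for enough orders $n$. This is routine for $d-1$ a prime power but for arbitrary $d$ relies on the near-Ramanujan high-girth constructions; note, however, that the downstream application in Theorem~\ref{lowerboundusingnolargecuts} only needs such a graph along a sequence $n \to \infty$ (the lower bound is obtained by an indistinguishability argument on a single high-girth instance of each large size), so the restricted set of orders furnished by the Lubotzky--Phillips--Sarnak graphs already suffices if one is content not to hit every $n$.
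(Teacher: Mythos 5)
Your first step (the expander-mixing derivation of $e(S,\bar S)/|E|\le \tfrac12+\mu/(2d)$ from $\lambda_{\min}\ge-\mu$) is exactly the cut bound the paper uses; the paper simply cites it (Trevisan) rather than rederiving it, and feeds in Friedman's theorem that a random $d$-regular graph has $|\lambda_{\min}|\le 2\sqrt{d-1}+\varepsilon$ with high probability. So up to that point the two arguments coincide. Where you diverge is in how the girth requirement is met, and this is where your proposal has a gap for general $d$. Your primary route (LPS/Margulis) only covers $d-1$ a prime power and a sparse set of orders $n$, whereas the lemma is stated for every integer $d\ge 3$; your fallback, ``random $d$-regular graphs with the few short cycles excised,'' is the paper's route, but as you state it the excision is supposed to preserve the \emph{spectral} bound $\lambda_{\min}\ge-(2\sqrt{d-1}+\eta)$, and that is not justified: rewiring the $O(\sqrt{n})$ short cycles is a perturbation of the adjacency matrix whose effect on the extreme eigenvalues is not obviously negligible, and one cannot condition a random regular graph on girth $\Omega(\log n)$ either, since that event has probability $e^{-n^{\Omega(1)}}$.

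The paper's proof is structured precisely to sidestep this: it converts the spectral information into the max-cut bound \emph{before} doing any surgery, and then observes that the cut bound---unlike the eigenvalue bound---is trivially robust under the surgery, because rewiring each of the at most $2\sqrt{n}$ short cycles touches only a constant number of edges, so every cut of the modified graph exceeds the corresponding cut of the original graph by at most $O(\sqrt{n})$ edges, which is absorbed into the $\epsilon/d$ slack since $|E|=dn/2$. If you reorder your argument the same way---apply your two-line mixing-lemma computation to the random graph furnished by Friedman, then excise the short cycles and account for the $O(\sqrt{n})$ additional cut edges directly---your proof goes through for every $d\ge 3$ and you no longer need any near-Ramanujan construction with built-in logarithmic girth. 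Your closing remark that the downstream application only needs a sequence $n\to\infty$ is correct and would also rescue the LPS route for prime-power $d-1$, but it does not cover arbitrary $d$ as the lemma demands.
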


The lemma follows from previous work on the second eigenvalue of random $d$-regular graphs.

\begin{proof}[Proof sketch.]
  Friedman~\cite{Friedman2003,Friedman2008} showed that for the smallest eigenvalue $\lambda_n$ of the adjacency matrix of a random $d$-regular graph (in the standard configuration model, for every positive integer $d$) we have that $|\lambda_n| \leq 2\sqrt{d-1}+\varepsilon$, for any $\varepsilon > 0$, with probability $1-O(n^{-c(d)})$ for some $c(d) = \Omega(\sqrt{d})$.

  It is known that the size of maximum cut of a $d$-regular graph is bounded by $(1/2 + |\lambda_n|/(2d))|E|$ (see for example Trevisan~\cite{Trevisan12maxcut}). This implies that most random $d$-regular graphs do not have large cuts. It remains to construct one with large girth.

  We use the standard trick of cutting cycles~(see e.g.\ \cite[Section 2]{Alon2010}). It is known that with probability at least $1/2$ there are at most $2\sqrt{n}$ cycles of length at most $0.5 \log n / \log d$ in a random $d$-regular graph. Therefore there exists such a graph with no large cuts. It is possible to cut each cycle by removing and adding exactly two edges without introducing any new short cycles. In the worst case all of the new edges are cut by some maximum cut of the original graph. We have that the size of the maximum cut in the new graph is bounded by $(1/2 + (2\sqrt{d-1}+\varepsilon)/(2d))dn/2 + 2\sqrt{n}$, which is at most $(1/2 + (\sqrt{d-1}+2\varepsilon)/d$ for large enough $n$.
\end{proof}

Now, we are ready to prove Theorem \ref{lowerboundusingnolargecuts}.
\begin{proof}
   As in the proof of Theorem \ref{lowerboundusingindependentset}, given some graph $G=(V,E)$ in the black/white formalism, denote by $V_W,V_B$ the sets of black and white nodes respectively. Note that the constraints of the black nodes allow us to view the problem as an orientation problem on white nodes. Let $c:E\to \set{0,1}$ be some labeling to the edges that induces a valid solution to $\Pi$. Define
   \[E'=\bigl\{\{v_1,v_2\}\bigm|v_1,v_2\in V_W,\,\exists b\in V_B,\,\{v_1,b\} \in E,\,\{b,v_2\}\in E\bigr\}.\]
   We can abuse notation and view $c$ as function from  $E'\to \set{{\leftarrow},{\rightarrow}}$. Basically, as an orientation function. This is since given some black node $b\in B$ and his white neighbors $v,u$, assume w.l.o.g.\ that $c(u,b)=1,c(v,b)=0$, then we define that $c(u,v)={\rightarrow}$, and vice versa. Here ${\rightarrow}$ means that $(u,v)$ is oriented towards $v$, ${\leftarrow}$ is defined similarly. Note that each white node still has total degree $d$, thus we are now working with a $d$-regular graph. From here on in, we treat $E'$ as our set of edges, and $W$ as our set of nodes (we refer to them using $E,V$, respectively). Now, $\outdegree(v)$ of a node $v\in V_W$ is the number of edges incident to it oriented away from $v$. We define similarly $\indegree(v)$ of a node $v\in V_W$.

   Now note that $\Pi$ is the problem in which every white node has to have outdegree at most $r$, or indegree at most $r$. Given some solution to $\Pi$ denote by $X_0,X_1$ the sets of nodes of outdegree at most $r$ and indegree at most $r$, respectively. Note that $V\backslash X_0=X_1$, thus these two sets define a cut in the graph. Now we want to show that the number of edges leaving $X_0$ into $X_1$ is bounded from below by $(d-2r)|X_0|$. Assume by towards a contradiction that this is not the case. Denote by $E_\ell$ the set of edges incident to a node in $X_0$ that cross the cut $(X_0,X_1)$. Denote by $E_s$ the set of edges in which both endpoints are in $X_0$. Our assumption is that $|E_\ell|<(d-2r)|X_0|$. Note that we have that $d|X_0|=2|E_s|+|E_\ell|$ simply since $d|X_0|$ is the sum of degrees of nodes in $X_0$. From our assumption we can then deduce that $|E_s|>r|X_0|$. Now, we know that for all $v\in X_0$, $\outdegree(v)\leq r$. Thus $\sum_{v\in X_0} \outdegree(v)\leq r|X_0|$. However, this is a contradiction, since by definition of $E_s$ and the fact that each edge is oriented and the fact that $|E_s|>r|X_0|$ we deduce that $\sum_{v\in X_0} \outdegree(v)> r|X_0|$.

   Thus, we know that $|E(X_0,X_1)|\geq (d-2r)|X_0|$. Similarly, we can prove that $|E(X_0,X_1)|\geq (d-2r)|X_1|$. Assume w.l.o.g.\ that $|X_0|\geq |X_1|$. Thus we get that
   \[
   \frac{|E(X_0,X_1)|}{|E|}\geq \frac{|X_0|(d-2r)}{0.5d(|X_0|+|X_1|)}\geq \frac{|X_0|(d-2r)}{d|X_0|}=1-\frac{2r}{d}.
   \]
 Thus, we proved, that in any $d$-regular graph $G=(V,E)$, a solution to $\Pi$ induces a cut $S\subseteq V$ s.t. $\frac{E(S)}{|E|}\geq 1-\frac{2r}{d}$.

 Now, assume towards a contradiction that there exists an algorithm $A$, that runs in $o(\log n)$ rounds, that solves $\Pi$ w.h.p.\ on any $d$-regular tree $G$. Let $G_0$ be the graph promised from Lemma~\ref{Highgirthnolargecut} for a sufficiently small  $\epsilon>0$.

 Now, since $G_0$ has girth $\Omega(\log n)$, from an indistinguishability argument, that also holds for randomized algorithms, it must hold that $A$ also solves $\Pi$ correctly on $G_0$ w.h.p., since for every node $v\in V$, the view of $v$ when executing $A$ is that of a $d$-regular tree. Thus we can deduce that $G_0$ has a cut $S$ induced by $X_0$ and $X_1$ as explained before, of relative size $1-\frac{2r}{d}$. But, by our assumption, it holds that
 \[
 r< \frac{d}{4}-\frac{\sqrt{d-1}}{2}.
 \]
 Thus, we have that
 \[
 1-\frac{2r}{d}> \frac{1}{2}+\frac{\sqrt{d-1}}{d}.
 \]

 Since we treat $d,r$ as constants, this is a contradiction with Lemma \ref{Highgirthnolargecut} for a sufficiently small $\epsilon>0$.

 Thus, we conclude that no such algorithm $A$ exists. Thus any algorithm that solves $\Pi$ correctly w.h.p.\ on $d$-regular trees must have rounds complexity $\Omega (\log n)$. Thus $\Pi$ has randomized complexity $\Omega(\log n)$ whenever  $ r< \frac{d}{4}-\frac{\sqrt{d-1}}{2}$ as required.
 \end{proof}

\section*{Acknowledgments}

This work was supported in part by the Academy of Finland, Grant 314888 (Juho Hirvonen), and by the European Union's Horizon 2020 Research And  Innovation Programme under grant agreement no.\ 755839 (Yuval Efron, Yannic Maus).

\urlstyle{same}
\bibliographystyle{plainnat}
\bibliography{binary-lcls}
\end{document}